\definecolor{note_fontcolor}{rgb}{0.800781, 0.800781, 0.800781}
\definecolor{ForestGreen}{rgb}{0.1333,0.5451,0.1333}
\newtheoremstyle{slplain}
  {.4\baselineskip\@plus.1\baselineskip\@minus.1\baselineskip}
  {.3\baselineskip\@plus.1\baselineskip\@minus.1\baselineskip}
  {\itshape}
  {}
  {\bfseries}
  {.}
  { }
  {}
\theoremstyle{slplain} 
\newtheorem*{definition*}{Definition}
\newtheorem*{theorem*}{Theorem}
\newtheorem{theorem}{Theorem}[section]
\newtheorem{lemma}[theorem]{Lemma}
\newtheorem{proposition}[theorem]{Proposition}
\newtheorem{corollary}[theorem]{Corollary}
\newtheorem{definition}[theorem]{Definition}
\newenvironment{proofof}[1]{\par{\noindent \it Proof of #1.}}{\qed}
\newtheorem*{rep@theorem}{\rep@title}
\newcommand{\newreptheorem}[2]{%
\newenvironment{rep#1}[1]{%
 \def\rep@title{#2 \ref{##1}}%
 \begin{rep@theorem}}%
 {\end{rep@theorem}}}
\theoremstyle{definition}
\theoremstyle{plain} 
\numberwithin{equation}{section}
\newtheoremstyle{etplain}
  {.0\baselineskip\@plus.1\baselineskip\@minus.1\baselineskip}
  {.0\baselineskip\@plus.1\baselineskip\@minus.1\baselineskip}
  {\itshape}
  {}
  {\bfseries}
  {.}
  { }
  {}
\newcommand{\namedref}[2]{\hyperref[#2]{#1~\ref*{#2}}}
\newcommand{\figurerefb}[2]{\hyperref[#1]{Figure~\ref*{#1}#2}}
\newcommand{\equationref}[1]{\hyperref[#1]{(\ref*{#1})}}
\renewcommand{\eqref}{\equationref}
\newcommand{\adrian}[1]{$\ll$\textsf{\color{red} Adrian: #1}$\gg$}
\newcommand{\DEBUG}[1]{}
\newcommand{\argmax}{\operatornamewithlimits{arg\,max}}
\newcommand{\argmin}{\operatornamewithlimits{arg\,min}}
\renewcommand{\emptyset}{\varnothing}
\renewcommand{\ln}{\log}
\newcommand{\hull}{\text{\texttt{conv}}}
\def\abs#1{\left\vert#1  \right\vert}
\def\norm#1{\left\Vert #1 \right\Vert}
\newcommand\sgn[1]{\textnormal{sgn}(#1)}
\renewcommand\R{\mathbb{R}}
\renewcommand\span{\textnormal{\texttt{span}}}
\newcommand\Cyl{\textnormal{\texttt{Cyl}}}
\newcommand\vol{\textnormal{\texttt{vol}}}
\newcommand\dappx{\delta_{\textnormal{approx}}}
\begin{document}

\linespread{0.97}

\title{Multidimensional Binary Search for Contextual Decision-Making}

\author{Ilan Lobel\\
NYU\\
\texttt{ilobel@stern.nyu.edu}
\and
Renato Paes Leme\\
Google Research NY\\
\texttt{renatoppl@google.com}
\and 
Adrian Vladu\thanks{Partially supported by NSF grants CCF-1111109 and CCF-1553428}\\
MIT\\
\texttt{avladu@mit.edu}}

\date{}
\maketitle
\begin{abstract}
We consider a multidimensional search problem that is motivated by questions in
contextual decision-making, such as dynamic pricing and personalized medicine.
Nature selects a state from a $d$-dimensional unit ball
and then generates a sequence of $d$-dimensional directions. We are given
access to the directions, but not access to the state. After receiving a direction, we
have to guess the value of the dot product between the state and the direction. Our goal is to minimize the number of times when our guess is more
than $\epsilon$ away from the true answer. We construct a polynomial time
algorithm that we call Projected Volume achieving regret $O(d\log(d/\epsilon))$, which
is optimal up to a $\log d$ factor. The algorithm combines a volume cutting
strategy with a new geometric technique that we call cylindrification.
\end{abstract}

\thispagestyle{empty}

\setcounter{page}{1}

\section{Introduction} 

Binary search is one of the most basic primitives in algorithm design. The binary search
problem consists in trying to guess an unknown real number $\theta \in [0,1]$
given access to an oracle that replies for every guess $x_t$ if $x_t \leq \theta$
or $x_t > \theta$. After $\log(1/\epsilon)$ guesses, the binary search algorithm is able to
estimate $\theta$ within $\epsilon$ precision.

We study a multidimensional and online version of the binary search
problem. The unknown quantity is  a vector $\theta \in \R^d$ with
$\norm{\theta}_2 \leq 1$ and in each iteration an adversary selects a direction
$u_t \in \R^d$ such that $\norm{u_t}_2 = 1$. At each iteration,  the algorithm is asked to guess the
value of the dot product $\theta^\top u_t$. After the algorithm makes a guess $x_t$,
it is revealed to the algorithm whether $x_t \leq \theta^\top u_t$ or $x_t > \theta^\top
u_t$. The goal of the algorithm designer is to create an algorithm that makes as few mistakes as possible, where a
mistake corresponds to a guess with an error larger than $\epsilon$.

This problem has recently come up as a key building block in the design of online algorithms for contextual decision-making. In contextual decision-making, the direction $u_t$ corresponds to a context relevant for the period $t$ decision and $\theta^\top u_t$ corresponds to the optimal period $t$ decision. Contextual decision-making is increasingly important in an economy where decisions are ever more customized and personalized. We now mention two applications:

\vspace{.2cm}
\noindent {\bf Personalized Medicine \cite{bayati2016}}:
 Determining the right dosage of a  drug
for a given patient is a well-studied problem in the medical literature. For example, for
certain anticoagulant drugs, the appropriate dosage can vary by up to a factor of 10
from individual to individual. Increasingly, doctors are using not only demographic information to decide dosage levels, but are also using higher-dimensional clinical and genetic data.  Bastani and Bayati \cite{bayati2016}
propose a mathematical formulation for this problem and tackle it using tools from statistical learning and contextual bandits. At its core, the problem
studied is a multidimensional binary search problem: each patient is associated with a vector of features $u_t$ which describes his or her demographic, genetic and
clinical data. The algorithm outputs a recommended  dosage $x_t$  and then observes only whether the dosage was below or above the ideal level. If the ideal
dosage is a linear function of the features with unknown coefficients $\theta$ then what the algorithm observes is whether $\theta^\top u_t \geq x_t$ or $\theta^\top u_t < x_t$.

\vspace{.2cm}
\noindent {\bf Feature-based Pricing
\cite{amin2014repeated,CohenLL16,qiang2016dynamic,javanmard2016dynamic}}: Consider a firm that 
sells a very large number of differentiated products. Instead of attempting to learn the market value of each product independently, it might be more sensible for the firm to build a pricing model  based on features of each product. In internet
advertising, for example, each impression sold by an ad exchange is unique in its
combination of demographic and cookie data. While it is hopeless for the
exchange to learn how to price each combination in isolation, it is reasonable for the firm
to learn a model of the market value of its different products. In this setting, each product $t$ is described by a vector $u_t$ of features. Assume  the
market value is a linear function with unknown coefficients $\theta$. If the
firm sets a price $x_t$ for this item it will learn that $\theta^\top u_t \geq
x_t$ if the product is sold and that $\theta^\top u_t < x_t$ otherwise. The goal in
this setting is not minimizing guesses that are $\epsilon$ far from the
$\theta^\top u_t$ as in the personalized medicine setting, but to maximize
revenue. Revenue, however, is a very asymmetric objective: if the price is above the
market value we lose the sale and incur a large loss, while underpricing still leads to a sale where the loss in revenue is the difference $\theta^\top u_t - x_t$.
Nevertheless, Cohen et al \cite{CohenLL16} showed that  an algorithm for the
multidimensional binary search problem can be converted into an algorithm for
the feature-based pricing problem in a black-box manner.

\vspace{.2cm}
 The first approach to this problem was due to Amin, Rostamizadeh and Syed
\cite{amin2014repeated} in the context of the pricing problem and
is based on stochastic gradient descent. The stochastic gradient approach
requires the features $u_t$ to be drawn from an unknown iid distribution, so
that each feature can be used to obtain an unbiased estimator of a certain
function. Subsequent approaches by Bastani and Bayati \cite{bayati2016} and
Qiang and Bayati \cite{qiang2016dynamic} use techniques from statistical
learning such as greedy least squares or LASSO. Javanmard and Nazerzadeh
\cite{javanmard2016dynamic} apply a regularized maximum likelihood estimation
approach and obtain an improved regret guarantee. One could also use a general purpose contextual bandit algorithm (such as Agarwal et al. \cite{agarwal2014taming}) to tackle the iid version of the multidimensional binary search problem, but such an algorithm would have regret that is polynomial in $1/\epsilon$ instead of the logarithmic regret obtained by the specialized algorithms.

All the previously discussed work rely on assuming that the directions $u_t$ are sampled iid. The only approach that makes no assumptions about the directions
$u_t$ is by Cohen et al \cite{CohenLL16}. They do so by tackling directly the
multidimensional binary search problem with adversarial feature vectors
$u_t$ and describe an algorithm with a bound of $O(d^2 \log(d/\epsilon))$ on
the number of mistakes. To obtain that bound, the paper builds on the ellipsoid method from convex optimization. The algorithm always maintains a knowledge set in the shape of an ellipsoid and then chooses an $x_t$ that cuts the ellipsoid through its center whenever there is significant uncertainty on the value of  $\theta^\top u_t$. The algorithm then replaces the resulting half-ellipsoid with the smallest ellipsoid that encloses it, and proceeds to the next iteration.

\vspace{.2cm}
\noindent \textbf{Our Contributions:} Our paper significantly improves the regret bound on the multidimensional binary search problem, providing nearly matching upper and lower bounds for this problem. In Proposition \ref{prop:lower-bound}, we construct a lower bound of $\Omega(d\ln(1/\epsilon\sqrt{d}))$ via a reduction to $d$ one-dimensional problems, which is significantly lower than the $O(d^2 \log(d/\epsilon))$ regret bound from Cohen et al \cite{CohenLL16}. 

Under Cohen et al's ellipsoid-based algorithm, a fraction $1-e^{-1/2d}$ of the volume is removed at each iteration. This fraction is significantly less than half because the step of replacing a half-ellipsoid with its own enclosing ellipsoid is expensive in the sense that it adds back to the knowledge set most of the volume removed in the latest cut. Thus, any ellipsoid-based method requires $d$ steps in order to remove a constant fraction of the volume. Therefore, an algorithm that removes a constant fraction of the volume at each iteration has the potential to perform significantly better than an ellipsoid-based method and, thus, might close the gap between the upper and lower bounds. We can thus conjecture that an algorithm that selects $x_t$ in each iteration so as to create two potential knowledge sets of approximately equal volume would perform nearly optimally.

Cutting a convex set into two sets of approximately equal volume is not a difficult task. In a classical result, Gr\"unbaum showed that cutting a set through its centroid generates two sets, each with at least a $1/e$ fraction of the original volume (see Theorem \ref{thm:volume_grunbaum}). Computing a centroid is a $\#$P-hard problem, but finding an approximate value of the centroid is sufficient for our purposes, and an approximation can be computed in polynomial time. An idea similar to this one was proposed by Bertsimas and Vempala~\cite{BertsimasV04}, in a paper where they proposed a method for solving linear programs via an approximate Gr\"unbaum theorem.

However, removing constant fractions of the volume at each iteration is not sufficient for our purposes. Even if the knowledge set has tiny volume, we might not be able to guess the value of $\theta^\top u$ for some directions $u$  with $\epsilon$ accuracy. To solve our problem, we need to ensure that the knowledge set becomes small along all possible directions. An algorithm that does not keep track of the width of the knowledge set along different directions might not perform well. Perhaps surprisingly, our conjecture that an algorithm that cuts through the centroid at each iteration would have near-optimal regret is false. We show in  Theorem \ref{thm:counter-example} that such a centroid algorithm generates a worst-case regret of $\Omega(d^2\ln(1/\epsilon\sqrt{d}))$. This occurs precisely because the centroid algorithm does not keep track of the different widths of the knowledge set. In an ellipsoid-based algorithm, keeping tracks of the widths of a knowledge set is a relatively easy task since they correspond to the eigenvalues of the matrix that represents the ellipsoid. Keeping track of widths is a more difficult task in an algorithm that does not rely on ellipsoids. This brings us to our key algorithmic idea: cylindrification. 

Cylindrification is the technique we introduce of maintaining a set of directions along which the width of the knowledge set is small and expanding the set in those directions, thus converting the set into a high-dimensional cylinder. A cylindrified set when projected onto its subspace of small directions becomes a hypercube. When projected onto its subspace of large directions, a cylindrified set looks exactly like the original set's projection onto the same subspace. Cylindrification reduces regret by significantly increasing the usefulness of each cut.

Our main algorithm, the Projected Volume algorithm, maintains two objects at all times. It maintains a knowledge set (as the previous algorithms did), but it also maintains a set of orthogonal directions along which the knowledge set is small. At each iteration, it cylindrifies the knowledge set and then computes an approximate value of the centroid of the cylindrified set. It then chooses $x_t$ in order to cut through this approximate centroid. In Theorem \ref{thm:centroid}, the main result of our paper, we prove that this algorithm has a near-optimal regret of $O(d\log(d/\epsilon))$.

The analysis of our algorithm relies on a series of results we prove about convex bodies. We first prove a directional version of  Gr\"unbaum's theorem (Theorem \ref{thm:directional_grunbaum}), which states that the width of the two sets along any direction $u$ after a cut through the centroid are at least $1/(d+1)$ of the width along $u$ of the original set. We also prove that Gr\"unbaum's theorem is robust to approximations (Lemma \ref{lemma:approx_grunbaum}) and projections (Lemma \ref{lemma:projected_grunbaum}). We also prove that the process of cylindrification does not add too much volume to the set (Lemma \ref{lemma:cylindrification}). We then use these geometric results to prove that the volume of the knowledge set projected onto its large directions serves as a potential function and show that it decreases exponentially fast, proving our main result.

\vspace{.2cm}
\noindent \textbf{Relation to Standard Online Learning Problems:}
Our problem bears resemblance
with the classic problem in online classification of learning halfspaces with a
margin, which can be solved by the grandfather of all online learning algorithms, the
Perceptron. See~\cite{rosenblatt1958perceptron, minsky1969perceptrons} 
for the original papers and~\cite{littlestone1988learning} for the closely
related Winnow algorithm. Elegant and modern presentations of those can be found
in~\cite{bubeck2011introduction,
shalev2012online} and~\cite{hazan2016introduction}. In this problem, there is an unknown
$\theta$ and, in each iteration, we are given a vector $u_t$ and are asked to guess
the sign of the dot product $\sgn{\theta^\top u_t}$. If we are guaranteed that
all feature vectors $u_t$ are far enough from the separating hyperplane (i.e.,
there is a margin) we can bound the total number of mistakes the algorithm makes.

Both multidimensional binary search and learning halfspaces with a margin have a
similar feedback: which side we are from an unknown hyperplane each point is.
This begs the question of whether the techniques
developed for learning halfspace with a margin or similar online classification
problems can be applied to multidimensional
binary search. There is a subtle difference between the two problems: in
multidimensional binary search we don't observe if a mistake occurred or not.
This feature is crucial for the pricing application that motivates the problem:
in the pricing application, we do not get feedback of whether the price was barely
below the buyer's valuation or much below. We argue in Appendix
\ref{sec:comparison} that this subtle difference poses a significant obstacle to applying
the techniques from one problem to the other.


\vspace{-.2cm}
\section{The Model}

We consider an infinite horizon game between a player and nature. The game
begins with nature selecting a state $\theta$ from the $d$-dimensional unit ball
centered at the origin. We label this ball $K_0$, i.e., $K_0 = \{\theta \in
\mathbb R^d:~ ||\theta||_2 \leq 1\}$. The player knows $K_0$, but does not know
the value of $\theta$. \footnote{Although we assume for
simplicity that $K_0$ is a ball throughout our paper, we could have let $K_0$ be an arbitrary convex body contained inside the unit ball.}

At every period $t=0,1,2,...$, nature selects a vector $u_t$ from the
$d$-dimensional unit sphere, i.e., $U = \{u \in \mathbb R^d:~ ||u||_2 = 1\}$,
that we refer to as the period $t$ direction.  At every period, after nature
reveals $u_t$, the player must choose an action $x_t \in \mathbb R$. The
player's goal is to choose a value of $x_t$ that is close to $u_t^\top \theta$.
Formally, we try to minimize the number of mistakes we make, where a mistake
occurs whenever  $|x_t-u_t^\top \theta| > \epsilon$ for a given $\epsilon > 0$.
We incur regret in period $t$ whenever we make a mistake: \[
  r_t = \begin{cases}
0 & \text{ if } \quad |x_t-u_t^\top \theta| \leq \epsilon~;\\
1 & \text{ if } \quad|x_t-u_t^\top \theta| > \epsilon~.\\
\end{cases}\]

At the end of each period, nature reports to the player whether $x_t \leq u_t^\top \theta$ or $x_t > u_t^\top \theta$. We note that we do not learn the regret $r_t$ in each period, only whether $x_t - u_t^\top\theta$ is positive. Our goal is to find a policy that minimizes our total regret, or equivalently, the total number of mistakes we make over an infinite time horizon, i.e., $R = \sum_{t=1}^\infty r_t$.




\vspace{-.2cm}
\section{Lower Bound}\label{sec:lower_prior}

  We now construct a lower bound on the regret incurred by our algorithm. The
  lower bound is obtained via a straightforward reduction to $d$ one-dimensional  problems.
	
\begin{proposition}\label{prop:lower-bound} Any algorithm will generate regret
of at least $\Omega(d\ln(1/\epsilon\sqrt{d}))$.  \end{proposition}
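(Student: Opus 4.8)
The plan is to reduce the $d$-dimensional problem to $d$ independent one-dimensional binary search instances and then lower-bound the regret of each. First I would set up the hard instance: let nature's state be $\theta = (\theta_1, \dots, \theta_d)$ where each coordinate $\theta_i$ is chosen (adversarially or from a suitable distribution) in an interval of length $\Theta(1/\sqrt d)$ so that $\|\theta\|_2 \le 1$ is automatically satisfied — e.g. each $|\theta_i| \le 1/\sqrt d$. Nature then presents the standard basis directions $e_1, \dots, e_d$, cycling through them, so that on a period where $u_t = e_i$ the player is effectively being asked to guess $\theta_i$ up to additive error $\epsilon$ with only sign feedback. Since the player's guesses and feedback on direction $e_i$ depend on $\theta$ only through $\theta_i$ (the feedback $\mathrm{sgn}(x_t - \theta_i)$ reveals nothing about the other coordinates), the $d$ coordinate problems are genuinely independent, and the total regret is the sum of the regrets on the $d$ one-dimensional subproblems.

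Next I would prove the one-dimensional lower bound: any algorithm guessing an unknown number in an interval of length $L = \Theta(1/\sqrt d)$ with only threshold feedback ($x_t \le \theta_i$ or $x_t > \theta_i$) and a tolerance of $\epsilon$ must make $\Omega(\log(L/\epsilon)) = \Omega(\log(1/(\epsilon\sqrt d)))$ mistakes in the worst case. The cleanest way is an adversary argument: maintain a ``live'' subinterval of current candidates for $\theta_i$, initially of length $L$; when the player guesses $x_t$, if $x_t$ lies in the middle portion of the live interval (so that both ``$\theta_i < x_t - \epsilon$'' and ``$\theta_i > x_t + \epsilon$'' are still consistent with a sub-chunk of the interval of at least, say, a constant fraction of the current length), the adversary charges a mistake and keeps whichever side is larger; otherwise the guess is near an endpoint and the adversary can shrink the interval to exclude the $2\epsilon$-ball around $x_t$ at the cost of losing at most a $2\epsilon/L_{\mathrm{cur}}$ fraction, charging no mistake. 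Since each mistake-charging step at most halves (up to a constant) the live interval length and the non-mistake steps can only remove small pieces, the interval stays longer than $2\epsilon$ — hence $\theta_i$ not yet pinned down — for $\Omega(\log(L/\epsilon))$ mistakes. Summing over the $d$ coordinates gives total regret $\Omega(d\log(1/(\epsilon\sqrt d)))$.

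Alternatively, and perhaps more robustly against randomized algorithms, I would run a Yao-style / information-theoretic argument: put $\theta_i$ uniform on a $2\epsilon$-net of the interval of length $L$, which has $\Omega(L/\epsilon)$ points; each round of feedback on coordinate $i$ conveys at most one bit, so to drive the posterior entropy low enough that the player can be within $\epsilon$ of $\theta_i$ requires $\Omega(\log(L/\epsilon))$ informative rounds in expectation, and a round in which the guess $x_t$ is within $\epsilon$ of $\theta_i$ with probability bounded away from $1$ over the remaining posterior is precisely a round that incurs expected regret bounded away from $0$. Either way the per-coordinate bound is $\Omega(\log(1/(\epsilon\sqrt d)))$.

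I expect the main obstacle to be making the reduction airtight rather than the one-dimensional bound itself. Two points need care: first, confirming that feedback across the $d$ directions truly decouples — this is immediate here because the instance only ever uses axis directions $e_i$, so I should state the hard instance to use exactly those directions and nothing else; second, ensuring the $\sqrt d$ scaling is handled correctly, i.e. the interval length must be $\Theta(1/\sqrt d)$ (not $\Theta(1)$) to respect $\|\theta\|_2 \le 1$ when all coordinates are simultaneously nontrivial, which is exactly where the $\log(1/(\epsilon\sqrt d))$ (as opposed to $\log(1/\epsilon)$) comes from. A minor subtlety for the adversary argument is the regime $\epsilon \gtrsim 1/\sqrt d$, where the bound is vacuous and nothing needs to be proved; I would note that we assume $\epsilon = O(1/\sqrt d)$ so the logarithm is positive.
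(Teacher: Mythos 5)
Your proposal is correct and follows essentially the same route as the paper: nature picks $\theta$ from a cube of side length $\Theta(1/\sqrt d)$ inside the unit ball, presents axis directions $e_i$ in round-robin, and the problem decouples into $d$ independent one-dimensional binary searches on intervals of length $\Theta(1/\sqrt d)$, each contributing $\Omega(\log(1/\epsilon\sqrt d))$ mistakes. The extra details you give (explicit adversary/Yao arguments for the 1D bound, and the observation that one needs $\epsilon = O(1/\sqrt d)$ for the bound to be non-vacuous) are sensible elaborations but not a different approach.
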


\begin{proof} Assume nature selects $\theta$ from within a $d$-dimensional cube
with sides of length $1/\sqrt{d}$. This is a valid choice since the unit ball
$K_0$ contains such a cube. Let $e_i$ represent the vector with value 1 in
coordinate $i \in \{1,...,d\}$ and value 0 in all other coordinates. Suppose
nature selects directions that correspond to the vectors $e_i$ in round-robin
fashion, i.e., $u_t = e_{(t \bmod d) + 1}$. Because of the symmetry of the cube from
which $\theta$ is selected, and the orthogonality of the directions $u_t$, this
problem is equivalent to $d$ independent binary searches over one-dimensional
intervals with length $l= 1/\sqrt{d}$. Our result follows since a one-dimensional binary search
over an interval with length $l$ up to precision $\epsilon$ incurs $\Omega(\ln(l/\epsilon))$ mistakes.  \end{proof}

We note that the lower bound above applies even for the iid version of the multidimensional binary search problem, as nature could be given a distribution over $d$ orthogonal direction vectors. Making the problem offline would also not lower the regret, as having advance knowledge of the direction vectors is useless in the instance above.


\vspace{-.2cm}
\section{The Projected Volume Algorithm}\label{sec:centroid-algorithm}

In this section, we describe the central idea for obtaining near-optimal regret.
In the standard single-dimensional binary search algorithm, the error of the
algorithm at any given iteration is proportional to the length of the interval. The length of the interval thus provides a clear measure in which to make progress. In the multi-dimensional case,
there is no global measure of error, but only a measure of error for each direction.
To make this precise, consider a knowledge set $K \subseteq \R^d$ corresponding
to the set of values of $\theta$ that are compatible with what the algorithm
has observed. Given a direction $u$ (i.e., $u$ is a unit vector), the error incurred by the algorithm to predict
the dot product $u^\top \theta$ corresponds to the directional width of $K$ along $u$:
\begin{equation}\label{eq:width} w(K,u) = \max_{x,y \in K} u^\top
(x-y)~.\end{equation}
which is a measure that is particular for direction $u$. Since the algorithm
does not know which directions it faces in future iterations, it must
decrease some measure that implies progress in a more global sense. A natural
such measure is the volume of $K$. However, measuring volume alone might be
misleading. Consider that case where our current knowledge set is the thin
rectangle represented in Figure \ref{fig:cutting_thin_direction}.

\begin{figure}[h]
\centering
  \begin{tikzpicture}
    \draw[thick] (0,0)--(5,0)--(5,.2)--(0,.2)--cycle;
    \draw[red] (-.3,.1)--(5.3,.1);
    \draw[blue] (2.5,-.1)--(2.5,.3);
    \draw (5.4,0)--(5.6,0)--(5.5,0)--(5.5,.2)--(5.6,.2)--(5.4,.2);
    \node[right] at (5.5,.1) {$\epsilon$};
  \end{tikzpicture}
  \caption{Decreasing volume might not lead to progress with respect to width.
  Both horizontal and vertical cuts remove half the volume, but only the vertical cut makes
  progress towards our goal.}\label{fig:cutting_thin_direction}
\end{figure}
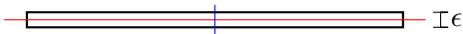

Cutting the knowledge set along either the red horizontal or the blue vertical
line and keeping one of the sides would decrease the volume by half. From the
perspective of our problem, however, the red cut is useless since we already
have a good estimate of the width along that direction. Meanwhile, the blue cut is very useful since it decreases the width along a direction that has still a lot of uncertainty to be resolved.

Motivated by this observation we keep track of the volume of the
knowledge set projected onto a subspace for which there is still a non-trivial
amount of uncertainty. Precisely, our algorithm will be parametrized by a value $\delta
> 0$ which defines the notion of `small'. We maintain two objects:

\begin{enumerate}
  \item the \emph{knowledge set} $K_t \subseteq \R^d$ which will consist of all
    vectors $\theta$ which are consistent with the observations of the algorithm
    so far.
  \item a set of orthonormal vectors $S_t = \{ s_1, \hdots, s_{n_t} \}$ spanning
    a subspace $U_t$ of dimensionality $n_t$ such that the knowledge set has small width
    along any of those directions and has large width along any direction
    perpendicular to them. Formally:
    $$U_t = \span(S_t) \text{ s.t. }w(K_t, s) \leq \delta,~
    \forall s \in S_t
    \text{  and  } w(K_t, u) > \delta, \text{ for all $u$ perpendicular to }
    U_t\,{,}$$
    where $\span(\cdot)$ denotes the span of a set of vectors.  It will be useful to refer to
    $L_t = \{u\vert u^\top s = 0, ~\forall s \in S_t\}$ as the subspace  of large directions.
\end{enumerate}

Our plan will be to ignore a dimension once it becomes small enough and focus on bounding the volume of the projection of the knowledge set $K_t$ onto the subspace of large directions $L_t$. To formalize this notion, let us define the notion of \emph{cylindrification} of a set with respect to orthonormal vectors.

\begin{definition}[Cylindrification] Given a set of orthonormal vectors $S =
  \{s_1, \hdots, s_n\}$, let $L = \{ u\vert u^\top s = 0; \forall s \in S \}$ be the
  subspace orthogonal to $\span(S)$ and $\Pi_L(K)$ be the projection
  \footnote{Formally if $\{\ell_1, \hdots, \ell_k\}$
  is an orthonormal basis of $L$, then $\pi_L(x) = \sum_{i=1}^k \ell_i
  \ell_i^\top x$ and $\Pi_L(K) = \{\pi_L(x)\vert x \in K\}$.  } of $K$ onto $L$.
  Given a convex set $K \subseteq \R^d$ and a
  set of orthonormal vectors $S = \{s_1, \hdots, s_n\}$ we define:
  $$\Cyl(K,S) := \left\{ x + \sum_{i=1}^n y_i s_i\bigg\vert \quad x \in \Pi_L(K) \text{
    and } \min_{\theta \in
  K} \theta^\top s_i \leq y_i \leq \max_{\theta \in K} \theta^\top s_i \right\}~.$$
  Or more concisely, but less intuitively:
  $$\Cyl(K,S) = \Pi_L(K) + \Pi_{\span(s_1)}(K) + \hdots + \Pi_{\span(s_n)}(K) $$
  where the sums applied to sets are Minkowski sums.~\footnote{By Minkowski sum between two sets, we mean $A+B=\{a+b:~ a\in A, b\in B\}$.}
\end{definition}

Informally, the cylindrification operation is designed to create a set with the same projection onto the subspace of large directions, i.e., $\Pi_{L_t}\Cyl(K_t,S_t) = \Pi_{L_t}(K_t)$, while regularizing the projection of the set onto the subspace of small directions: $\Pi_{S_t}\Cyl(K_t,S_t)$ is a box.

We are now ready to present our algorithm, focusing on its geometric aspects and ignoring (for now) the question on how to efficiently compute each step.
The algorithm is parametrized by a constant $\delta > 0$.
It starts with $K_0$ being the ball of radius $1$ and with
$S_0 = \emptyset$. In each iteration the algorithm receives a unit
vector $u_t$ from nature. The algorithm then predicts $x_t$ using the \textit{centroid} $z_t$
of $\Cyl(K_t, S_t)$, by setting $x_t = u_t^\top z_t$. The definition of the centroid is given below:
\begin{definition} The centroid $z$ of a convex set $K$ is defined as
$$z = \frac{1}{\vol(K)} \int_{x\in K} x\, dx\,{,}$$ where $\vol(\cdot)$ denotes the volume of a set.
\end{definition}
Upon learning if the estimate was too small or too large, we update $K_t$ to
$K_{t+1} = K_t \cap \{\theta\vert \theta^\top u_t \leq x_t\}$ or $K_{t+1} = K_t \cap
\{\theta\vert \theta^\top u_t \geq x_t\}$. The next step in our algorithm is
to verify if there exists any direction $v$ orthogonal to $S_t$ such that
$w(K_{t+1}, v) \leq \delta$.
As long as such directions exists, we add them to
$S_t$ and call the resulting set $S_{t+1}$.

Our main result in this paper is:

\begin{theorem}\label{thm:centroid} The Projected Volume algorithm has regret
$O(d\ln(d/\epsilon))$ for the multi-dimensional binary search problem.
\end{theorem}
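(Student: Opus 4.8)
The plan is to track the quantity $\Phi_t = \vol(\Pi_{L_t}(K_t))$, the volume of the knowledge set projected onto the current subspace of large directions, and show that it serves as a potential function that drops by a constant factor on every iteration where a mistake could occur, while never increasing on other iterations. The key point is that a mistake on direction $u_t$ means $w(K_t, u_t) > 2\epsilon$; since the centroid cut is taken in $\Cyl(K_t, S_t)$ (whose projection onto $L_t$ equals $\Pi_{L_t}(K_t)$), and since directions with width $\le\delta$ have already been absorbed into $S_t$, a mistake forces $u_t$ to have a nontrivial component in $L_t$, so cutting through the centroid of the cylindrified set meaningfully cuts $\Pi_{L_t}(K_t)$. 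First I would use the approximate and projected versions of Grünbaum's theorem (Lemmas~\ref{lemma:approx_grunbaum} and~\ref{lemma:projected_grunbaum}) to argue that each such cut removes a constant fraction of $\Phi_t$; I would also invoke Lemma~\ref{lemma:cylindrification} to control the fact that we cut the cylindrified body rather than $K_t$ itself, ensuring the volume of $\Pi_{L_t}$ of the retained half is at most $(1 - c)\Phi_t$ for some absolute constant $c < 1$.

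Next I would handle the dimension-shrinking steps. When a new direction $v$ becomes small (width $\le \delta$) and is moved from $L_t$ into $S_{t+1}$, the potential changes from a $|L_t|$-dimensional volume to a $|L_{t+1}|$-dimensional one. Here I would use the directional Grünbaum theorem (Theorem~\ref{thm:directional_grunbaum}) to lower-bound how small a width can become in one cut: a width cannot drop below a $1/(d+1)$ factor of its previous value, so when $v$ is declared small, its width along $v$ is at least $\delta/(d+1)$. This means that projecting out the coordinate $v$ divides the old potential by something between $\delta/(d+1)$ and $\delta$ — i.e., $\Phi_{t+1} \le \Phi_t / (\delta/(d+1))$, a bounded multiplicative jump upward that happens at most $d$ times total (once per dimension ever removed). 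I would set $\delta = \Theta(\epsilon/\sqrt d)$ or similar so that these jumps are affordable.

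The counting argument then runs as follows. Let $M$ be the total number of mistakes. Each mistake multiplies $\Phi$ by at most $(1-c)$. The at-most-$d$ dimension-removal steps each multiply $\Phi$ by at most $(d+1)/\delta$. Non-mistake, non-removal iterations do not increase $\Phi$ (a cut never enlarges the body, and projections are monotone). Starting from $\Phi_0 \le \vol(B_2^d) = O(1)^d$ and using that once all directions are small the algorithm stops making mistakes — which, because each small direction has width $\le \delta \le 2\epsilon$, guarantees $|x_t - u_t^\top\theta| \le \epsilon$ — we get a lower bound on the terminal potential of roughly $(\delta/(d+1))^d$ (a $d$-dimensional or lower-dimensional box with each side $\ge \delta/(d+1)$; when $L$ is empty the potential is $1$ by convention, which is even larger). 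Chaining the inequalities,
\[
\left(\frac{\delta}{d+1}\right)^{d} \;\le\; \Phi_{\text{final}} \;\le\; \Phi_0 \cdot (1-c)^{M} \cdot \left(\frac{d+1}{\delta}\right)^{d},
\]
and solving for $M$ gives $M = O\!\left(d \log\frac{d+1}{\delta}\right) = O(d \log(d/\epsilon))$ after substituting $\delta = \Theta(\epsilon/\sqrt d)$.

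The main obstacle I anticipate is the interface between the cut and the cylindrification: we take the centroid of $\Cyl(K_t, S_t)$, not of $K_t$ or of $\Pi_{L_t}(K_t)$, so I must carefully chain (i) $\Pi_{L_t}\Cyl(K_t,S_t) = \Pi_{L_t}(K_t)$, (ii) a projected Grünbaum bound relating the centroid of the cylinder to a good cut of its $L_t$-projection, and (iii) the bound from Lemma~\ref{lemma:cylindrification} ensuring the extra cylinder volume does not dilute the fraction removed — all while the cut plane $\{\theta^\top u_t = x_t\}$ is generic rather than axis-aligned with the $L_t/S_t$ split. A secondary subtlety is bookkeeping the "at most $d$ removals" claim and making sure a direction, once small, stays small (so $S_t$ only grows), which follows from the directional Grünbaum lower bound on widths; and verifying the terminal stopping condition rigorously, i.e. that $L_t = \emptyset$ (equivalently all of $\R^d$ is spanned by small directions) really does imply zero further regret given $\delta \le 2\epsilon$.
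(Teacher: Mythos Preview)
Your approach is exactly the paper's: same potential $\Phi_t=\vol(\Pi_{L_t}K_t)$, Projected Gr\"unbaum (Lemma~\ref{lemma:projected_grunbaum}) for the constant-fraction drop on mistake steps, Directional Gr\"unbaum (Theorem~\ref{thm:directional_grunbaum}) plus Cylindrification (Lemma~\ref{lemma:cylindrification}) to bound the at-most-$d$ blow-ups when a dimension is removed, and a final volume chain. Two quantitative slips need fixing, though, and one lemma is misattributed.

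First, your choice $\delta=\Theta(\epsilon/\sqrt d)$ is too large: the hypothesis of Lemma~\ref{lemma:projected_grunbaum} requires $\delta\le\epsilon^2/(16d(d+1)^2)$. The quadratic dependence on $\epsilon$ is what makes the ``interface'' obstacle you flag actually go through---if the small directions are only $O(\epsilon)$ wide, a tilted cut can lose almost nothing in the $L_t$-projection. With the correct $\delta$ the final bound is still $O(d\log(d/\delta))=O(d\log(d/\epsilon))$, so nothing else changes. Second, your stopping argument ``$\delta\le2\epsilon$ suffices'' is wrong: once $S_t$ is a full orthonormal basis, an arbitrary unit $u$ satisfies $u^\top(x-y)=\sum_s (u^\top s)(s^\top(x-y))\le d\delta$ (or $\sqrt d\,\delta$ via Cauchy--Schwarz), so you need $d\delta\le\epsilon$, not $\delta\le2\epsilon$. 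The paper's $\delta$ already handles this. Third, Lemma~\ref{lemma:cylindrification} is not used to ``control the fact that we cut the cylindrified body''---that is entirely Lemma~\ref{lemma:projected_grunbaum}'s job. Cylindrification is used only at dimension-removal time to bound $\vol(\Pi_{L'}K_{t+1})$ in terms of $\vol(\Pi_{L_t}K_{t+1})$; since after the cut every direction in $L_t$ still has width at least $\delta/(d+1)$ by Directional Gr\"unbaum, Lemma~\ref{lemma:cylindrification} gives the blow-up factor $d(d+1)^2/\delta$ per removed direction.
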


Our strategy for proving Theorem~\ref{thm:centroid} is to use the volume of the
projection of $K_t$ onto the subspace of large directions as our potential
function:
$$\Phi_t := \vol (\Pi_{L_t}K_t)\,{.}$$
In each iteration, either the set of small directions remains the same or it grows. We first consider the case where the set of small directions remains the same, i.e., $S_{t+1} = S_t$. In this case, we want to argue that the volume of the projection of $K_t$ onto $L_t$ decreases in that iteration. If $S_t = \emptyset$,
then $\Pi_{L_t}K_t = K_t$ and the volume decreases by at least a constant factor. This follows from Gr\"{u}nbaum's Theorem, which we review in the next section. However, if
$S_t \neq \emptyset$, then a decrease in the volume of $K_t$ does not necessarily guarantee a decrease in the volume of the projection.
For example, consider the example in Figure \ref{fig:center_projection}  
where we cut through the center of a rectangular $K_t$. Even though the
volume of $K_{t+1}$ is half the volume of $K_t$, the
volume of the projection onto the $x$-axis doesn't decrease as much. We will
argue that the decrease in volume due to Gr\"{u}nbaum's Theorem extends to
projections (with a small loss) if the width along the cut direction is much
larger than the width along the directions orthogonal to the projection subspace.

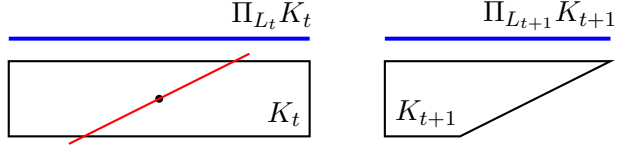
\begin{figure}[h]
  \centering
  \begin{tikzpicture}
    \draw[thick] (0,0)--(4,0)--(4,1)--(0,1)--cycle;
    \fill (2,.5) circle [radius=1.5pt];
    \draw[red, thick] (-1.2+2,-.6+.5)--(1.2+2,.6+.5);
    \draw[blue, line width=1.5pt] (0,1.3)--(4,1.3);
    \node[right] at (3.3,.3) {$K_t$};
    \node at (3.5,1.6) {$\Pi_{L_t} K_t$};

    \begin{scope}[xshift=5cm]
      \draw[thick] (0,0)--(1,0)--(3,1)--(0,1)--cycle;
      \draw[blue, line width=1.5pt] (0,1.3)--(3,1.3);
      \node[right] at (0,.3) {$K_{t+1}$};
      \node at (2.2,1.6) {$\Pi_{L_{t+1}} K_{t+1}$};
    \end{scope}
  \end{tikzpicture}
  \caption{The volume of the projection $\Pi_{L_t}K_t$ decreases slower than the
  volume of $K_t$.}\label{fig:center_projection}
\end{figure}

We now consider the case where we add a new direction to $S_t$. In this case, we will measure the volume in the next iteration as projected onto a subspace of smaller dimension than in period $t$. In general, the volume of a projection can be arbitrarily greater than then volume of the original set. We will use, however, the fact that the $K_t$ is ``large'' along every direction of $L_t$ to argue that adding a vector to $S_t$ can blow up the potential by at most a factor of $O(d^2 / \delta)$. While this is a non-trivial volume increase, this can happen at most $d$ times, leading to a volume increase by a factor of at most $O(d^2/\delta)^d$. We can use this fact to obtain that the algorithm will take at most $O(d \log (d/\delta))$ steps before $L_t$ becomes zero-dimensional.

An inquisitive reader might wonder if we truly need cylindrification to obtain near-optimal regret. We could consider an  algorithm that simply chooses $x_t = u_t^\top z_t$ at each iteration, where $z_t$ is the centroid of $K_t$. We show in Theorem \ref{thm:counter-example} that such an algorithm incurs regret of $\Omega(d^2 \log (1/\epsilon\sqrt{d}))$. Without cylindrification, nature might select directions such that most of the volume reduction corresponds to widths in directions along which the set is already small. Cutting at the centroid of the cylindrified set, instead of the centroid of the original set, is thus crucial to ensure we make progress in the large directions.

The Projected Volume algorithm as discussed above does not actually run in polynomial time since computing the centroid of a convex set is a \#P-hard problem. Fortunately, we can turn Projected Volume into a polynomial time algorithm with a few tweaks, as we show in Theorem \ref{thm:poly-time}. The key step is to approximate the value of the centroid instead of relying on an exact computation. The polynomial time version of Projected Volume presented in Section \ref{sec:computation} also contains a technique for efficiently finding small directions to add to the set $S_t$.


\vspace{-.2cm}
\section{Convex Geometry Tools}\label{sec:tools}

In this section, we begin to develop the technical machinery required by the plan outlined
in the previous section. In the heart of the proof will be a statement relating
the volume of a convex body and a volume of its cylindrification with respect to
dimensions along which the body is `small'. In order to obtain this result, we will require
customized versions of Gr\"{u}nbaum's Theorem. Let us start by revisiting the
basic statement of the theorem:

\begin{theorem}[Gr\"{u}nbaum]\label{thm:volume_grunbaum}
  Let $K$ be a convex set, and let $z$ be its centroid. Given an arbitrary nonzero vector $u$, let $K_+ =
  K \cap \{x\vert u^\top (x-z) \geq 0 \}$. Then,
  $$\frac{1}{e} \cdot \vol(K) \leq \vol(K_+) \leq \left( 1- \frac{1}{e} \right)
    \cdot \vol(K)\,{.}$$
\end{theorem}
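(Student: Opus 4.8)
\medskip
\noindent\textbf{Proof plan.}
The plan is to peel the statement down to a one-dimensional inequality about cross-sectional volumes, show the extremal convex body is a cone (simplex) with apex on the cut side, finish with an explicit integral, and get the upper bound for free from the lower bound applied to the complementary half-space.

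First I would normalize. Translating, assume $z=0$; rotating, assume $u=e_1$, so $K_+ = K\cap\{x:x_1\ge 0\}$. Let $h(t) = \vol_{d-1}\!\left(K\cap\{x:x_1=t\}\right)$; it is supported on an interval $[a,b]$ with $a<0<b$ (strict, since $\vol(K)>0$ and the centroid is at the origin). By Fubini, $\vol(K)=\int_a^b h(t)\,dt$ and $\vol(K_+)=\int_0^b h(t)\,dt$, and the condition $z=0$ reads $\int_a^b t\,h(t)\,dt=0$. The one structural fact I would invoke is the Brunn--Minkowski inequality: the slices $K\cap\{x_1=t\}$ are convex, and a Minkowski average of two of them lies inside the corresponding slice of $K$, so $g:=h^{1/(d-1)}$ is concave on $[a,b]$. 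Everything then reduces to showing: for concave $g\ge 0$ on $[a,b]$ with $a<0<b$ and $\int_a^b t\,g(t)^{d-1}\,dt=0$, one has $\int_0^b g^{d-1}\ge \tfrac1e\int_a^b g^{d-1}$.

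The core step is that the worst case is a cone. Let $C$ be the $d$-dimensional cone with base perpendicular to $e_1$ and apex on the positive-$x_1$ side, chosen to have the same volume and the same centroid (the origin) as $K$; its slice function is $h_C(t)=c\,(\beta-t)^{d-1}$ on an interval $[\alpha,\beta]$, and $h_C^{1/(d-1)}$ is affine. I would prove $\vol(K_+)\ge\vol(C_+)$ by a crossing argument: since $g$ is concave and the cone's profile is affine, $h-h_C$ changes sign at most twice, and combined with the two matched moments $\int(h-h_C)=0$ and $\int t\,(h-h_C)=0$ this forces $\int_0^\infty(h-h_C)\ge 0$. For the cone itself, the slice at height $t$ is the base scaled by $(\beta-t)/(\beta-\alpha)$, so a one-variable integration gives $\vol(C)=c(\beta-\alpha)^d/d$, $\vol(C_+)=c\,\beta^d/d$, and $\int t\,h_C(t)\,dt = \beta(\beta-\alpha)^d/d - (\beta-\alpha)^{d+1}/(d+1)$; setting this last quantity to $0$ yields $\beta/(\beta-\alpha)=d/(d+1)$, hence
$$\frac{\vol(C_+)}{\vol(C)}=\left(\frac{\beta}{\beta-\alpha}\right)^{d}=\left(\frac{d}{d+1}\right)^{d}=\left(1+\tfrac1d\right)^{-d}>\frac1e,$$
the last inequality because $(1+1/d)^d$ increases to $e$. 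Chaining with $\vol(K_+)\ge\vol(C_+)$ and $\vol(K)=\vol(C)$ gives $\vol(K_+)\ge(1+1/d)^{-d}\vol(K)>\tfrac1e\vol(K)$. The upper bound is then immediate: $K_-:=K\cap\{x:u^\top(x-z)\le 0\}$ is precisely the ``$K_+$'' for the direction $-u$ with the same centroid, so $\vol(K_-)\ge\tfrac1e\vol(K)$, and since the cut hyperplane has measure zero, $\vol(K_+)=\vol(K)-\vol(K_-)\le(1-\tfrac1e)\vol(K)$.

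The main obstacle is the extremality-of-the-cone step. One must verify that $h-h_C$ really changes sign at most twice even when the supports $[a,b]$ and $[\alpha,\beta]$ differ — so that $h-h_C$ is forced to a definite sign near the endpoints of one support but not of the other — and then check that the two moment conditions pin the sign of $\int_0^\infty(h-h_C)$ in the favorable direction rather than against us. Everything else (the normalization, Fubini, Brunn--Minkowski, and the explicit cone integral) is routine.
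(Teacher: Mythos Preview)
The paper does not actually prove \theoremref{thm:volume_grunbaum}; it is quoted as a classical result with references to Gr\"unbaum's original paper and to Nemirovski's lecture notes, and the paper only remarks that ``the first step in the proof of Gr\"unbaum's Theorem consists of applying Brunn's Theorem.'' Your plan is exactly that classical proof: reduce to a one-dimensional profile via Brunn's theorem, show the extremizer is a cone, and compute the cone's ratio $(d/(d+1))^d>1/e$. So there is nothing to compare against in the paper itself; your outline matches what the cited references do.

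Two remarks on the one place you flagged. First, your cone is underdetermined: matching volume and centroid fixes only two of the three parameters $(c,\alpha,\beta)$, which is harmless for the final ratio (indeed $\vol(C_+)/\vol(C)=(d/(d+1))^d$ for \emph{any} such cone), but you should spend that degree of freedom to make the crossing argument clean---e.g.\ take $\beta=b$, or take the base area of the cone equal to $h(0)$. Second, with that normalization the sign-change count genuinely comes out to at most two and the two moment conditions $\int f=0$, $\int t f=0$ do pin $\int_0^\infty f$ in the favorable direction; one way to see this without case analysis is to write $\int_0^\infty f=\int (\mathbf 1_{[0,\infty)}(t)-L(t))\,f(t)\,dt$ for the affine $L$ interpolating the two sign changes of $f$, so that the integrand has a fixed sign. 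With either of these tweaks the obstacle you identified disappears and the proof is complete.
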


In other words, any hyperplane through the centroid splits the convex set in two parts, each of which having a constant fraction of the original volume. See
Gr\"unbaum \cite{grunbaum1960} for the original proof of this theorem, or Nemirovski \cite{nemirovski2005efficient}
for a more recent exposition. The first step in the proof of Gr\"{u}nbaum's Theorem consists of applying Brunn's Theorem, which is an immediate consequence of the Brunn-Minkowski inequality:

\begin{theorem}[Brunn]\label{lemma:concave_profile}
Given a convex set $K$, and let $g(t)$ be the $(d-1)$-dimensional volume of the section
$K(t) := K \cap \{x\vert  x^{\top} e_1  = t\}$. 
Then the function $r(t) := g(t)^{1/(d-1)}$ is concave in $t$ over its support.
\end{theorem}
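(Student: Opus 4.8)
The plan is to deduce the statement directly from the Brunn--Minkowski inequality applied in dimension $d-1$, where the only geometric input about $K$ that we will use is its convexity. Recall that Brunn--Minkowski asserts that for nonempty compact sets $A, B \subseteq \R^{d-1}$ and any $\lambda \in [0,1]$,
$$\vol_{d-1}\big((1-\lambda)A + \lambda B\big)^{1/(d-1)} \ \ge\ (1-\lambda)\,\vol_{d-1}(A)^{1/(d-1)} + \lambda\,\vol_{d-1}(B)^{1/(d-1)}\,{.}$$
We may assume $K$ is compact (otherwise $g$ is finite exactly on an interval and we restrict to a bounded portion, which does not affect concavity). Identify each hyperplane $\{x : x^\top e_1 = t\}$ with $\R^{d-1}$ via the coordinates $x_2,\dots,x_d$, and write $K(t) \subseteq \R^{d-1}$ for the slice under this identification, so that $g(t) = \vol_{d-1}(K(t))$.

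First I would pin down the support of $g$: it is the image of $K$ under the continuous linear map $x \mapsto x^\top e_1$, hence a compact interval $[a,b]$ since $K$ is compact and convex, and $g(t)>0$ for $t$ in the interior of $[a,b]$ (a slice through an interior point of the projection has nonempty interior in $\R^{d-1}$ by convexity). Now fix $t_0, t_1$ in the support and $\lambda \in [0,1]$, and set $t_\lambda := (1-\lambda)t_0 + \lambda t_1$. The key claim is the inclusion
$$(1-\lambda)\,K(t_0) + \lambda\,K(t_1) \ \subseteq\ K(t_\lambda)\,{,}$$
and this is exactly where convexity enters: if $y_0 \in K(t_0)$ and $y_1 \in K(t_1)$ come from points $x_0, x_1 \in K$ with $x_0^\top e_1 = t_0$ and $x_1^\top e_1 = t_1$, then $(1-\lambda)x_0 + \lambda x_1 \in K$ by convexity and its first coordinate is $t_\lambda$, so it lies in the slice $K(t_\lambda)$.

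Combining the inclusion with monotonicity of $(d-1)$-dimensional volume and then Brunn--Minkowski gives
$$g(t_\lambda) \ \ge\ \vol_{d-1}\big((1-\lambda)K(t_0) + \lambda K(t_1)\big) \ \ge\ \Big((1-\lambda)\,g(t_0)^{1/(d-1)} + \lambda\,g(t_1)^{1/(d-1)}\Big)^{d-1}\,{,}$$
and taking $(d-1)$-th roots yields $r(t_\lambda) \ge (1-\lambda)\,r(t_0) + \lambda\,r(t_1)$, i.e., $r = g^{1/(d-1)}$ is concave on its support. The only genuinely nontrivial ingredient here is the Brunn--Minkowski inequality itself, which we take as known (it can be established, for instance, via the Pr\'ekopa--Leindler inequality, or by a direct induction on dimension starting from boxes); everything else is the elementary convexity observation above together with the routine verification that the slices $K(t)$ are compact (hence measurable) so that all the volumes and the inequality are legitimate.
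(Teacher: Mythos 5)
Your proof is correct and is precisely the standard derivation that the paper alludes to: the paper states Brunn's Theorem without proof, remarking only that it is ``an immediate consequence of the Brunn-Minkowski inequality,'' and your argument fills in exactly that step (the convexity inclusion $(1-\lambda)K(t_0)+\lambda K(t_1)\subseteq K(t_\lambda)$ followed by one application of Brunn--Minkowski in dimension $d-1$). So you are taking the same route the authors had in mind; there is nothing to compare.
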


We will rely on Brunn's Theorem to prove our customized versions of Gr\"unbaum's Theorem.\\

\vspace{-.2cm}
\subsection{Directional Gr\"{u}nbaum Theorem}

We begin  by proving a theorem which characterizes how much directional widths of a convex body can change after a cut through the centroid. In some sense, this can be seen as a version of Gr\"unbaum's Theorem bounding widths rather than volumes.

\begin{theorem}[Directional Gr\"{u}nbaum]\label{thm:directional_grunbaum} If $K$ is a convex body
  and $z$ is its centroid,
  then for every unit vector  $u \neq 0$, the set $K_+ =  K \cap \{x\vert u^\top (x-z) \geq 0 \}$ satisfies
  $$\frac{1}{d+1} \cdot w(K, v) \leq w(K_+, v) \leq w(K, v)\,{,} $$
  for all unit vectors $v$.
\end{theorem}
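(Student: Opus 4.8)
The upper bound $w(K_+,v)\le w(K,v)$ is immediate since $K_+\subseteq K$, so the content is in the lower bound. My plan is to route it through two facts about centroids that both follow from Brunn's theorem, and then combine them with a short case analysis on where the extreme points of $K$ in direction $v$ land relative to the cut.

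The first fact is a directional form of Gr\"unbaum's theorem, a ``centroid offset'' bound: for any convex body $K$ with centroid $z$ and any unit vector $\nu$,
\[
  \min_{x\in K}\nu^\top x\;\le\;\nu^\top z-\tfrac1{d+1}\,w(K,\nu)
  \qquad\text{and}\qquad
  \nu^\top z+\tfrac1{d+1}\,w(K,\nu)\;\le\;\max_{x\in K}\nu^\top x .
\]
To prove it, apply an affine map (which preserves centroids) sending $\nu$ to $e_1$ and the $\nu$-range of $K$ to $[0,1]$, so that $\nu^\top z=\int_0^1 t\,g(t)\,dt\big/\int_0^1 g(t)\,dt$, where $g(t)$ is the $(d-1)$-dimensional volume of the slice $K\cap\{x_1=t\}$. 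By Brunn's theorem (\theoremref{lemma:concave_profile}), $g^{1/(d-1)}$ is concave on $[0,1]$. Compare $g$ with the ``cone profile'' $\bar g(t)=A(1-t)^{d-1}$, with $A$ chosen so that $\int_0^1\bar g=\int_0^1 g$: the function $g^{1/(d-1)}-\bar g^{1/(d-1)}$ is concave and nonnegative at $t=1$, so $\{g\ge\bar g\}$ is an interval with right endpoint $1$; this single crossing, together with $\int_0^1\bar g=\int_0^1 g$ and the elementary computation $\int_0^1 t\,\bar g(t)\,dt=\tfrac1{d+1}\int_0^1 g$, gives $\int_0^1 t\,g(t)\,dt\ge\tfrac1{d+1}\int_0^1 g$, which is the first inequality; the second follows by running the same argument with $t\mapsto 1-t$.

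The second fact is a ``centroid reflection'' bound: for every $x\in K$, the point $z+\tfrac1d(z-x)$ lies in $K$. Indeed, if it did not, a separating hyperplane would give a unit vector $\nu$ with $\nu^\top\!\big(z+\tfrac1d(z-x)\big)>\max_{\xi\in K}\nu^\top\xi=:\beta$; rearranging and using $\nu^\top x\ge\min_{\xi\in K}\nu^\top\xi$ yields $\nu^\top z>\beta-\tfrac1{d+1}w(K,\nu)$, contradicting the centroid offset bound in direction $\nu$. With both facts available, fix a unit vector $v$, pick $p^+,p^-\in K$ attaining $\max_{x\in K}v^\top x$ and $\min_{x\in K}v^\top x$, and note $z\in K_+$. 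If $p^+\in K_+$, then $w(K_+,v)\ge v^\top p^+-v^\top z\ge\tfrac1{d+1}w(K,v)$ by the offset bound in direction $v$; symmetrically if $p^-\in K_+$. In the remaining case, $u^\top p^+<u^\top z$ and $u^\top p^-<u^\top z$, so the reflected points $q^\pm:=z+\tfrac1d(z-p^\pm)$ satisfy $u^\top q^\pm>u^\top z$, hence $q^\pm\in K_+$, and $v^\top q^--v^\top q^+=\tfrac1d\big(v^\top p^+-v^\top p^-\big)=\tfrac1d\,w(K,v)\ge\tfrac1{d+1}w(K,v)$, so $w(K_+,v)\ge\tfrac1{d+1}w(K,v)$ in this case too.

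The step I expect to require real care is the cone comparison inside the offset bound — extracting the single crossing of $g$ against $\bar g$ from concavity of $g^{1/(d-1)}$ and turning it into the moment inequality. The reflection bound and the three-case assembly are then just bookkeeping with linear functionals together with one application of the separating hyperplane theorem.
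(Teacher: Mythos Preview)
Your proof is correct and follows the same overall architecture as the paper: first establish the centroid offset bound via Brunn's theorem (this is exactly the paper's \lemmaref{lem:dir-grun}), then run a three-case analysis on whether the $v$-extremizers $p^\pm$ lie in $K_+$. The one substantive difference is in the third case. The paper takes $x_u=\argmax_{x\in K}u^\top x$, which is automatically in $K_+$, and slides $p^\pm$ toward $x_u$ along a segment until they hit the cutting hyperplane; \lemmaref{lem:dir-grun} applied in direction $u$ then forces the interpolation parameters to be at least $\tfrac1{d+1}$, and the $v$-gap between the two interpolated points gives the bound. You instead invoke the centroid reflection property $z+\tfrac1d(z-x)\in K$ (a classical consequence of the offset bound, sometimes phrased as the centroid being a $\tfrac1d$-center of $K$), which sends $p^\pm$ through $z$ to points on the far side of the cut. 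Both routes work; yours buys a marginally cleaner computation in Case~3 and a slightly sharper constant $\tfrac1d$ there, at the cost of proving one extra lemma, while the paper's interpolation is entirely self-contained once \lemmaref{lem:dir-grun} is in hand.
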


The first step will be to prove Theorem \ref{thm:directional_grunbaum} when $v$
is the direction of $u$ itself. We prove this in the following lemma.

\begin{lemma}\label{lem:dir-grun}
  Under the conditions of Theorem \ref{thm:directional_grunbaum}, $
  w(K_+, u) \geq \frac{1}{d+1} \cdot
  w(K, u)$.
\end{lemma}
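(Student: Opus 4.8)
The goal is to show that cutting a convex body $K$ through its centroid $z$ by a halfspace $\{u^\top(x-z)\ge 0\}$ leaves the width along the cut direction $u$ at least a $1/(d+1)$ fraction of the original width $w(K,u)$. The plan is to reduce to a one-dimensional statement about the section-volume profile. First I would apply an affine change of coordinates so that $u=e_1$ and $z=0$; this is harmless since widths scale linearly and the centroid condition is affine-invariant. Writing $g(t)$ for the $(d-1)$-dimensional volume of the slice $K\cap\{x_1=t\}$, the support of $g$ is an interval $[a,b]$ with $a<0<b$ (since $z=0$ lies in the interior), and $w(K,u)=b-a$ while $w(K_+,u)=b$. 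So it suffices to prove $b \ge \frac{1}{d+1}(b-a)$, equivalently $-a \le d\cdot b$, i.e. $|a|/b \le d$.

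**The key inequality.** To bound $|a|/b$ I would use two facts: (i) $r(t)=g(t)^{1/(d-1)}$ is concave on $[a,b]$ by Brunn's Theorem (Theorem \ref{lemma:concave_profile}), and (ii) $\int_a^b t\,g(t)\,dt = 0$ because the first coordinate of the centroid vanishes. The standard trick is to compare $K$ against the cone that has the same value $r(a)=0$-ish behavior — more precisely, replace $g$ by the profile of a cone with apex over $t=b$ and base at $t=a$: define $\bar r(t)$ to be the linear function on $[a,b]$ with $\bar r(a)=r(a)$ and $\bar r(b)=0$... wait, $r(b)$ need not be $0$. The cleaner route: since $r$ is concave and nonnegative on $[a,b]$, for the purpose of maximizing the ratio $|a|/b$ subject to the centroid constraint, the extremal profile is a cone, i.e. $g(t) = c\,(t-a)^{d-1}$ (linear $r$ vanishing at the left endpoint $a$). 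For such a cone, $\int_a^b t (t-a)^{d-1}dt = 0$ is a one-variable equation in $a,b$; computing it gives $\frac{b}{b-a} = \frac{1}{d+1}$, i.e. exactly the extremal case. I would make the "cone is extremal" step rigorous by the usual argument: replacing $g$ by the cone profile through the points $(a,r(a))$ and... the concavity of $r$ means $g$ lies above the cone near $a$ and below it near $b$ (they cross once), so moving mass from right to left — which can only make $\int t\,g$ smaller — shows the cone is the worst case for making $\int t g = 0$ with $b$ as small as possible relative to $b-a$. Then the explicit cone computation closes it.

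**The explicit computation.** For the cone $g(t)=(t-a)^{d-1}$ on $[a,b]$, substitute $s=t-a$, $\ell=b-a$: the centroid condition $\int_0^\ell (s+a)s^{d-1}ds=0$ gives $\frac{\ell^{d+1}}{d+1} + a\frac{\ell^d}{d} = 0$, so $a = -\frac{d\,\ell}{d+1}$, hence $b = a+\ell = \frac{\ell}{d+1}$, and therefore $w(K_+,u)=b=\frac{1}{d+1}\ell = \frac{1}{d+1}w(K,u)$. The upper bound $w(K_+,u)\le w(K,u)$ is immediate since $K_+\subseteq K$. I would also remark that $a<0$ forces $d\ge 1$ consistency and that if $g$ is not exactly a cone the inequality is strict, but strictness is not needed.

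**Main obstacle.** The delicate point is the rigorous justification that the cone profile is extremal — i.e. that among all concave $r=g^{1/(d-1)}$ with $\int_a^b t g(t)\,dt = 0$, the ratio $-a/b$ is maximized by the linear profile vanishing at the left endpoint. This is exactly the type of "symmetrization/tilting" argument at the heart of Grünbaum's original proof, so I expect the authors to either invoke it by analogy or to carry out the single-crossing comparison: if $\bar g$ is the cone agreeing with $g$ in total behavior near $a$, then $g-\bar g$ changes sign exactly once (from $+$ to $-$) as $t$ increases, by concavity of $r$; multiplying by the increasing function $t$ and integrating shows $\int t\,\bar g\,dt \ge \int t\, g\,dt = 0$, which (after checking the boundary bookkeeping for where the cone's support ends) pins down the bound. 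Everything else is routine calculus and affine invariance.
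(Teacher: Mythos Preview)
Your proposal is correct and takes essentially the same approach as the paper: reduce via Brunn's theorem to a one-dimensional statement about the section-volume profile $g=r^{d-1}$, compare to a cone (linear $r$ vanishing at one endpoint), justify extremality by a single-crossing argument, and finish with the explicit cone computation. The only differences are cosmetic normalization choices---the paper fixes the projection interval to $[0,1]$ (with $u=-e_1$) and shows the centroid lies at least $1/(d+1)$ from the endpoint, whereas you fix the centroid at $0$ and bound $b/(b-a)$---and the paper is a bit more explicit that the comparison cone is chosen to have the \emph{same total volume} as $g$, which is what makes the single-crossing integral step go through (your phrase ``agreeing with $g$ in total behavior near $a$'' should be sharpened to exactly this).
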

We defer the proof of this lemma to Appendix~\ref{appendix:A1}.
We are now ready to prove the Directional Gr\"{u}nbaum Theorem:

\begin{proofof}{Theorem~\ref{thm:directional_grunbaum}}
By translating $K$ we can assume without loss of generality that $z = 0$. Consider three cases:
\begin{enumerate}
\item There exists a point $x_v^+ \in K_+ \cap \argmax_{x \in K} v^\top x$.
In such case, we know by the previous lemma that $$w(K_+, v) \geq v^\top (x_v^+ - z)
\geq \frac{1}{d+1} w(K,v)\,{.}$$
\item The second case is where there exists a point $x_v^- \in K_+
\cap \argmin_{x \in K} v^\top x$. Then,  $$w(K_+, v) \geq v^\top ( z - x_v^-) \geq
\frac{1}{d+1} w(K,v)\,{.}$$
\item In the remaining case, let  $x_v^+ \in \argmax_{x \in K} v^\top x$ and
$x_v^- \in \argmin_{x \in K} v^\top x$ be such that $u^\top x_v^+ < 0$ and
$u^\top x_v^- < 0$. Also, let $x_u = \argmax_{x \in K} u^\top x$.
In such a case, choose real numbers $\lambda^+, \lambda^-$ between zero and one
such that:
$$u^\top \left(x_u + \lambda^+ (x_v^+ - x_u)\right) = 0 \text{  and  }
u^\top \left(x_u + \lambda^- (x_v^- - x_u)\right) = 0\,{.} $$
We can bound $\lambda^+$ and $\lambda^-$ as follows:
$$\frac{1}{d+1} w(K,u) \leq u^\top x_u = \lambda^+ \cdot u^\top (x_u - x_v^+)
\leq \lambda^+ \cdot w(K,u)\,{.}$$
So $\lambda^+ \geq \frac{1}{d+1}$. By the same argument $\lambda^- \geq
\frac{1}{d+1}$. Now, the points, $\tilde{x}_v^+ = x_u +
\lambda^+ (x_v^+ - x_u)$ and $\tilde{x}_v^- = x_u + \lambda^- (x_v^- - x_u)$ are
in $K^+$, since they are convex combinations of points in $K$ and their dot
    product with $u$ is non-negative.    Now:
$$w(K^+, v) \geq v^\top (\tilde{x}_v^+ - \tilde{x}_v^-) = \lambda^+
v^\top(x_v^+ - x_u) + \lambda^- v^\top (x_u - x_v^-) \geq \frac{v^\top (x_v^+ -
x_v^-)}{d+1} = \frac{w(K,v)}{d+1}\,{.}$$
\end{enumerate} 
\end{proofof} 

\vspace{-.2cm}
\subsection{Approximate Gr\"unbaum Theorem}
We will use the Directional Gr\"unbaum Theorem to give an approximate version of
the standard volumetric Gr\"unbaum Theorem. Essentially, we will argue that if
we cut through a point sufficiently close to the centroid, then either side of
the cut will still contains a constant fraction of the volume.

\begin{lemma}[Approximate Gr\"unbaum]\label{lemma:approx_grunbaum}
  Let $K$ be a convex body, and let $z$ be its
  centroid. For an arbitrary unit vector $u$, and scalar $\delta$ such that $0 \leq \delta \leq
  {w(K,u)}/{(d+1)^2}$, let $K_+^\delta = \{x \in K\vert
  u^\top(x-z) \geq \delta\}$. Then, 
  $$\vol(K_+^\delta) \geq \frac{1}{e^2} \cdot \vol(K)\,{.}$$
\end{lemma}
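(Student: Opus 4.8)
The plan is to compare $K_+^\delta$ with the body $K_+ = K \cap \{x : u^\top(x-z) \geq 0\}$ obtained by cutting exactly through the centroid, and to show that sliding the cut from the centroid to distance $\delta$ removes only a small fraction of $K_+$. Without loss of generality translate so that $z = 0$, and let $e_1 = u$. For $t \in \R$ write $g(t)$ for the $(d-1)$-dimensional volume of the slice $K \cap \{x : x^\top u = t\}$, so that $\vol(K_+) = \int_0^{a} g(t)\, dt$ and $\vol(K_+^\delta) = \int_\delta^{a} g(t)\, dt$, where $a = \max_{x\in K} u^\top x$. By the Directional Gr\"unbaum Theorem (Theorem~\ref{thm:directional_grunbaum}) applied to $K_+$ with direction $v = u$, together with the fact that $K_+$ already contains the face of $K$ maximizing $u^\top x$, we have $a \geq w(K_+,u) \geq \frac{1}{d+1} w(K,u)$; combined with the hypothesis $\delta \leq w(K,u)/(d+1)^2$ this gives $\delta \leq a/(d+1)$, i.e. the shifted cut is still well inside the support of $K_+$.

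The key step is a pointwise comparison of slice volumes. By Brunn's Theorem (Theorem~\ref{lemma:concave_profile}), $r(t) := g(t)^{1/(d-1)}$ is concave on its support $[-b, a]$ (where $b = -\min_{x\in K} u^\top x \geq 0$). Concavity of $r$ together with $r(-b) = 0$ means that for $0 \leq t \leq a$ we can bound $r(t)$ from below by the chord from $(-b,0)$ to $(a, r(a))$ is the wrong direction; instead I use that on $[0,a]$ the function $r$ lies above the line through $(a, r(a))$ and... the cleanest route: since $r$ is concave and nonnegative on $[-b,a]$, for any $0 \le \delta \le t \le a$ one has $r(t) \ge \frac{t+b}{\delta+b}\, r(\delta) \cdot$(no)—rather, the standard trick is that concavity gives, for $s \in [0,a]$, $g(s) \geq \left(\frac{a-s}{a}\right)^{d-1} g(0)$ is also not quite what's needed. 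What I actually want is a lower bound on $\int_\delta^a g$ in terms of $\int_0^a g$. Use the substitution that maps $[\delta,a]$ affinely onto $[0,a]$ by $s \mapsto \delta + s(a-\delta)/a =: \phi(s)$; concavity of $r$ and $r \geq 0$ at the left endpoint $-b \le 0$ imply $r(\phi(s)) \geq \frac{a-\delta}{a}\, r(s)$ for all $s\in[0,a]$ (the affine image of the chord argument), hence $g(\phi(s)) \geq \left(\frac{a-\delta}{a}\right)^{d-1} g(s)$. Changing variables, $\vol(K_+^\delta) = \int_\delta^a g = \frac{a-\delta}{a}\int_0^a g(\phi(s))\,ds \geq \left(\frac{a-\delta}{a}\right)^{d} \vol(K_+)$.

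To finish, bound $\left(\frac{a-\delta}{a}\right)^d = \left(1 - \frac{\delta}{a}\right)^d \geq \left(1 - \frac{1}{d+1}\right)^d \geq \frac{1}{e}$, using $\delta/a \leq 1/(d+1)$ from the first step and the elementary inequality $(1-1/(d+1))^d \ge 1/e$. Then invoke Gr\"unbaum's Theorem (Theorem~\ref{thm:volume_grunbaum}) for $\vol(K_+) \geq \frac{1}{e}\vol(K)$, so $\vol(K_+^\delta) \geq \frac{1}{e}\vol(K_+) \geq \frac{1}{e^2}\vol(K)$, as claimed.

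The main obstacle is getting the slice-comparison inequality $r(\phi(s)) \ge \frac{a-\delta}{a} r(s)$ stated and justified correctly: one must use concavity of $r$ \emph{and} the vanishing (or at least nonnegativity) of $r$ at the left end of its support to control how much the profile can drop when the interval of integration is shifted right by $\delta$ and shrunk. Everything else — the bound $\delta \le a/(d+1)$ via Directional Gr\"unbaum, the change of variables, and the final $(1-1/(d+1))^d \ge 1/e$ estimate — is routine.
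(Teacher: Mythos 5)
Your proof is correct and follows the same high-level decomposition as the paper's: reduce to showing $\vol(K_+^\delta)\geq \vol(K_+)/e$ and then tack on Gr\"unbaum's $\vol(K_+)\geq \vol(K)/e$, with the bound $\delta/a\leq 1/(d+1)$ coming from the directional Gr\"unbaum lemma in both cases. What differs is the middle inequality. The paper argues that the ratio $\int_\delta^a r^{d-1}/\int_0^a r^{d-1}$ is minimized by a linear profile, via a two-step replacement: first swap $r$ on $[\delta,a]$ for the chord to zero (ratio can only drop), then extend that line back to $[0,a]$ (ratio can only drop again), and finish by computing the ratio explicitly for $r(t)=t-1$. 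Your route avoids the replacement argument entirely: you push the interval $[\delta,a]$ onto $[0,a]$ by the affine map $\phi(s)=\delta+\frac{a-\delta}{a}s$, use Brunn's concavity to compare slice volumes pointwise, and change variables, getting $\vol(K_+^\delta)\geq\bigl(\tfrac{a-\delta}{a}\bigr)^d\vol(K_+)\geq \vol(K_+)/e$. This is a bit slicker — one concavity inequality and one substitution, no extremal-function argument — and is a legitimate alternative proof. One small slip in your justification: the inequality $r(\phi(s))\geq\frac{a-\delta}{a}r(s)$ follows because $\phi(s)=\frac{a-\delta}{a}\,s+\frac{\delta}{a}\,a$ is a convex combination of $s$ and $a$, so concavity gives $r(\phi(s))\geq\frac{a-\delta}{a}r(s)+\frac{\delta}{a}r(a)$ and then you drop the nonnegative term $\frac{\delta}{a}r(a)$; it is nonnegativity of $r$ at the \emph{right} endpoint $a$ that is being used, not at the left endpoint $-b$ as your closing remark suggests.
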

The proof of this lemma follows from a modification of the original proof for Gr\"unbaum's theorem, and it can be found in Appendix~\ref{appendix:A2}.


\vspace{-.2cm}
\section{Cylindrification}\label{sec:cylindrification}

Next we study how to relate the volume of a convex body to the volume of its
projection onto a subspace.

\begin{lemma}[Cylindrification]\label{lemma:cylindrification}
Let $K \subset \R^d$ be a convex body such that $w(K,u) \geq \delta$ for every
unit vector $u$, then for every $(d-1)$ dimensional subspace $L$:
$$\vol(\Pi_L K) \leq \frac{d(d+1)}{\delta} \cdot \vol(K)\,{.}$$
\end{lemma}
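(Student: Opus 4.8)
The plan is to relate $\vol(\Pi_L K)$ and $\vol(K)$ through the one-dimensional "height" of $K$ in the direction $u$ orthogonal to $L$, using Brunn's Theorem (Theorem~\ref{lemma:concave_profile}) to control how the slice volumes $g(t)$ behave. Write $u$ for the unit normal to $L$, and for $t$ in the support of the projection define $g(t) = \vol_{d-1}\bigl(K \cap \{x : u^\top x = t\}\bigr)$, so that $\vol(K) = \int g(t)\,dt$. Since each slice $K \cap \{u^\top x = t\}$ is a convex subset of a translate of $L$, its $(d-1)$-dimensional volume is at most $\vol(\Pi_L K)$ when — but this inclusion does not hold in general, so instead I would proceed by a direct comparison: let $M = \max_t g(t)$ be the largest slice volume, attained at some $t^\star$. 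The slice achieving $M$ projects \emph{into} $\Pi_L K$, hence $M \le \vol(\Pi_L K)$ is the wrong direction; what I actually want is $\vol(\Pi_L K) \le c \cdot M$ for the right constant, together with $\vol(K) \ge c' \cdot M \cdot \delta$, which then combine to give the bound.

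For the second inequality: by Brunn's Theorem $r(t) = g(t)^{1/(d-1)}$ is concave on its support $[a,b]$, and $b - a = w(K,u) \ge \delta$ by hypothesis. A concave nonnegative function on $[a,b]$ with maximum value $r(t^\star) = M^{1/(d-1)}$ satisfies $\int_a^b r(t)\,dt \ge \frac{1}{2}(b-a)\,r(t^\star)$ (the graph lies above the triangle with apex at $t^\star$), but I need the integral of $g(t) = r(t)^{d-1}$, not of $r(t)$; the function $g$ is itself $\bigl(\text{concave}\bigr)^{d-1}$, and one checks that $\int_a^b r(t)^{d-1}\,dt \ge \frac{b-a}{d}\, M$ by bounding $r$ below by its linear interpolants from $t^\star$ to each endpoint and integrating the resulting power. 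This yields $\vol(K) \ge \frac{\delta}{d}\, M$.

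For the first inequality I need $\vol(\Pi_L K) \le (d+1)\, M$. This is where the main work lies, and where I expect the real obstacle. The idea: $\Pi_L K$ is the union of the projections of all slices; the slice at $t^\star$ has volume $M$. I would argue that $\Pi_L K$ cannot be much larger than this maximal slice's projection because $K$ is convex — specifically, fix a point $p$ in the relative interior of the $t^\star$ slice (say its centroid within that slice), and observe that for any $q \in \Pi_L K$ with $q$ the projection of some $x \in K$ with $u^\top x = s$, convexity of $K$ forces a whole segment of the $t^\star$-slice's projection near $p$ to scale appropriately; applying the Directional Gr\"unbaum bound or a direct Brunn-type slicing argument in the direction of $q - p$ should show that the projection of the $t^\star$-slice contains a scaled copy of $\Pi_L K$ shrunk by a factor $1/(d+1)$, giving $M \ge \vol(\Pi_L K)/(d+1)^{d-1}$ — which is too weak. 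The correct route, I believe, is to instead bound $\vol(\Pi_L K)$ directly against $\max_t g(t)$ using the fact that $\Pi_L K \subseteq \Pi_L(K)$ and that $K$ decomposed over $t$-slices has a \emph{largest} slice whose projection, by convexity and a one-dimensional Gr\"unbaum/centroid argument applied coordinate-wise, captures a $\frac{1}{d+1}$ fraction of the width of $\Pi_L K$ in every direction; then $\vol(\Pi_L K) \le (d+1) \cdot M$ follows because a convex body each of whose widths exceeds $\frac{1}{d+1}$ of the corresponding width of another convex body cannot — no, that also fails. I would therefore fall back on the clean statement: by an affine argument it suffices to prove this when $L$ is a coordinate hyperplane, and then use that $K$ contains the "bipyramid" over the maximal slice with apexes at the two $u$-extreme points of $K$, whose $(d-1)$-volume of projection equals $\vol(\Pi_L K)$ only if that slice already projects onto all of $\Pi_L K$ — so the genuinely careful step is to show $\Pi_L K = \bigcup_t \Pi_L(K\text{-slice at }t)$ is dominated in volume by $(d+1)$ times the largest slice, which I would establish via Brunn's Theorem applied to the \emph{projections}: $h(t) := \vol_{d-1}(\Pi_L(K \cap \{u^\top x = t\}))$ has $h(t)^{1/(d-1)}$ concave, $\max_t h(t) \ge \vol(\Pi_L K)/(d+1)^{?}$... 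The honest assessment is that reconciling the constant $d(d+1)$ is the crux; I expect the intended proof couples Brunn's Theorem (for the lower bound on $\vol(K)$ in terms of $\delta$ and the max slice) with the Directional Gr\"unbaum Theorem (Theorem~\ref{thm:directional_grunbaum}) or Lemma~\ref{lem:dir-grun} applied to the slice containing the centroid, to pin down why the centroid-slice's projection is within a factor $d+1$ of $\Pi_L K$ along every direction, and a concavity/power-mean estimate to convert that into the volume factor $d(d+1)$.
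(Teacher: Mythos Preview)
Your decomposition has a genuine gap. You try to split the bound as $\vol(\Pi_L K) \le (d+1)\,M$ together with $\vol(K) \ge \frac{\delta}{d}\,M$, where $M = \max_t g(t)$ is the largest volume of a slice of $K$ perpendicular to $u$. The second inequality is fine, but the first is \emph{false} and cannot be repaired: take $d=2$ and let $K$ be the parallelogram with vertices $(0,0),(1,0),(1+N,N),(N,N)$. Every horizontal slice has length $1$, so $M=1$, yet the projection onto the $x$-axis has length $1+N$, which is unbounded as $N\to\infty$; the minimum width of $K$ stays bounded below (about $1/\sqrt{2}$), so the width hypothesis does not save you. In general the projection is the union of all slice-projections and can be arbitrarily larger than any single slice, so no Brunn or directional-Gr\"unbaum argument will produce a dimension-only constant here.

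The paper's proof avoids this by looking at fibers \emph{parallel} to $u$ rather than slices perpendicular to it. For $x_L \in \Pi_L K$ let $f(x_L)$ be the length of the segment $K \cap (x_L + \R u)$; then $\vol(K) = \int_{\Pi_L K} f(x_L)\,dx_L$ by Fubini, and an elementary argument shows $f$ is concave on $\Pi_L K$. The crucial external ingredient is not Brunn's theorem but John's theorem (Lemma~\ref{lemma:large_ball}): since $w(K,u)\ge\delta$ in every direction, $K$ contains a ball of diameter $\delta/d$, so some fiber has $f(x_L^\star)\ge\delta/d$. By concavity of $f$, the ``squashed'' body $K'=\{(x_L,h): x_L\in\Pi_L K,\ 0\le h\le f(x_L)\}$ (which has the same volume as $K$) contains the cone over $\Pi_L K$ with apex at height $\delta/d$; the cone volume is $\frac{1}{d}\cdot\frac{\delta}{d}\cdot\vol(\Pi_L K)$, and the bound follows. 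The switch from perpendicular slices to parallel fibers is exactly what makes the projection appear as the \emph{domain of integration} rather than as something to be compared to a single slice.
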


As one of the ingredients of the proof, we will use John's Theorem:

\begin{theorem}[John] If $K \subset \R^d$ is a bounded convex body, then there
is a point $z$ and an ellipsoid $E$ centered at the origin such that:
$$z+\frac{1}{d} E \subseteq K \subseteq z + E\,{.}$$
\end{theorem}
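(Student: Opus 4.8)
The plan is to prove the classical maximal‑inscribed‑ellipsoid form of John's theorem. First I would establish existence of a maximum‑volume ellipsoid contained in $K$: since $K$ is a full‑dimensional bounded convex body, the collection of solid ellipsoids $\{c + AB : c\in\R^d,\ A\succ 0,\ c+AB\subseteq K\}$ (with $B$ the closed unit ball) has, once we use boundedness of $K$ to confine $(c,A)$ to a compact range, a maximizer of the continuous functional $\vol(c+AB)=\vol(B)\det A$; this maximizer $E_0$ is a genuine ellipsoid because $K$ has nonempty interior. Applying an invertible affine map $T$ with $T(E_0)=B$ (unit ball centered at the origin), it suffices to prove the normalized claim: if $B$ is the maximum‑volume ellipsoid inside a convex body $K':=T(K)\supseteq B$, then $K'\subseteq dB$. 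Pulling this back through $T^{-1}(x)=Mx+z$ gives $K\subseteq z+d\,M(B)$ while $z+\tfrac1d\cdot d\,M(B)=z+M(B)=E_0\subseteq K$, which is exactly the assertion with $E:=d\,M(B)$.

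For the normalized claim I would argue by contradiction. Suppose some $p\in K'$ has $\norm{p}_2=\rho>d$; after a rotation assume $p=\rho e_1$. Since $B\subseteq K'$ and $K'$ is convex, the ``ice‑cream cone'' $C:=\hull(B\cup\{p\})$ lies in $K'$. I will exhibit a one‑parameter family of ellipsoids contained in $C$ for all small $t>0$ whose volume strictly exceeds $\vol(B)$, contradicting maximality. Concretely, fix $\alpha$ with $\tfrac{1}{\rho-1}<\alpha<\tfrac{1}{d-1}$ — possible precisely because $\rho>d$ — and set
$$E_t:=\Big\{x\in\R^d:\ \frac{(x_1-t)^2}{(1+t)^2}+\frac{x_2^2+\cdots+x_d^2}{(1-\alpha t)^2}\le 1\Big\},$$
so that $E_0=B$ and $\vol(E_t)=(1+t)(1-\alpha t)^{d-1}\vol(B)$, whence $\tfrac{d}{dt}\log\vol(E_t)\big|_{t=0}=1-(d-1)\alpha>0$ and $\vol(E_t)>\vol(B)$ for all small $t>0$.

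The technical heart is to show $E_t\subseteq C$ for small $t>0$. Both $E_t$ and $C$ are rotationally symmetric about the $e_1$‑axis, so this reduces to a planar statement about the profile in the $(x_1,r)$ half‑plane with $r=\sqrt{x_2^2+\cdots+x_d^2}$: the boundary of $C$ there is the circular arc $x_1^2+r^2=1$ on $\{x_1\le 1/\rho\}$ together with the tangent segment from $p$ to the horizon point $(1/\rho,\sqrt{1-1/\rho^2})$ on $\{x_1\ge 1/\rho\}$. I would check, to first order in $t$, that the boundary of $E_t$ stays strictly inside the disk on $\{x_1\le 1/\rho\}$ and strictly below the tangent line on $\{x_1\ge 1/\rho\}$. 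A short computation shows both one‑sided constraints are governed by the same quantity and hold strictly exactly when $\alpha>\tfrac{1}{\rho-1}$, the worst case being at the horizon point, where the first‑order slack is proportional to $t\big(\alpha(\rho-1)-1\big)>0$; away from the horizon, $E_0=B$ already sits strictly inside $C$ by a margin that survives the perturbation, and the choice of an $e_1$‑shift equal to $t$ (up to an infinitesimal increase) keeps the rearmost point of $E_t$ at $x_1=-1$ so the back cap does not bulge out of $B$. Hence $E_t\subseteq C\subseteq K'$ for small $t$, so $\vol(E_t)\le\vol(B)$, contradicting the volume computation; therefore no such $p$ exists and $K'\subseteq dB$.

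I expect the main obstacle to be precisely this last containment verification — getting the first‑order expansions of the two one‑sided constraints right and confirming that the region near the horizon point is controlled by $\alpha>1/(\rho-1)$ while the rear cap stays inside $B$. The existence/compactness step and the affine‑normalization bookkeeping are routine by comparison.
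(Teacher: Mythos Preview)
The paper does not prove John's theorem at all: it is stated without proof as a classical result and then immediately used to derive Lemma~\ref{lemma:large_ball}. So there is no ``paper's own proof'' to compare against. Your proposal is the standard maximal-volume inscribed ellipsoid argument, and the overall strategy---normalize so that the John ellipsoid is the unit ball, assume a point at distance $\rho>d$ exists, and perturb the ball inside the ice-cream cone $\hull(B\cup\{p\})$ to get a strictly larger ellipsoid---is correct and classical.

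One small caution on execution: the containment $E_t\subseteq C$ is not purely a first-order matter near the rear cap. Your family touches $\partial B$ at $(-1,0)$ for every $t$, so the verification that $E_t$ stays inside the circular arc near that point requires a second-order comparison (you need $(1-\alpha t)^2\le 1+t$, which does hold for small $t>0$, but the margin is $O(t)$, not bounded away from zero as your ``strictly inside by a margin that survives the perturbation'' language suggests). Similarly, at the horizon point $(1/\rho,\sqrt{1-1/\rho^2})$ the ellipse $E_0=B$ is tangent to the cone boundary, so there is no pre-existing slack there either; the containment relies entirely on the first-order computation you outline, which is correct provided $\alpha>1/(\rho-1)$. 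If you write this up, make both of these tangency points explicit rather than appealing to a uniform margin.
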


In particular, we will use the following consequence of John's Theorem:

\begin{lemma}\label{lemma:large_ball}
If $K \subset \R^d$ is a convex body such that $w(K,u) \geq \delta$
for every unit vector $u$, then $K$ contains a ball of diameter $\delta / d$.
\end{lemma}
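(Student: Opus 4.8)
The plan is to read the result straight off John's Theorem by keeping track of directional widths. Let $z$ and $E$ be the point and the origin-centered ellipsoid guaranteed by John's Theorem, so that $z+\tfrac1d E\subseteq K\subseteq z+E$. The inclusion $z+\tfrac1d E\subseteq K$ already exhibits a ball inside $K$, namely a ball centered at $z$ whose radius is $\tfrac1d$ times the shortest semi-axis of $E$. So it suffices to prove that the shortest semi-axis of $E$ is at least $\delta/2$.

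To do this I would use two elementary properties of the directional width $w(\cdot,u)$ from \eqref{eq:width}: it is translation invariant, $w(C+v,u)=w(C,u)$, and monotone under inclusion, $w(C,u)\le w(C',u)$ whenever $C\subseteq C'$. Applying these to $K\subseteq z+E$ gives $w(K,u)\le w(E,u)$ for every unit vector $u$. Writing $E=\{x:\ x^\top A^{-1}x\le 1\}$ for a positive definite matrix $A$, one has $w(E,u)=2\sqrt{u^\top A u}$ for unit $u$, and minimizing over unit vectors yields $\min_{u}w(E,u)=2\sqrt{\lambda_{\min}(A)}=2\rho$, where $\rho$ denotes the shortest semi-axis of $E$. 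Combining this with the hypothesis $w(K,u)\ge\delta$ for all unit $u$,
$$\delta\ \le\ \min_{u}w(K,u)\ \le\ \min_{u}w(E,u)\ =\ 2\rho,$$
so $\rho\ge\delta/2$.

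Finally, an origin-centered ellipsoid contains a Euclidean ball centered at the origin whose radius equals its shortest semi-axis; hence $\tfrac1d E$ contains an origin-centered ball of radius $\rho/d\ge\delta/(2d)$, and therefore $z+\tfrac1d E$ contains a ball of radius $\delta/(2d)$, i.e.\ of diameter $\delta/d$, centered at $z$. Since $z+\tfrac1d E\subseteq K$, the body $K$ contains this ball, which is exactly the claim.

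I do not expect a genuine obstacle here: the argument is short once John's Theorem is invoked. The only points requiring a little care are the identity $\min_{\|u\|=1}w(E,u)=2\sqrt{\lambda_{\min}(A)}$ relating the minimal width of an ellipsoid to its shortest semi-axis, and the bookkeeping of the factor $1/d$ from John's Theorem against the factor $2$ relating a width (a diameter-type quantity) to a radius — it is this interplay that produces the final diameter $\delta/d$ rather than, say, $\delta/(2d)$ or $2\delta/d$.
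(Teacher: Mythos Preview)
Your argument is correct and follows essentially the same route as the paper: invoke John's Theorem, use monotonicity of width under inclusion to bound the minimal width of $E$ below by $\delta$, deduce that $E$ contains a ball of diameter $\delta$, and scale by $1/d$. The paper's proof is terser (it asserts directly that an ellipsoid of minimal width $\delta$ contains a ball of diameter $\delta$ without the eigenvalue computation), but the substance is identical.
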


\begin{proof}
 Applying John's
theorem and translating $K$ if necessary so that $z=0$, there exists an ellipsoid $E$ such that
$\frac{1}{d} E \subseteq K \subseteq E$. Since the width of $K$ in each
direction is at least $\delta$, the width of $E$ must be at
least $\delta$ in each direction. Since $E$ is an ellipsoid, it must contain a
ball of diameter $\delta$. Thus, $\frac{1}{d}E$ contains a ball of diameter $\frac{\delta}{d}$. Hence, $K$ also contains such a ball.
\end{proof}

\begin{wrapfigure}[12]{r}{0.41\textwidth}
    \begin{minipage}[c]{0.18\textwidth}

  \begin{tikzpicture}
    \draw[line width=.8pt, fill=lightgray]
    (0,0) \foreach \x in {0,.01,...,2.01} {--(\x,\x*\x-2*\x)} -- (2,0)
    \foreach \x in {0,.01,...,2.01} {--(2-\x,1-.5*\x)} -- cycle;
    \draw[->] (-.2,-1.1)--(2.3,-1.1);
    \fill[black,font=\footnotesize] (2.5,-1.1) node[above] {$\mathbb{R}^{d-1}$};
    \draw[->] (-.1,-1.2)--(-.1,1.2);
    \fill[black,font=\footnotesize] (-.1,1.2) node[right] {$\mathbb{R}$};

    \begin{scope}[shift={(0,-4)}]
      \draw[line width=.8pt, fill=lightgray] (0,0) 
    \foreach \x in {0,.01,...,2.01} {--(\x,-\x*\x+2.5*\x )} -- (2,0) -- cycle;
    \draw[->] (-.2,0)--(2.3,0);
      \fill[black,font=\footnotesize] (2.5,0) node[above] {$\mathbb{R}^{d-1}$};
    \draw[->] (-.1,-.2)--(-.1,2.2);
    \fill[black,font=\footnotesize] (-.1,2.2) node[right] {$\mathbb{R}$};
    \end{scope}

    \draw[dashed, line width=.5pt](.5,-4)--(.5,.25);
    \draw[red, line width=1pt](.5, -4)--(.5, -4 + 1);
    \draw[red, line width=1pt](.5, -.75)--(.5, .25);

  \end{tikzpicture}
    \end{minipage}\hfill
  \begin{minipage}[c]{0.22\textwidth}
  \caption{Illustration of the squashing procedure: the height of all segments
  orthogonal to a $d-1$ subspace is preserved, but the segments are translated
  so to start in the origin of the $d$-th dimension.}\label{fig:squashing}
  \end{minipage}
\end{wrapfigure}
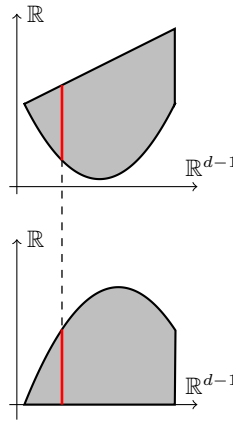

We now prove our cylindrification lemma.

\begin{proofof}{Lemma \ref{lemma:cylindrification}}
Our proof proceeds in two steps:

\emph{Step 1: Squashing $K$.} Assume without loss of generality that the
$(d-1)$-dimensional subspace $L$ is the space defined by the $d-1$ first
coordinates. Represent by $x_L$ the projection of each $x$ onto the $d-1$ first
components
and define $f:\R^{d-1} \rightarrow \R$ such that $f(x_L)$ is the length of
the segment in the intersection of $K$ and the line
$\{(x_L, y):~ y \in \R\}$ (see the top of Figure \ref{fig:squashing}). Formally:
$$f(x_L) = \int_{y \in \R} \mathbf{1}_K (x_L, y) dy\,{.}$$ 
We now argue that $f$ is concave. Given $x_L \in \R^{d-1}$ let $a_x, b_x$ be
such that $(x_L, a_x), (x_L, b_x) \in K$ and $f(x_L) = b_x - a_x$. Let $y_L,
a_y, b_y$ be defined analogously. To see that $f$ is concave, given
$0 < t_x, t_y < 1$ with $t_x + t_y = 1$ we have:
$(t_x x_L + t_y y_L, t_x a_x + t_y a_y)$ and 
$(t_x x_L + t_y y_L, t_x b_x + t_y b_y)$ are in $K$ by convexity, so: $$f(t_x x_L
+ t_y y_L) \leq (t_x b_x + t_y b_y) - (t_x a_x + t_y a_y) = t_x f(x_L) + t_y
  f(y_L)\,{,}$$ which allows us to define the \emph{squashed} version of $K$
  (depicted in the bottom of Figure \ref{fig:squashing}) as:
$$K' = \{(x_L, h):~ x_L \in \Pi_L K,~ 0 \leq h \leq f(x_L) \}\,{.}$$ By construction, $\vol(K') = \vol(K)$. 

\emph{Step 2: Conification.} We know by Lemma \ref{lemma:large_ball}
that $K$ contains a ball of diameter $\delta / d$ so there exists
$x_L$ such that $f(x_L) \geq h := \delta / d$.
Define then the cone $C$ to be the convex hull of $\{(x, 0):~ x \in \Pi_L
K\}$ and $(x_L, h)$. Such cone is a subset of $K'$, so $\vol(K) = \vol(K') \geq \vol(C)$. Since the volume of a $d$-dimensional cone is given by the volume of the base times the height divided by $d+1$,
$$ \vol(K) \geq \vol(C) = \frac{h}{d+1} \cdot \vol(\Pi_L K) = 
\frac{\delta}{d(d+1)} \cdot \vol(\Pi_L K)~.$$
\end{proofof}


\vspace{-.2cm}
\section{Analysis of the Projected Volume Algorithm}

We are now almost ready to analyze our algorithm. To do so, we first consider a version of Gr\"unbaum's Theorem which concerns cuts through the centroid of a cylindrified set. The set being cut is still the original set, but we focus on what happens to the volume of its projection onto the subspace of large directions. The proof of this lemma can be found in Appendix~\ref{appendix:A3}.

\begin{lemma}[Projected Gr\"unbaum]\label{lemma:projected_grunbaum}
  Let $K$ be a convex set contained in the ball of radius $1$, and let
  $S$ be a set of orthonormal
  vectors along which $w(K,s) \leq \delta \leq \frac{\epsilon^2}{16 d(d+1)^2}$,
  for all $s \in S$. Let $L$ be the
  subspace orthogonal to $S$, and let $\Pi_L$ be the projection operator onto that
  subspace.
  If $u$ is a direction along which $w(\Cyl(K,S),u) \geq \epsilon$, $z$ is the
  centroid of the cylindrified body $\Cyl(K,S)$, and
  $K_+ = \{x \in K:~ u^\top (x-z) \geq 0\}$, then:
  $$\vol(\Pi_L K_+) \leq \left( 1-  \frac{1}{e^2} \right) \cdot \vol(\Pi_L K)\,{,}$$
  where $\vol(\cdot)$ corresponds to the $(n-\abs{S})$-dimensional volume on the
  subspace $L$.
\end{lemma}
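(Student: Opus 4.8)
The plan is to relate the projected volume $\vol(\Pi_L K_+)$ to the full $(d)$-dimensional volume of $\Cyl(K,S)$ restricted to the appropriate halfspace, and then invoke the Approximate Gr\"unbaum Theorem (Lemma~\ref{lemma:approx_grunbaum}) on the cylindrified body. The key observation driving the argument is that, because cylindrification replaces the $S$-directions by a product of fixed intervals, the cylindrified body factors as a ``twisted product'': for the direction $u$ we will want to decompose $u = u_L + u_S$ into its components in $L=\span(L_t)$ and in $\span(S)$, and note that $\norm{u_S}$ is tiny since $w(\Cyl(K,S),v)$ is at most $\delta$ along every $v\in\span(S)$ while $w(\Cyl(K,S),u)\geq\epsilon$. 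More precisely, since $u$ is a unit vector and $w(\Cyl(K,S),u)=w(\Cyl(K,S),u_L)+\sum_i u_i\, w(\Cyl(K,S),s_i)$ (widths of a Minkowski sum add), we get $\norm{u_L}\, w(\Pi_L K, u_L/\norm{u_L}) \geq \epsilon - \abs{S}\,\delta \geq \epsilon/2$, so the ``large'' part of $u$ carries essentially all of the width.

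First I would set up the decomposition: write $\Cyl(K,S) = \Pi_L K \times \prod_{i=1}^{\abs{S}} [a_i,b_i]$ in the orthogonal coordinates given by $L$ and $S$, where $b_i - a_i = w(K,s_i) \leq \delta$. Then $\vol_d(\Cyl(K,S)) = \vol(\Pi_L K)\cdot\prod_i(b_i-a_i)$, and similarly for any slab $\{u^\top(x-z)\in[\alpha,\beta]\}$ the $d$-dimensional volume is controlled by the $L$-projection of that slab times $\prod_i(b_i-a_i)$. The centroid $z$ of $\Cyl(K,S)$ has $L$-component equal to the centroid of $\Pi_L K$ (by the product structure), which is the quantity that matters. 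Second, I would apply Lemma~\ref{lemma:approx_grunbaum} to $\Cyl(K,S)$ in the direction $u$: the hypothesis there requires the offset $\delta'$ to satisfy $\delta' \leq w(\Cyl(K,S),u)/(d+1)^2$, and since $w(\Cyl(K,S),u)\geq\epsilon$ we may take $\delta' = \epsilon/(d+1)^2$ (or any smaller value). This gives $\vol_d(\Cyl(K,S)\cap\{u^\top(x-z)\geq\delta'\}) \geq e^{-2}\vol_d(\Cyl(K,S))$, hence the complementary halfspace $\{u^\top(x-z)\leq\delta'\}$ has volume at most $(1-e^{-2})\vol_d(\Cyl(K,S))$.

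Third — and this is the crux — I must transfer this volume bound from $\Cyl(K,S)$ back to $K$ itself, i.e.\ show $\Pi_L K_+ \subseteq \Pi_L\big(\Cyl(K,S)\cap\{u^\top(x-z)\leq\delta'\}\big)$, up to the $\prod_i(b_i-a_i)$ factor cancelling on both sides. The point is that $K_+$ is defined by the halfspace $u^\top(x-z)\geq 0$, whereas I want the $\Cyl$ version to lie in $u^\top(x-z)\leq\delta'$ — these seem to point the wrong way, so the real content is: for a point $x\in K_+$, its projection $\pi_L(x)$, when lifted to $\Cyl(K,S)$ by choosing the $S$-coordinates of $z$ (which lies in the interval box since $z$'s $S$-component is the midpoint-ish centroid value, well inside $[a_i,b_i]$), gives a point $x'$ with $u^\top(x'-z) = u_L^\top(\pi_L(x) - \pi_L(z))$; and since $x\in K\subseteq$ ball of radius $1$ while $\norm{u_S}$ is small, $\abs{u^\top(x-z) - u_L^\top(\pi_L(x)-\pi_L(z))} = \abs{u_S^\top(x - z)} \leq 2\norm{u_S} \leq 2\abs{S}\delta/(\epsilon/2)\cdot(\ldots)$, which by the choice $\delta\leq\epsilon^2/(16d(d+1)^2)$ is at most $\delta'$. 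So every $x\in K_+$ lifts into the slab $u^\top(x'-z)\geq -\delta'$ inside $\Cyl(K,S)$, and after a symmetric argument (or by replacing $\delta'$ with $2\delta'$ and re-checking the hypothesis of Lemma~\ref{lemma:approx_grunbaum}, which still holds since $2\delta' = 2\epsilon/(d+1)^2 \leq w(\Cyl,u)/(d+1)$ fails — so I'd instead run Lemma~\ref{lemma:approx_grunbaum} with offset $3\delta'$ or just pick $\delta'$ generously as $\epsilon/(4(d+1)^2)$ from the start) I conclude $\Pi_L K_+ \subseteq \Pi_L(\Cyl(K,S)\cap\{u^\top(x-z)\geq-\delta'\})$, whose volume is at most $(1-e^{-2})\vol(\Pi_L\Cyl(K,S)) = (1-e^{-2})\vol(\Pi_L K)$ since $\Pi_L\Cyl(K,S)=\Pi_L K$.

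The main obstacle I anticipate is bookkeeping the constants so that the perturbation $\abs{u_S^\top(x-z)}$ coming from the small directions stays below the offset $\delta'$ that Lemma~\ref{lemma:approx_grunbaum} tolerates, while simultaneously $\delta'$ stays below $w(\Cyl(K,S),u)/(d+1)^2$; this is exactly what the hypothesis $\delta \leq \epsilon^2/(16d(d+1)^2)$ is engineered to guarantee (note $\norm{u_S}\lesssim \abs{S}\delta/\epsilon \leq d\delta/\epsilon$, and we need this times the radius-$1$ diameter to be $\lesssim \epsilon/(d+1)^2$, giving $\delta \lesssim \epsilon^2/(d(d+1)^2)$). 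The other mild subtlety is verifying that the $S$-component of the centroid $z$ of $\Cyl(K,S)$ genuinely lies in $[a_i,b_i]$ for each $i$ so that the lift is legitimate — but this is immediate since $\Cyl(K,S)$ contains points with every $s_i$-coordinate in $[a_i,b_i]$ and the centroid of a convex body lies in the body. Everything else is routine Fubini/product-measure manipulation.
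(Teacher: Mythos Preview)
Your proposal has the right ingredients --- the width decomposition of $\Cyl(K,S)$, the observation that $\Pi_L z$ is the centroid of $K_L:=\Pi_L K$, the perturbation bound $\abs{u_S^\top(x_S-z_S)}\lesssim d\delta$, and the invocation of the Approximate Gr\"unbaum Lemma --- but the specific route you take has a genuine gap in the ``Fubini/product-measure'' step. From the single-point lift you obtain the inclusion $\Pi_L K_+ \subseteq \Pi_L\big(\Cyl(K,S)\cap H\big)$ with $H=\{u^\top(x-z)\geq -\delta'\}$, and from Approximate Gr\"unbaum (applied with direction $-u$) you get the $d$-dimensional bound $\vol_d(\Cyl\cap H)\leq (1-e^{-2})\vol_d(\Cyl)$. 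But these two facts together do \emph{not} give $\vol(\Pi_L(\Cyl\cap H))\leq (1-e^{-2})\vol(K_L)$: Fubini only yields $\vol_d(\Cyl\cap H)\leq \vol(\Pi_L(\Cyl\cap H))\cdot\prod_i(b_i-a_i)$, which is the wrong direction for cancelling the box factor. When the halfspace is tilted relative to $L$, the fibers of $\Cyl\cap H$ over points in its $L$-projection are not full boxes, so equality fails.

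The paper avoids this detour entirely by applying Lemma~\ref{lemma:approx_grunbaum} directly to $K_L$ in the direction $\hat u_L:=u_L/\norm{u_L}$, using precisely your observation that $\Pi_L z$ is the centroid of $K_L$. Concretely, from $x\in K_+$ one gets $u_L^\top(x_L - z_L)\geq -u_S^\top(x_S-z_S)\geq -k\delta$, hence $\hat u_L^\top x_L \geq -k\delta/\norm{u_L}\geq -\epsilon/(4(d+1)^2)$; combining this with the lower bound $w(K_L,\hat u_L)\geq \epsilon/4$ (your Step~1 argument) lets Approximate Gr\"unbaum be applied to $K_L$ itself, and the projected volume bound follows immediately with no product-measure manipulation. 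Your approach can be repaired --- lift each $y\in\Pi_L K_+$ to the \emph{entire} fiber $\{y\}\times B$ rather than the single point $(y,z_S)$, check that this fiber lies in $\Cyl\cap H$ once $\delta'\geq d\delta$, and then Fubini goes the right way --- but that is strictly more work than the paper's direct argument.
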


We now employ the tools we have developed to analyze the regret
of the Projected Volume algorithm. As outlined in Section
\ref{sec:centroid-algorithm}, we will keep in each iteration a convex set $K_t$
of candidate $\theta$ vectors and we will keep a orthonormal basis $S_t$ of
directions for which $K_t$ is small. If $L_t$ is the subspace of directions that
are orthogonal to $S_t$ then our plan is to bound the potential
$\Phi_t = \vol(\Pi_{L_t} K_t)$.
Notice that if $L_t$ is empty, then $S_t$ must be an orthonormal basis such that
$w(K_t, s) \leq \delta, \forall s \in S_t$. In particular, for every unit
vector $u$ and any two $x,y \in K_t$ we must have:
$$u^\top (x-y) = \sum_{s \in S_t} u^\top s \cdot s^\top (x-y) \leq d \delta\,{.} $$
If $\delta \leq \epsilon / d$, then the algorithm will be done once $L_t$
becomes empty. Our goal then is to bound how many iterations can we have where
$L_t$ is non-empty. First we provide a lower bound on the potential. We will use in this section the symbol $\gamma_d$ to denote the volume of the
$d$-dimensional unit ball.
The following loose bound on $\gamma_d$ will be sufficient for our needs: $\Omega(d^{-d})
\leq \gamma_d \leq O(1)$. 

\begin{lemma}\label{lemma:phi-lower} If $L_t$ is non-empty then $\Phi_t \geq
\Omega(\frac{\delta}{d})^{2d}$.
\end{lemma}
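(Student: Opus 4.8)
The plan is to give a lower bound on the potential $\Phi_t = \vol(\Pi_{L_t} K_t)$ by exhibiting a small ball that is contained in $K_t$, and then projecting that ball into $L_t$. The key observation is that as long as $L_t$ is non-empty, the knowledge set $K_t$ has width greater than $\delta$ along every direction in $L_t$, and also (by the construction of $S_t$) width at most $\delta$ along the directions in $S_t$ — but more importantly, the update rule only ever intersects $K_t$ with halfspaces whose normals are unit vectors, starting from the unit ball $K_0$. The subtlety is that $K_t$ could be thin along directions in $S_t$, so we cannot directly invoke Lemma~\ref{lemma:large_ball} on $K_t$ itself (its minimal width could be as small as $0$, or at least we have no lower bound below $\delta$ only in the $L_t$ directions). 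Instead I would work with the projection $\Pi_{L_t} K_t$ directly, which is a convex body living in the $\dim(L_t)$-dimensional space $L_t$, and which by definition of $S_t$ has width $> \delta$ along every unit vector of $L_t$.

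First I would argue that $\Pi_{L_t} K_t$ has width at least $\delta$ in every direction of $L_t$: for any unit $v \in L_t$ we have $w(\Pi_{L_t} K_t, v) = w(K_t, v) > \delta$ since $v \perp S_t$. Then, applying Lemma~\ref{lemma:large_ball} inside the subspace $L_t$ (of dimension $k := \dim L_t \le d$), the body $\Pi_{L_t} K_t$ contains a $k$-dimensional ball of diameter $\delta / k \ge \delta/d$, hence of radius at least $\delta/(2d)$. The volume of such a ball is $\gamma_k \cdot (\delta/(2d))^k$, where $\gamma_k$ is the volume of the $k$-dimensional unit ball. Using the stated loose bound $\gamma_k \ge \Omega(k^{-k}) \ge \Omega(d^{-d})$ and $k \le d$, this gives
$$\Phi_t = \vol(\Pi_{L_t} K_t) \ge \gamma_k \left(\frac{\delta}{2d}\right)^{k} \ge \Omega(d^{-d}) \cdot \left(\frac{\delta}{2d}\right)^{d} = \Omega\!\left(\frac{\delta}{d}\right)^{2d},$$
where in the last step I absorb the constants and the $d^{-d}$ factor into the exponent, using that $\delta/d < 1$ so decreasing the base and keeping exponent $2d$ only weakens the bound (one should check $(\delta/(2d))^d \cdot d^{-d} = (\delta/(2d^2))^d \ge (\delta/d)^{2d}$ for $\delta$ small, which holds since $\delta/(2d^2) \ge (\delta/d)^2$ iff $1/(2d^2) \ge \delta/d^2$ iff $\delta \le 1/2$).

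The main obstacle — really the only place one must be careful — is the dimension bookkeeping: $\Phi_t$ is a $k$-dimensional volume with $k = \dim L_t$ varying with $t$, so the bound must be uniform over all $k$ from $1$ to $d$, which is why I state it in the form $\Omega(\delta/d)^{2d}$ with a fixed exponent $2d$ rather than $k$ or $\dim L_t$. One must also confirm that Lemma~\ref{lemma:large_ball} applies to $\Pi_{L_t} K_t$: this is fine because $\Pi_{L_t} K_t$ is a bounded convex body (projection of a bounded convex set) living in the genuine $k$-dimensional ambient space $L_t$, and its width in every direction of that ambient space is $> \delta$, which is exactly the hypothesis of the lemma. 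Everything else is a routine estimate on ball volumes.
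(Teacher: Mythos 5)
Your proof is correct and follows essentially the same route as the paper's: project onto $L_t$, observe that every direction in $L_t$ still has width greater than $\delta$ after projection, apply the John-theorem consequence (Lemma~\ref{lemma:large_ball}) inside the $k$-dimensional subspace $L_t$ to find a small ball, and then bound the volume using $\gamma_k$. The only differences are minor bookkeeping details (you keep an extra factor of $2$ from radius vs.\ diameter and absorb it, whereas the paper's statement of the proof loosely says ``radius $\delta/k$''), and you spell out the final arithmetic inequality more carefully than the paper does.
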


\begin{proof}
Let $K_L = \Pi_{L_t} K_t$ and $k$ be the dimension of $L$.
Then $w(K_L, u) \geq \delta$ for all $u \in L$ implies by Lemma
\ref{lemma:large_ball} that $K_L$ contains a ball of radius $\frac{\delta}{k}$,
so $\vol(K_L) \geq \gamma_k \left( \frac{\delta}{k} \right)^k$. Since
$\left( \frac{\delta}{k} \right)^k \geq \left( \frac{\delta}{d} \right)^d$ and
$\gamma_k \geq \Omega(\frac{1}{d})^d$ we have that
$\Phi_t \geq \Omega(\frac{\delta}{d})^{2d}$.
\end{proof}

Now we will give an upper bound on $\Phi_t$ as a function of $t$. Together with
the previous lower bound, we will get a bound on the number of iterations that
can happen before $L_t$ becomes empty. The main ingredient will be a
Gr\"{u}nbaum-type bound on the volume of the projection that is specifically
tailored to our application. For this purpose, we use Lemma~\ref{lemma:projected_grunbaum}, which will specifically address the issue
discussed in Figure \ref{fig:center_projection}. We are now ready for the proof of our main theorem:

\begin{proofof}{Theorem \ref{thm:centroid}}
Our goal is to bound the number of steps for which the algorithm guesses with
at least $\epsilon$ error. Let $R_t$ be the total regret after $t$ steps. Let $N_t$ be $1$ if $w(\Cyl(K_t, L_t), u_t) > \epsilon$ and zero otherwise.
Since $\abs{u_t^\top(z_t - \theta)} \leq \epsilon$ whenever  $w(\Cyl(K_t,
  L_t), u_t) \leq \epsilon$, $R_t \leq \sum_{\tau=1}^t N_\tau$.

Let $K_t$ and $L_t$ be the respective set and subspace after $t$ iterations.
Setting $\delta \leq \frac{\epsilon^2}{16 d(d+1)^2}$ we can apply
Lemma \ref{lemma:projected_grunbaum} directly to obtain that:
  $$\vol(\Pi_{L_t} K_{t+1}) \leq \left( 1 - \frac{1}{e^2} \right)^{N_t} \vol(\Pi_{L_t} K_t)\,{.} $$
If $L_{t+1} = L_t$, then $\vol(\Pi_{L_{t+1}} K_{t+1}) = \vol(\Pi_{L_t} K_{t+1})$.
If we add one new direction $v \in L_t$ to $S$, then we
replace $K_L = \Pi_{L_t} K_t$ by its projection on the subspace $L' = \{x \in L_t:~
v^\top x = 0 \}$. Since $w(K_t, u) \geq \delta, \forall u \in L_t$, then by
Theorem \ref{thm:directional_grunbaum} after we cut $K_t$ we have $w(K_{t+1},u)
\geq \frac{\delta}{d+1}$, so applying the Cylindrification Lemma (Lemma
\ref{lemma:cylindrification}) we obtain:
$$\vol(\Pi_{L'} K_{t+1}) \leq \frac{d(d+1)^2}{\delta} \vol(\Pi_{L_t} K_{t+1})\,{.}$$
If we need to add $r$ new directions to $L_t$ the volume can blow up by at most
$\left( \frac{d(d+1)^2}{\delta} \right)^r$. In particular, since the initial
volume is bounded by $O(1)$, then:
  $$\Omega\left(\frac{\delta}{d}\right)^{2d} \leq
  \Phi_t = \vol(\Pi_{L_t} K_t) \leq O(1) \cdot
\left( \frac{d(d+1)^2}{\delta} \right)^d \cdot \left( 1 - \frac{1}{e^2}
  \right)^{\sum_{\tau=1}^t N_\tau}\,{,}$$
which means that:
\[R_t \leq \sum_{\tau=1}^t N_\tau \leq O\left(d \log \frac{d}{\delta} \right) = O \left(d \log
\frac{d}{\epsilon} \right)~.\]
\end{proofof}

\vspace{-.2cm}
\section{Why Cylindrification?}\label{sec:whycyl}

At the heart of our algorithm lies the simple idea that we should cut a constant fraction of the volume at each iteration if we want to achieve a $\tilde O(d \log(1/\epsilon))$ regret bound. Our algorithm, however, is quite a bit more complex than that. It also keeps a set of `small' directions $S_t$ and it cuts through the center of a cylindrified version of the knowledge set $K_t$ at each iteration. An inquisitive reader might wonder whether this additional complexity is really necessary.  In this section we argue that cylindrification is actually
necessary to obtain our near-optimal regret bound. We prove there exists an instance where the algorithm that only cuts through
the center of the knowledge set (without cylindrifying it first)
incurs $\Omega(d^2 \log(1/\epsilon\sqrt{d}))$ regret.

Formally, consider the algorithm that only keeps $K_t$ and in each iteration
guesses $x_t = u_t^\top z_t$ where $z_t = \frac{1}{\vol(K_t)} \int_{K_t} x dx$
and updates $K_t$ to $K_t^+$ or $K_t^-$. We call this procedure the Centroid algorithm. In order to construct an instance with
$\Omega(d^2 \log(1/\epsilon\sqrt{d}))$ regret for this algorithm, we first define the
following set. Given $s = (s_1, \hdots, s_k)$ with $s_i > 0$ for all $i$, define:
$$\textstyle \Delta(s) = \{x \in \R^k_+:~ \sum_i \frac{x_i}{s_i} \leq 1 \} =
\hull(\{0, s_1 e_1, \hdots, s_k e_k \}),$$
where $\hull(\cdot)$ denotes the convex hull of a set of points.

\begin{lemma}\label{lemma:centroid_simplex}
The centroid of $\Delta(s)$ is given by $\frac{s}{k+1}$.
\end{lemma}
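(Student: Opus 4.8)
The plan is to compute the centroid of $\Delta(s)$ directly from the definition, $z = \frac{1}{\vol(\Delta(s))}\int_{\Delta(s)} x\,dx$, by exploiting the fact that $\Delta(s)$ is an affine image of the standard simplex. Let $\Delta = \Delta(1,\dots,1) = \{x \in \R^k_+ : \sum_i x_i \le 1\}$ be the standard corner simplex, and let $T$ be the linear map $T(x)_i = s_i x_i$, i.e. the diagonal map $\diag(s_1,\dots,s_k)$. Then $\Delta(s) = T(\Delta)$, since $\sum_i x_i/s_i \le 1$ is exactly the condition that $T^{-1}$ of the point lies in $\Delta$. The centroid is affine-equivariant: for an invertible linear map $T$ and any convex body $K$, the change of variables $x = T(y)$ in both the numerator and denominator of the centroid integral contributes the same Jacobian factor $\abs{\det T}$, which cancels, giving $z(T(K)) = T(z(K))$. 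Hence it suffices to show that the centroid of the standard simplex $\Delta$ is $\frac{1}{k+1}(1,\dots,1)$, and then apply $T$ to get $z(\Delta(s)) = \left(\frac{s_1}{k+1},\dots,\frac{s_k}{k+1}\right) = \frac{s}{k+1}$.

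For the standard simplex, the centroid is $\frac{1}{k+1}$ in every coordinate by symmetry: $\Delta$ is invariant under permuting the coordinates, so all coordinates of its centroid are equal to some common value $c$; moreover, $\Delta$ together with the $k$ reflected copies obtained by sending one variable $x_j \mapsto 1 - \sum_{i} x_i$ (more precisely, the $k+1$ faces/vertices play symmetric roles) tile — or one can simply note $\Delta$ is the convex hull of $0, e_1, \dots, e_k$, the average of whose vertices is $\frac{1}{k+1}\sum_i e_i$. The cleanest rigorous route is the barycenter-of-vertices fact for simplices: the centroid of a $k$-dimensional simplex equals the average of its $k+1$ vertices. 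One can cite this as standard, or prove it quickly: for the standard simplex, $\int_\Delta x_1 \, dx = \frac{1}{(k+1)!}$ and $\vol(\Delta) = \frac{1}{k!}$ (both computed from the Dirichlet integral $\int_\Delta x_1^{a_1}\cdots x_k^{a_k}\,dx = \frac{\prod \Gamma(a_i+1)}{\Gamma(k+1+\sum a_i)}$), whose ratio is $\frac{1}{k+1}$, and symmetry handles the other coordinates.

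I do not expect any genuine obstacle here — the only things to be careful about are stating the affine-equivariance of the centroid correctly (the Jacobians must cancel, which requires $T$ invertible, true since all $s_i > 0$) and making sure the simplex $\Delta(s)$ is indeed full-dimensional in $\R^k$ so that "centroid" is well-defined in the intended sense. The barycenter-of-a-simplex computation is entirely routine, so in the write-up I would state the standard-simplex centroid as a known fact (or give the two-line Dirichlet-integral computation) and spend the bulk of the argument on the $\Delta(s) = \diag(s)\cdot\Delta$ observation and the cancellation of Jacobians.
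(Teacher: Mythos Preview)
Your argument is correct. The affine-equivariance observation $\Delta(s) = \diag(s)\cdot\Delta$ together with the standard fact that the centroid of a simplex is the average of its vertices gives the result immediately; the Jacobian cancellation and the assumption $s_i > 0$ are handled properly.

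The paper takes a different, more computational route: it fixes a coordinate $i$, slices $\Delta(s)$ by the hyperplanes $x_i = a$, observes that each slice is a scaled copy $\Delta\!\left(\frac{s_{-i}}{1-a/s_i}\right)$, and integrates $x_i$ against the $(k-1)$-dimensional slice volume to get $z_i$ directly. Along the way it also derives the recursion $\vol(\Delta(s)) = \vol(\Delta(s_{-i}))\cdot \frac{s_i}{k}$. Your approach is shorter and more conceptual, since it reduces everything to one known fact about the standard simplex; the paper's approach is self-contained and yields the volume formula as a byproduct, though that formula is not actually reused elsewhere. Either would be perfectly acceptable here. One stylistic suggestion: in the write-up, drop the digression about reflected copies and tiling and go straight to the vertex-average (or Dirichlet-integral) justification for the standard simplex.
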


We now consider how the Centroid algorithm performs on a particular set, when
nature selects a specific sequence of directions. The set we start from is the
product between a $(d-k)$-dimensional hypercube and a $k$-dimensional set
$\Delta(s)$, where only the $k^{th}$ entry of $s$ is significantly larger than
$\epsilon$. We now argue that nature might require us to take $\Omega(k \log
(1/\epsilon))$ into a similarly structured set with $k$ replaced by $k+1$.
Repeating this argument $d$ times will lead to our negative conclusion on the
performance of the Centroid algorithm.

\begin{lemma}\label{lemma:step_counterexample}
  Let $1 \leq k < d$, $s \in \R^{k}$ with $0 \leq s_i \leq \epsilon$ for $i <
  k$, $\frac{1}{4} \leq s_k \leq 1$. If $$K = \Delta(y)  \times [0,1]^{d-k}$$
  then there is a sequence of $\Omega(k \log (1/\epsilon))$
  directions $u_t$ such that the Centroid algorithm incurs $\Omega(k \log (\frac{1}{
  \epsilon}))$ regret and by the end of the   sequence, the knowledge set has the form:
  $$K' = \Delta(s') \times [0,1]^{d-k-1}$$
  where $s' \in \R^{k+1}$, $0 \leq s'_i \leq \epsilon$ for $i < k+1$,
  $\frac{1}{4} \leq s'_i \leq  1$.
\end{lemma}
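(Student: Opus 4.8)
The plan is to split the adversary's query sequence into two phases: a long \emph{Phase 1} that is responsible for all of the $\Omega(k\log(1/\epsilon))$ regret, and a short \emph{Phase 2} that reshapes the knowledge set into the exact form required, at no asymptotic cost. Two elementary facts are used throughout. First, the centroid of a Cartesian product is the product of the centroids, so by Lemma~\ref{lemma:centroid_simplex} the centroid of $\Delta(s)\times[0,1]^{d-k}$ is $(\tfrac{s}{k+1},\tfrac12,\dots,\tfrac12)$. Second, simplices are self-similar under axis-parallel cuts: for any coordinate $j$ and $t\in[0,1]$ one has $\Delta(s)\cap\{x_j\ge ts_j\}=ts_je_j+(1-t)\Delta(s)$, a translated rescaling of $\Delta(s)$ by $1-t$ that keeps the extreme vertex $s_je_j$ fixed.

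\emph{Phase 1.} Nature repeatedly presents the direction $u_t=e_k$ (the ``large'' coordinate of the current simplex factor). If the knowledge set is $\Delta(\sigma)\times[0,1]^{d-k}$ with $x_k$-range $[a,s_k]$, the Centroid algorithm guesses $x_t=a+\tfrac{s_k-a}{k+1}$ (the $e_k$-coordinate of the centroid), and nature reports ``$u_t^\top\theta\ge x_t$'', i.e.\ keeps the half-space $\{x_k\ge x_t\}$; by the self-similarity fact this replaces the simplex factor by a copy scaled by $k/(k+1)$, positioned so the extreme $e_k$-vertex stays put, so the knowledge set remains of the form $\Delta(\sigma')\times[0,1]^{d-k}$ with $\sigma'=\tfrac{k}{k+1}\sigma$. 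Nature commits once and for all to a target $\theta$ whose $k$-th coordinate equals the (fixed) extreme coordinate $s_k$ of the original simplex and whose coordinates $1,\dots,k-1$ are $0$; this point lies in every knowledge set produced in Phase 1. A direct computation gives $|x_t-u_t^\top\theta|=s_k-x_t=(\tfrac{k}{k+1})^{t+1}s_k$, which exceeds $\epsilon$ for every $t$ up to $\Omega\!\big(k\log(s_k/\epsilon)\big)=\Omega(k\log(1/\epsilon))$ since $s_k\ge\tfrac14$. Thus Phase 1 forces $\Omega(k\log(1/\epsilon))$ mistakes; running it a few more (non-mistake) steps drives the $e_k$-width below $\epsilon$, leaving a set $\Delta(\sigma)\times[0,1]^{d-k}$ in which every entry of $\sigma$ is $\le\epsilon$ (indeed arbitrarily small, if we keep going).

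\emph{Phase 2.} It remains to convert this thin prism — viewed on coordinates $1,\dots,k+1$, where $k+1$ is the first unit-cube factor — into a thin simplex $\Delta(s')$ with $s'_i=\Theta(\sigma_i)\le\epsilon$ for $i\le k$ and $s'_{k+1}\in[1/4,1]$; coordinates $k+2,\dots,d$ are never queried, so the set stays $(\text{body on coords }1..k+1)\times[0,1]^{d-k-1}$. Nature presents directions proportional to $(\tfrac1{\sigma_1},\dots,\tfrac1{\sigma_k},c)$ for appropriate slopes $c$ and always keeps the half-space containing the spine $\{0\}^k\times[0,1]$ (updating $\theta$ so that $\theta_{k+1}$ is the extreme $e_{k+1}$-coordinate and $\theta_i\approx0$ for $i\le k$); each such centroid cut tilts the ``top facet'' of the body at $x_{k+1}=1$ toward the $e_{k+1}$-axis. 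One checks that these cuts keep the body inside a low-parameter family of frustum-/cone-plus-cylinder bodies closed under the operation, and that after finitely many of them it is \emph{exactly} a simplex $\Delta(s')$ of the required shape — for $k=1$ a single centroid cut already sends the square $[0,\sigma_1]\times[0,1]$ to the triangle $\Delta((\sigma_1,1))$. Because the queried directions here are dominated by their huge $1/\sigma_i$ components, their widths are $O(\epsilon)$, so Phase 2 adds at most $O(k)$ further mistakes, which is absorbed into the $\Omega(k\log(1/\epsilon))$ bound. Combining the two phases proves the lemma.

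\emph{Expected crux.} Phase 1 is clean. The delicate part is Phase 2: we control only the query direction and the reported half-space side, not \emph{where} the cut is made — it always passes through the current centroid — so the directions must be chosen so that the centroid cuts remain inside a family of bodies closed under the operation and converge precisely to a set of the form $\Delta(s')\times[0,1]^{d-k-1}$, with the large coordinate $s'_{k+1}$ pinned into $[1/4,1]$. Making the intermediate frustum bookkeeping close up for general $k$, and verifying that nature's side choices across both phases are simultaneously consistent with a single fixed $\theta$ realizing all the Phase-1 mistakes, is where the real care is required.
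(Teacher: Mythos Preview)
Your Phase 1 is exactly the paper's Step 1, and your regret computation there is correct. The gap is entirely in Phase 2.

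You do not actually construct the Phase-2 sequence for $k\ge 2$. You assert that tilted cuts in directions $(\tfrac{1}{\sigma_1},\dots,\tfrac{1}{\sigma_k},c)$ ``keep the body inside a low-parameter family of frustum-/cone-plus-cylinder bodies closed under the operation'' and eventually produce an exact simplex, but you neither specify the slopes $c$, nor track the centroids of the intermediate bodies, nor verify closure of any family. Your own crux paragraph concedes this. For $k=1$ a single tilted centroid cut indeed gives the triangle; for $k\ge 2$ it does not, and repeated tilted cuts through moving centroids of non-simplex bodies are genuinely hard to control. The paper avoids this entirely with a two-move construction: \emph{one} tilted cut in direction $v=\bigl(\tfrac{k+1}{2k\hat s_1},\dots,\tfrac{k+1}{2k\hat s_k},1,0,\dots,0\bigr)$, chosen so that $v^\top z=1$ at the centroid, which yields (exactly, not approximately) a body that for $x_{k+1}\ge y:=1-\tfrac{k+1}{2k}$ is a translated $\Delta(\hat s,1-y)\times[0,1]^{d-k-1}$ and for $x_{k+1}\le y$ is the prism $\Delta(\hat s)\times[0,y]\times[0,1]^{d-k-1}$. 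Then axis-aligned cuts along $e_{k+1}$ (always keeping $K_+$) chop off the prism; since the cross-sectional volume is non-increasing in $x_{k+1}$, each such centroid cut keeps at least half the $e_{k+1}$-width, so the surviving simplex has $s'_{k+1}\ge\tfrac{1-y}{2}\ge\tfrac14$. This is the bookkeeping you are missing.

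Separately, you cannot ``update $\theta$'' between phases; nature fixes $\theta$ once. The paper's device is to leave $\theta$ unspecified throughout and, at the very end, take it to be any point in the final knowledge set. Consistency of all the side choices is then automatic (each $K_{t+1}$ contains $\theta$ by construction), and the Phase-1 mistakes follow because $\theta$ lies in the final tiny $e_k$-range near the apex while the centroid guesses are $\Omega(\sigma_k^{(t)})$ below it.
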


\begin{figure}[h]
\centering
\begin{tikzpicture}[line join=round]
\filldraw[fill opacity=0.8,draw=red,fill=white](4.761,-1.539)--(8,0)--(5.403,-1.539)--cycle;
\filldraw[fill opacity=0.8,draw=red,fill=white](4.761,-1.539)--(8,0)--(4.499,-.599)--cycle;
\filldraw[draw=red,fill opacity=0.8,fill=white](5.403,-1.539)--(8,0)--(4.499,-.599)--cycle;
\filldraw[fill opacity=0.8,fill=white](4,0)--(3.738,.94)--(-.581,-1.112)--(-.319,-2.052)--cycle;
\filldraw[fill opacity=0.8,fill=white](4,0)--(-.319,-2.052)--(.324,-2.052)--(4.643,0)--cycle;
\filldraw(1.966,-.716) circle (1pt);
\filldraw[fill opacity=0.8,fill=white](4,0)--(4.643,0)--(3.738,.94)--cycle;
\filldraw[fill opacity=0.8,fill=white](0,0)--(-.262,.94)--(-4.581,-1.112)--(-4.319,-2.052)--cycle;
\filldraw[fill opacity=0.8,fill=white](0,0)--(-4.319,-2.052)--(-3.676,-2.052)--(.643,0)--cycle;
\filldraw[fill opacity=0.8,fill=white](0,0)--(.643,0)--(-.262,.94)--cycle;
\filldraw[fill opacity=0.8,fill=white](4.643,0)--(.324,-2.052)--(-.581,-1.112)--(3.738,.94)--cycle;
\filldraw[fill opacity=0.8,fill=white](.643,0)--(-3.676,-2.052)--(-4.581,-1.112)--(-.262,.94)--cycle;
\filldraw[fill opacity=0.8,draw=red,fill=white](4.761,-1.539)--(3.681,-2.052)--(4.324,-2.052)--(5.403,-1.539)--cycle;
\filldraw[fill opacity=0.8,draw=red,fill=white](4.761,-1.539)--(4.499,-.599)--(3.419,-1.112)--(3.681,-2.052)--cycle;
\filldraw[fill opacity=0.8,draw=red,fill=white](5.403,-1.539)--(4.324,-2.052)--(3.419,-1.112)--(4.499,-.599)--cycle;
\filldraw[fill opacity=0.8,draw=red,fill=white](3.681,-2.052)--(3.419,-1.112)--(4.324,-2.052)--cycle;
\filldraw[fill opacity=0.8,fill=white](-.319,-2.052)--(-.581,-1.112)--(.324,-2.052)--cycle;
\filldraw[fill opacity=0.8,fill=white](-4.319,-2.052)--(-4.581,-1.112)--(-3.676,-2.052)--cycle;
\fill[black,font=\footnotesize]
                (-4.663,-1.742) node {$s_1$}
								(-4.041,-2.287) node {$s_2$}
								(-2.474,.102) node {$1$};
\filldraw[draw=red,fill opacity=0,fill=white](1.403,-1.539)--(4,0)--(.499,-.599)--cycle;
\filldraw(1.966,-.716) circle (1pt);
\draw[arrows=->](2.357,-.401)--(2.57,-.23);
\draw[arrows=-](1.966,-.716)--(1.972,-.711);
\draw[arrows=-](1.972,-.711)--(2.357,-.401);
\end{tikzpicture}
\caption{Illustration of Step 2 in the proof of Lemma
\ref{lemma:step_counterexample} for $k=2$ and $d=3$.}\label{fig:lower_1}
\end{figure}
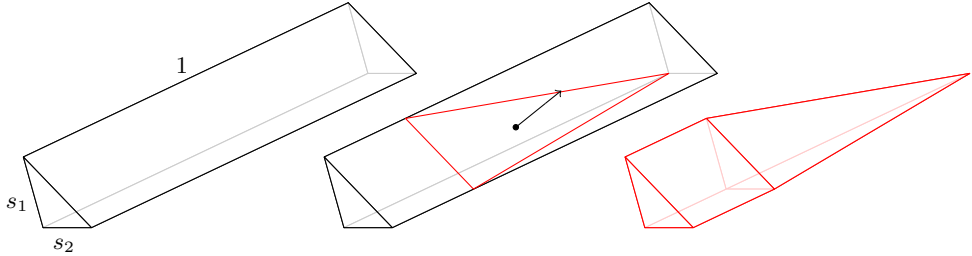

\begin{proof}[Proof sketch]
  Starting from $K$, as a first step we select $\Omega(k
  \log(\frac{1}{\epsilon}))$ vectors in the direction $e_k$ which cause the side
  of the $k$-th side of the simplex $\Delta(s)$ to reduce to $\epsilon$ getting
  one unit of regret in each step. At the end of this step we are left with the
  situation illustrated in Figure \ref{lemma:step_counterexample}. In step $2$
  we choose a direction slightly bent towards  $e_{k+1}$ to carve a $(k+1)$-dimensional simplex out of $K$. The resulting
  shape will be, as depicted in Figure \ref{lemma:step_counterexample}, only
  partially what we want. In the third step we select more directions along
  $e_{k+1}$ to remove the `leftover' and keep only the part corresponding to the
  $(k+1)$-dimensional simplex. A complete proof is provided in Appendix
  \ref{appendix:proof_step_counterexample}.
\end{proof}

This is the main ingredient necessary to show that the algorithm without cylindrification
can incur $\Omega(d^2 \log(1/\epsilon\sqrt{d}))$ regret.

\begin{theorem}\label{thm:counter-example}
The algorithm that always chooses $x_t = u_t^\top z_t$ where $z_t$ is the
centroid of $K_t$ can incur $\Omega(d^2 \log (1/\epsilon\sqrt{d}))$ regret.
\end{theorem}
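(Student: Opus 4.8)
The plan is to bootstrap \lemmaref{lemma:step_counterexample} across all scales $k = 1, 2, \dots, d-1$ to accumulate regret. We start nature off with a knowledge set of the form $K = \Delta(s) \times [0,1]^{d-1}$ where $s = (s_1) \in \R^1$ with $\tfrac14 \le s_1 \le 1$ — this is just an interval times a hypercube, and since the full set sits inside the unit ball after rescaling, it is a legitimate initial knowledge set (or, invoking the footnote in Section 2 that $K_0$ may be any convex body inside the unit ball, we may simply take $K_0$ to be this set). We then apply \lemmaref{lemma:step_counterexample} with $k = 1$: nature feeds the $\Omega(\log(1/\epsilon))$ directions guaranteed by the lemma, the Centroid algorithm incurs $\Omega(\log(1/\epsilon))$ regret (I will suppress the $\sqrt{d}$ inside the log for now and reinstate it at the end), and the knowledge set is transformed into $\Delta(s') \times [0,1]^{d-2}$ with $s' \in \R^2$ and each coordinate of $s'$ in $[\tfrac14, 1]$ up to the $\epsilon$-small ones. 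Crucially, this new set satisfies the hypotheses of \lemmaref{lemma:step_counterexample} with $k$ replaced by $k+1 = 2$, so we may iterate.

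The key step is this induction: for each $k$ from $1$ to $d-1$, \lemmaref{lemma:step_counterexample} hands us a phase costing $\Omega(k \log(1/\epsilon))$ regret and leaving us in a set ready for the $(k+1)$-st phase. Summing the regret over all phases gives
\[
\sum_{k=1}^{d-1} \Omega\!\left(k \log \tfrac{1}{\epsilon}\right) = \Omega\!\left(d^2 \log \tfrac{1}{\epsilon}\right).
\]
One subtlety is that \lemmaref{lemma:step_counterexample} as stated describes the evolution of the knowledge set as $K \to K'$, but the Centroid algorithm actually maintains its own knowledge set; I need to check that the $K_t$ the algorithm maintains is exactly the one the lemma tracks — this is automatic, since the lemma is precisely a statement about how the Centroid update rule $K_t \mapsto K_t^\pm$ acts on sets of the given product form, and nature's reported feedback is consistent with any $\theta$ in the shrinking set. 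A second subtlety is scaling: the product set $\Delta(s) \times [0,1]^{d-k}$ has diameter $\Theta(\sqrt d)$, so to fit inside the unit ball we rescale by $\Theta(1/\sqrt d)$, which shrinks $\epsilon$ to $\epsilon \sqrt d$ in the relevant ratios — this is exactly where the $\log(1/\epsilon\sqrt d)$ in the theorem statement comes from, matching the form of the lower bound in \propositionref{prop:lower-bound}.

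The main obstacle is entirely contained in \lemmaref{lemma:step_counterexample}, whose full proof is deferred to the appendix; granting that lemma, the theorem is a short telescoping argument. Within the lemma the delicate part (Step 2 in its proof sketch, illustrated in \figureref{fig:lower_1}) is choosing a cut direction "slightly bent toward $e_{k+1}$" so that cutting through the centroid of $K_t$ — computed via \lemmaref{lemma:centroid_simplex} for the simplex factor — carves out a genuinely $(k+1)$-dimensional simplex rather than merely thinning an existing dimension, together with Step 3's cleanup cuts along $e_{k+1}$ that discard the leftover piece while preserving the simplex structure and the bounds $\tfrac14 \le s'_i \le 1$. Here one must track that the Centroid algorithm, oblivious to directional widths, keeps cutting through centroids that barely move the large direction, so each cut costs a unit of regret; the count $\Omega(k\log(1/\epsilon))$ for the phase follows because the simplex's $k$-th side must shrink from $\Theta(1)$ to $\epsilon$ while each centroid cut removes only a $1/(k+1)$-ish fraction of that width (by \lemmaref{lemma:centroid_simplex}, the centroid of $\Delta(s)$ sits at height $s_k/(k+1)$, so cutting there leaves a $\tfrac{k}{k+1}$ fraction of the width, requiring $\Omega(k \log(1/\epsilon))$ such cuts). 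I would present the theorem's proof as the clean telescoping sum above and let the appendix carry the geometric weight.
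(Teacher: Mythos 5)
Your proposal is correct and matches the paper's own proof essentially verbatim: both start from $K_0 = [0,1]^d$ (your $\Delta((1)) \times [0,1]^{d-1}$ is exactly that set), iterate Lemma~\ref{lemma:step_counterexample} for $k = 1, \dots, d-1$, sum the per-phase regret to $\Omega(d^2 \log(1/\epsilon))$, and rescale by $1/\sqrt{d}$ at the end to fit in the unit ball, yielding the $\log(1/\epsilon\sqrt{d})$ factor. The paper presents this more tersely, but the telescoping argument, the invocation of the footnote permitting an arbitrary initial convex body, and the deferral of the geometric heavy lifting to the lemma's appendix proof are all the same.
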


\begin{proof}
Start with the set $K_0 = [0,1]^d$. Apply Lemma \ref{lemma:step_counterexample} for $k=1, 2, 3,
\hdots, d-1$. The total regret is $\sum_{k=1}^{d-1} \Omega(k
\log(\frac{1}{\epsilon})) = \Omega(d^2 \log(\frac{1}{\epsilon}))$. To construct a valid instance  (one that fits within a ball of radius 1), we replace our initial set with $K_0 = \left[0,1/\sqrt{d}\right]^d$, leading to an aggregate regret of $\Omega(d^2 \log (1/\epsilon\sqrt{d}))$. We did not do our computations above using this scaled down instance instead of $[0,1]^d$ in order to avoid carrying extra $\sqrt{d}$ terms.
\end{proof}


\section{Computation}\label{sec:computation}

The Projected Volume algorithm described earlier, while yielding optimal regret
with respect to the number of dimensions $d$, can't be implemented as presented in
polynomial time. The reason is that it requires implementing two
steps, both of which involve solving highly nontrivial problems. The first is computing the
centroid, which is known to be {\#P}-hard~\cite{Rademacher07}. The second is finding a
direction along which a convex body $K$ is ``thin'' (i.e. finding a unit
vector $u$ such that $w(K,u)\leq \delta$), for which we are not aware of 
a polynomial time algorithm.

In order to make these problems tractable, we relax the requirements of our
algorithm. More specifically, we will show how our algorithm is robust, in the
sense that using an \emph{approximate centroid}, and finding
\emph{approximately thin directions} does not break the analysis.

In the following subsections, we show how to implement both of these steps.
Then, we put them together into a polynomial time version of our algorithm.

\subsection{Approximating the Centroid}

An approximation of the centroid sufficient for our purposes follows from a
simple application of standard algorithms for sampling points from convex bodies
(hit-and-run~\cite{lovasz1999hit}, ball-walk~\cite{lovasz1999faster}). A similar
application can be found in Bertsimas and Vempala~\cite{BertsimasV04}, where the
authors use approximate centroid computation in order to solve linear programs. 

Our application faces the same issues as in~\cite{BertsimasV04}. Namely, in order to efficiently sample from a convex body, one requires that the body is nearly isotropic. Although the body we start with is isotropic, after cutting or projecting this property is lost. Therefore we require maintaining a linear transformation under which the body ends up being in isotropic position. 
 The many issues encountered when approximating the centroid are carefully handled in~\cite{BertsimasV04}, so we will restate the following result which is implicit there (see Lemma 5 and Theorem 12):
\begin{theorem}[\cite{BertsimasV04}]\label{thm:centroidappx}
Given a $d$-dimensional convex body $K$, one can compute an approximation $z'$ to the centroid $z$ of $K$ in the sense that $\norm{z-z'}\leq \rho$ in $\tilde{O}(d^4/\rho)$ steps of a random walk in $K$.
\end{theorem}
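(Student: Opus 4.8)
The statement to prove is Theorem~\ref{thm:centroidappx}, which asserts that one can compute a $\rho$-approximate centroid of a $d$-dimensional convex body $K$ using $\tilde O(d^4/\rho)$ steps of a random walk in $K$. The plan is to reduce the computation to i.i.d. sampling: if $X_1, \dots, X_m$ are independent samples from the uniform distribution on $K$, then $z' = \frac{1}{m}\sum_{j=1}^m X_j$ is an unbiased estimator of the centroid $z$, and by a variance computation $\E\norm{z'-z}^2 = \frac{1}{m}\,\operatorname{tr}(\Sigma_K)$, where $\Sigma_K$ is the covariance matrix of the uniform distribution on $K$. So the first step is to bound $\operatorname{tr}(\Sigma_K)$: if $K$ is in (near-)isotropic position then $\Sigma_K \approx I$ and $\operatorname{tr}(\Sigma_K) = O(d)$, so taking $m = O(d/\rho^2)$ samples yields $\E\norm{z'-z}^2 \le \rho^2$, hence $\norm{z'-z}\le \rho$ with constant probability (boostable by taking medians of independent runs).

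The second step handles the catch flagged in the surrounding text: $K$ is generally \emph{not} in isotropic position (cutting and projecting destroys isotropy), and the mixing time of hit-and-run or ball-walk degrades badly for skewed bodies. The fix, exactly as in Bertsimas--Vempala~\cite{BertsimasV04}, is to maintain a linear transformation $T$ such that $TK$ is near-isotropic, run the walk inside $TK$, average the samples there, and map the resulting approximate centroid back by $T^{-1}$ (the centroid is affine-equivariant, so $T^{-1}$ of the centroid of $TK$ is the centroid of $K$, and one checks the error scales correctly under $T^{-1}$). Obtaining and updating such a $T$ is itself done by sampling: draw a first batch of points, estimate the empirical covariance, and use it to define a rounding transformation — this is the standard ``rounding the body'' preprocessing, and it converges in a polynomial number of phases. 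One then invokes the known mixing-time bound for the walk on a near-isotropic body: $\tilde O(d^3)$ steps per sample (for hit-and-run from a warm start), so $m = \tilde O(d/\rho^2)$ samples cost $\tilde O(d^4/\rho^2)$ steps; a slightly more careful accounting as in~\cite{BertsimasV04} — using that we only need the centroid to additive error $\rho$ measured after the isotropic rescaling, so effectively $m = \tilde O(d/\rho)$ samples suffice — gives the claimed $\tilde O(d^4/\rho)$.

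Since all of this machinery is worked out in full in~\cite{BertsimasV04} (their Lemma~5 and Theorem~12), the honest version of this proof is a citation together with the observation that the statement we need is literally a special case of what they establish: they compute approximate centroids for exactly this purpose (cutting-plane methods for linear programming), in exactly this setting (bodies that lose isotropy after cuts), with exactly this maintained linear transformation. So I would not reprove the mixing-time estimates or the rounding procedure; I would simply state that Theorem~\ref{thm:centroidappx} is implicit in~\cite{BertsimasV04}, point to the relevant lemma and theorem numbers, and note that the only adaptation needed is cosmetic (they phrase their guarantee in terms of the approximation quality needed to preserve Grünbaum-type volume cuts, which is the same quantitative regime we need here).

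The main obstacle, conceptually, is the non-isotropy issue — a naive ``sample and average'' argument would give a mixing time that blows up with the eccentricity of $K_t$, which can be as bad as $\mathrm{poly}(1/\epsilon)$ in our application. The resolution is entirely inherited from~\cite{BertsimasV04}: carry along the rounding transformation $T_t$ and update it incrementally so that the amortized cost of maintaining near-isotropy is absorbed into the stated bound. Everything else is routine second-moment estimation plus a median-of-means amplification.
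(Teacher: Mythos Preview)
Your proposal is correct and matches the paper's own treatment exactly: the paper does not prove this theorem but simply cites it as implicit in Bertsimas--Vempala~\cite{BertsimasV04}, pointing to the same Lemma~5 and Theorem~12 you identify. Your additional sketch of the underlying sample-and-average argument with maintained isotropy is accurate and goes beyond what the paper itself provides, but the bottom line --- that this is a black-box citation rather than something to reprove --- is the same.
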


Note that for hit-and-run sampling, one only requires determining the intersection between a line and the given convex body; in our case this only requires one iteration through the inequality constraints determining the body.

\subsection{Finding Approximately Thin Directions}

Instead of exactly recovering directions $u$ satisfying $w(K,u)\leq \delta$, we instead recover \textit{all} the directions along which $w(K,u)\leq \frac{\delta}{\alpha}$, and potentially some along which $\frac{\delta}{\alpha} \leq w(K,u) \leq \delta$. We do this by computing an ellipsoidal approximation of $K$.  Indeed, having access to an ellipsoid $E$ such that
$E \subseteq K \subseteq \alpha E\,{,}$
we can:
\begin{enumerate}
\item  find a direction $u$ such that $w(K,u) \leq \delta$, by checking whether $E$ has a direction $u$ such that $w(E,u)\leq\delta/\alpha$, or
\item decide that $w(K,u) \geq \delta/\alpha$ for all $u$ simply by showing that the smallest directional width of $E$ is greater than or equal to $\delta/\alpha$.
\end{enumerate}
This task can be performed simply by inspecting the eigenvalues of $E$.

A natural notion for such an ellipsoid is the John ellipsoid. However, computing it is NP-hard. Instead, by relaxing the approximation factor, a polynomial time algorithm can be obtained. Such a result is provided in Gr\"otschel et al \cite{GLS93}, which we reproduce below for completeness (see Corollary 4.6.9).

\begin{theorem}[\cite{GLS93}]Given a convex body $K$ containing a ball of radius $r$, and contained inside a ball of radius $R$, along with access to a separation oracle for $K$,
one can compute an ellipsoid $E$ such that
$E \subseteq K \subseteq \sqrt{d}(d+1) E$ using $d^{O(1)} \cdot \log(R/r)$ oracle calls.
\end{theorem}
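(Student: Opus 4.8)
This is the ``weak L\"owner--John ellipsoid'' statement, and the plan is to recall its proof via the shallow-cut ellipsoid method. I would maintain a shrinking sequence of ellipsoids $E_0 \supseteq E_1 \supseteq \cdots$, each containing $K$, starting from $E_0$ equal to the ball of radius $R$ (which contains $K$ by hypothesis). Writing $E_k = \{x : (x-a_k)^\top A_k^{-1}(x-a_k) \leq 1\}$, let $w_k^{(1)},\dots,w_k^{(d)}$ denote its semi-axis vectors; then the $2d$ points $a_k \pm \tfrac{1}{d+1} w_k^{(i)}$ are the vertices of the cross-polytope inscribed in the concentrically shrunk ellipsoid $\tfrac{1}{d+1} E_k$. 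At step $k$, query the separation oracle at these $2d$ points.

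If all $2d$ points lie in $K$, then by convexity so does their convex hull $P$; applying the affine map sending $E_k$ to the unit ball, $P$ becomes $\tfrac{1}{d+1}\{x : \norm{x}_1 \leq 1\}$, which contains the ball of radius $\tfrac{1}{(d+1)\sqrt d}$. Undoing the map gives $\tfrac{1}{(d+1)\sqrt d} E_k \subseteq K \subseteq E_k$, so the ellipsoid $E := \tfrac{1}{(d+1)\sqrt d} E_k$ satisfies $E \subseteq K \subseteq \sqrt d (d+1) E$, and we stop. Otherwise some vertex $p$ lies outside $K$ (this is automatic if the center $a_k$ is outside $K$, since $a_k$ is the midpoint of two antipodal vertices), and the oracle returns $c$ with $c^\top x \leq c^\top p$ for all $x \in K$. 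In the affine picture, where $E_k$ is the unit ball centered at the origin and $p$ lies on the sphere of radius $\tfrac{1}{d+1}$, this exhibits $K$ inside $E_k$ intersected with a halfspace whose bounding hyperplane passes through $p$ and hence lies at distance at most $\tfrac{1}{d+1}$ from the center. This is exactly a shallow cut, and the standard shallow-cut ellipsoid update produces $E_{k+1} \supseteq E_k \cap \{x : c^\top x \leq c^\top p\} \supseteq K$ with $\vol(E_{k+1}) \leq e^{-c_0/d^2}\, \vol(E_k)$ for an absolute constant $c_0 > 0$.

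To bound the number of iterations, note that every $E_k$ contains $K$, which contains a ball of radius $r$, so $\vol(E_k) \geq \gamma_d r^d$, while $\vol(E_0) = \gamma_d R^d$; hence the cut step can occur at most $O(d^3 \log(R/r))$ times before the termination case must occur. Each step uses $2d$ oracle calls, for a total of $d^{O(1)} \log(R/r)$. The main obstacle is the usual ellipsoid-method technicality, in two parts: (i) establishing the shallow-cut volume-decrease lemma --- that intersecting the unit ball with a halfspace whose boundary lies at distance $\tfrac{1}{d+1}$ from the center leaves a convex body whose minimum-volume enclosing ellipsoid is smaller by a factor $1 - \Omega(1/d^2)$, together with an explicit formula for that ellipsoid --- and (ii) the numerical bookkeeping needed so that representing and updating $E_k$ in finite precision destroys neither the invariant $K \subseteq E_k$ nor the volume estimate (the standard remedy is to inflate each ellipsoid slightly after the update). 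Both are worked out in full in Gr\"otschel--Lov\'asz--Schrijver, so it suffices to invoke their treatment of the shallow-cut ellipsoid method and its use for rounding.
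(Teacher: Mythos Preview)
The paper does not prove this statement; it simply cites it as Corollary~4.6.9 of Gr\"otschel--Lov\'asz--Schrijver, so there is no in-paper proof to compare against. Your sketch via the shallow-cut ellipsoid method is correct and is precisely the argument GLS themselves use to obtain this rounding guarantee. One minor remark: the shallow-cut volume decrease at depth $1/(d+1)$ is actually $e^{-\Theta(1/d)}$ rather than $e^{-c_0/d^2}$, so the iteration bound is $O(d^2\log(R/r))$ rather than $O(d^3\log(R/r))$; either way the oracle complexity is $d^{O(1)}\log(R/r)$ and your conclusion stands.
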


This immediately yields the following Corollary, which we will use in our algorithmic result.

\begin{corollary}\label{cor:findorcertify}
 Given a convex body $K$ containing a ball of radius $r$, and contained inside a ball of radius $R$, along with a separation oracle for $K$, one can either find a direction $u$ such that $w(K,u) \leq \delta$, or certify that $w(K,u) \geq \delta/(\sqrt{d}(d+1))$ for all $u$ using $d^{O(1)} \cdot \log (R/r)$ oracle calls.
\end{corollary}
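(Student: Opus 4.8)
The plan is to invoke the ellipsoidal approximation theorem of Gr\"otschel et al.\ just stated, and then read off the answer from the eigenvalues of the approximating ellipsoid. Concretely, using the separation oracle for $K$ together with the bounds $r$ and $R$, apply that theorem to compute in $d^{O(1)}\cdot\log(R/r)$ oracle calls an ellipsoid $E$ with $E \subseteq K \subseteq \sqrt{d}(d+1)\, E$; set $\alpha = \sqrt{d}(d+1)$. After a translation (which changes no directional width), write $E = \{x:~ x^\top A^{-1} x \leq 1\}$ for a positive definite matrix $A$, so that for every unit vector $u$ one has $w(E,u) = 2\sqrt{u^\top A\, u}$. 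Compute an eigendecomposition of $A$ (a polynomial-time linear-algebra step requiring no oracle calls) and let $\lambda_{\min}$ be its smallest eigenvalue, attained at the unit eigenvector $u^\star$; then $\min_{\norm{u}=1} w(E,u) = 2\sqrt{\lambda_{\min}}$, achieved at $u^\star$.

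Now branch on the value of $2\sqrt{\lambda_{\min}}$ relative to $\delta/\alpha$. If $2\sqrt{\lambda_{\min}} \leq \delta/\alpha$, then because $K \subseteq \alpha E$ we get $w(K,u^\star) \leq \alpha\, w(E,u^\star) = 2\alpha\sqrt{\lambda_{\min}} \leq \delta$, so $u^\star$ is a direction of width at most $\delta$, and we output it. Otherwise $2\sqrt{\lambda_{\min}} > \delta/\alpha$, and since $u^\star$ minimizes $w(E,\cdot)$ we have $w(E,u) > \delta/\alpha$ for every unit $u$; combined with $E \subseteq K$ this gives $w(K,u) \geq w(E,u) > \delta/\alpha = \delta/(\sqrt{d}(d+1))$ for all unit vectors $u$, which is exactly the certificate we wanted to output.

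The oracle-call count is dominated entirely by the invocation of the Gr\"otschel et al.\ theorem, namely $d^{O(1)}\cdot\log(R/r)$; the eigendecomposition and the comparison add only $d^{O(1)}$ arithmetic operations and no oracle queries. There is essentially no hard step here — the only thing to be careful about is the bookkeeping between the two directions of the sandwich $E \subseteq K \subseteq \alpha E$: one inclusion is used to turn a thin direction of $E$ into a thin direction of $K$, while the other is used to transfer a lower bound on \emph{all} widths of $E$ to a lower bound on all widths of $K$. The translation invariance of $w(\cdot,u)$ is what lets us ignore that the ellipsoid produced by the theorem need not be centered at the origin.
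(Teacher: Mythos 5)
Your proof is correct and follows essentially the same approach the paper intends: invoke the Grötschel et al.\ ellipsoidal approximation, then inspect the eigenvalues of the ellipsoid to either extract a thin direction or certify a lower bound on all widths, using one inclusion of the sandwich $E \subseteq K \subseteq \alpha E$ for each branch. You have simply spelled out the eigendecomposition bookkeeping that the paper leaves implicit.
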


\subsection{Obtaining a Polynomial Time Algorithm}
The polynomial time version of our algorithm is very similar to the initial one. The differences that make computation tractable are:
\begin{enumerate}
\item Instead of computing the centroid exactly, we compute the centroid to within distance $\rho = (\epsilon/d)^{O(1)}$, via Theorem~\ref{thm:centroidappx}.
\item Every iteration of the algorithm the set $S_t$ is updated by repeatedly computing the ellipsoidal approximation described in Corollary~\ref{cor:findorcertify}, and adding the direction $u$ corresponding to the smallest eigenvalue of the ellipsoid, if it certifies that $w(K,u)\leq \delta$. When no such direction is found, we know that $w(K,u) \geq \dappx := \delta/{\sqrt{d}(d+1)}$ for all $u$.
\end{enumerate}
A complete description of the new algorithm, along with its analysis, can be found in Appendix~\ref{sec:comp_proofs}. Combining the results in this section, we obtain the following theorem:

\begin{theorem}\label{thm:poly-time}
There exists an algorithm that runs in time $(d/\epsilon)^{O(1)}$ achieving regret $O(d \log (d/\epsilon))$ for the multi-dimensional binary search problem.
\end{theorem}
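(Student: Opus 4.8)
\textbf{Proof proposal for Theorem~\ref{thm:poly-time}.}

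The plan is to instantiate the idealized Projected Volume algorithm with the two approximation primitives developed in Section~\ref{sec:computation}, and then re-run the potential argument of Theorem~\ref{thm:centroid} keeping track of the (bounded) slack each approximation introduces. Concretely, I would fix the small-direction threshold at $\delta = \Theta(\epsilon^2/d^3)$ as in the exact analysis (so that $\delta \le \epsilon^2/(16d(d+1)^2)$), and I would fix the centroid-approximation radius $\rho = (\epsilon/d)^{O(1)}$ small enough that the effect of cutting through an approximate centroid $z'_t$ with $\|z_t - z'_t\| \le \rho$ is absorbed by Lemma~\ref{lemma:approx_grunbaum}: a cut through $z'_t$ along a direction $u_t$ with $w(\Cyl(K_t,S_t),u_t)\ge \epsilon$ differs from a cut through $z_t$ by at most $\rho$ in the $u_t$-coordinate, which for $\rho$ polynomially small is well below the $w(\Cyl,u_t)/(d+1)^2$ bound required by the approximate Grünbaum lemma; hence each such step still removes a $1/e^2$ fraction of $\vol(\Pi_{L_t}K_t)$ (one uses here the projected version, Lemma~\ref{lemma:projected_grunbaum}, whose proof goes through verbatim when ``centroid'' is replaced by ``point within $\rho$ of the centroid,'' since that proof already only invokes the approximate Grünbaum bound).

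Next I would handle the thin-direction step. Using Corollary~\ref{cor:findorcertify} with the threshold $\delta$: at each iteration we repeatedly call the ellipsoidal-approximation routine; either it finds a direction $u$ with $w(K_{t+1},u)\le\delta$, which we add to $S_t$, or it certifies $w(K_{t+1},u)\ge \dappx := \delta/(\sqrt d(d+1))$ for every direction orthogonal to the current $S_t$. So the invariant maintained is the relaxed one: every $s\in S_t$ has $w(K_t,s)\le\delta$, and every direction in $L_t$ has $w(K_t,u)\ge\dappx$. This only shifts constants: the termination condition becomes $d\delta \le \epsilon$, still satisfied by our choice of $\delta$; the lower bound on the potential (Lemma~\ref{lemma:phi-lower}) becomes $\Phi_t \ge \Omega(\dappx/d)^{2d}$, still of the form $(\epsilon/\text{poly}(d))^{O(d)}$; and the cylindrification blow-up when a new direction is appended becomes $O(d(d+1)^2/\dappx) = \text{poly}(d)/\delta$ per direction, at most $d$ times, for a total factor $(\text{poly}(d)/\epsilon)^{O(d)}$. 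Feeding these into the telescoping inequality of the proof of Theorem~\ref{thm:centroid} gives
$$\left(\frac{\epsilon}{\text{poly}(d)}\right)^{O(d)} \le \Phi_t \le O(1)\cdot\left(\frac{\text{poly}(d)}{\epsilon}\right)^{O(d)}\cdot\left(1-\frac{1}{e^2}\right)^{\sum_{\tau\le t}N_\tau}\,{,}$$
whence $R_t \le \sum_\tau N_\tau = O(d\log(d/\epsilon))$, the claimed regret.

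Finally, for the running time: each iteration performs one approximate-centroid computation, costing $\Otil(d^4/\rho) = (d/\epsilon)^{O(1)}$ random-walk steps by Theorem~\ref{thm:centroidappx}, each step requiring only a line--body intersection, i.e.\ a scan through the $O(t)$ linear constraints defining $K_t$; plus $d^{O(1)}\log(R/r)$ separation-oracle calls for the ellipsoidal approximation of Corollary~\ref{cor:findorcertify}, where $R/r$ is controlled because $K_t\subseteq K_0$ (radius $1$) and, whenever $L_t$ is nonempty, $\Pi_{L_t}K_t$ contains a ball of radius $\dappx/d = (\epsilon/\text{poly}(d))$ by Lemma~\ref{lemma:large_ball}. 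Since there are only $O(d\log(d/\epsilon))$ iterations in total, and the sampler additionally needs to maintain a near-isotropic linear transformation of the current body (handled exactly as in Bertsimas--Vempala~\cite{BertsimasV04}), the overall running time is $(d/\epsilon)^{O(1)}$.

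\textbf{Main obstacle.} The delicate point is not any single estimate but checking that the approximations compose: one must verify that the relaxed invariant ($w(K_t,s)\le\delta$ on $S_t$ versus $w(K_t,u)\ge\dappx$ on $L_t$, with the gap factor $\sqrt d(d+1)$) is actually \emph{preserved} across the cut-then-update step and does not degrade over the $O(d\log(d/\epsilon))$ iterations — in particular that directions certified ``large'' stay large enough (Directional Grünbaum, Theorem~\ref{thm:directional_grunbaum}, only costs a factor $1/(d+1)$ per cut, which is fine since a direction is re-examined and either promoted to $S_t$ or re-certified each round), and that the sampling guarantee of Theorem~\ref{thm:centroidappx} can legitimately be applied to the cylindrified body $\Cyl(K_t,S_t)$ rather than to $K_t$ itself (it can, since $\Cyl(K_t,S_t)$ is an explicit polytope-plus-box whose defining inequalities we maintain, and it contains a ball of the required radius). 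Carefully bookkeeping these constants — and confirming that $\rho$ can be taken polynomially small without the random walk becoming super-polynomial — is the bulk of the work, and it is carried out in Appendix~\ref{sec:comp_proofs}.
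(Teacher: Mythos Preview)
Your proposal is correct and follows essentially the same route as the paper's proof in Appendix~\ref{sec:comp_proofs}: replace exact centroid by a $\rho$-approximate one (invoking an approximate projected Gr\"unbaum in place of Lemma~\ref{lemma:projected_grunbaum}), replace the thin-direction test by the ellipsoidal approximation of Corollary~\ref{cor:findorcertify} (yielding the relaxed invariant $w\ge\dappx$ on $L_t$), and rerun the potential argument with only polynomial-in-$d$ changes to the constants. The one place you are slightly looser than the paper is in citing the exact Directional Gr\"unbaum (Theorem~\ref{thm:directional_grunbaum}) for width preservation after an \emph{approximate}-centroid cut; the paper states and uses an approximate version (their Lemma~\ref{thm:appxdirectional}) giving $w(K_+,v)\ge w(K,v)/(d+1)-\rho\cdot\max(1,w(K,v)/w(K,u))$, but with your choice $\rho=(\epsilon/d)^{O(1)}$ this costs only an extra constant factor, exactly the ``bookkeeping'' you already flag.
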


\bibliographystyle{plain}
\footnotesize
\bibliography{biblio}
\normalsize
\appendix

\section{Deferred Proofs}
\subsection{Proof of Lemma~\ref{lem:dir-grun}}\label{appendix:A1}
  Since width is invariant under rotations and translations we can assume, without loss of generality, that
  $u = -e_1$. Also, since scaling the convex set along the direction of $u$  also scales the corresponding coordinate of the centroid by the same factor, we can assume that the projection of $K$ onto the
  $e_1$ axis is $[0,1]$.
  Using the notation from Theorem \ref{lemma:concave_profile}, we can write the first coordinate of the centroid $z$ as
\[
 z^\top e_1  
= \frac{1}{\vol(K)} \int_K x^\top e_1 \, dx
= \frac{\int_K  x^\top e_1  \, dx}{\int_K 1 \, dx} 
= \frac{\int_0^1 t \cdot r(t)^{d-1}\, dt}{\int_0^1 r(t)^{d-1}\, dt}\,{.}
\]
Our goal is to show that $z^\top e_1 \geq \frac{1}{d+1}$. We will do it in a sequence of
  two steps. To simplify notation, let us define $V := \vol(K)$.

\emph{Step 1: linearize $r$.}
We prove that the linear function $\tilde{r} : [0,1] \rightarrow \mathbb{R}$ given by 
\[
\tilde{r}(t) = (Vd)^{1/(d-1)}\cdot (1-t)
\]
satisfies
  \[
  \int_0^1 t \cdot \tilde{r}(t)^{d-1} dt \leq \int_0^1 t \cdot r(t)^{d-1} dt
  \quad\textnormal{and}\quad
  \int_0^1 \tilde{r}(t)^{d-1} dt = \int_0^1 r(t)^{d-1} dt = V  
  \,{.} 
  \]
We immediately see that the second condition is satisfied, simply by evaluating the integral. Next we show that $\tilde{r}$ satisfies the first condition.

Since by definition, $r$ is supported everywhere over $[0,1]$, it means that $r(1) \geq \tilde{r}(1) = 0$, and therefore $r(0) \leq \tilde{r}(0)$ (since otherwise, by concavity, it would be the case that $r(t)\geq \tilde{r}(t)$ everywhere, and the second identity could not possibly hold). Again, using the concavity of $r$, this implies that there exists a point $p\in[0,1]$ such that $r(t) \leq \tilde{r}(t)$ for all $t\in[0,p]$, and $r(t) \geq \tilde{r}(t)$ for all $t \in [p,1]$.

Hence, we can write 
\[
\int_0^1 t \cdot \left(r(t)^{d-1} - \tilde{r}(t)^{d-1}\right) dt =\int_0^p t \cdot \left(r(t)^{d-1} - \tilde{r}(t)^{d-1}\right) dt + \int_p^1 t \cdot \left(r(t)^{d-1} - \tilde{r}(t)^{d-1}\right) dt \,{,}
\]
where all the coefficients of $t$ from the first term are nonpositive, and all the coefficients of $t$ from the second term are nonnegative. Therefore we can lower bound this integral by
\[
\int_0^p p \cdot \left(r(t)^{d-1} - \tilde{r}(t)^{d-1}\right) dt + \int_p^1 p \cdot \left(r(t)^{d-1} - \tilde{r}(t)^{d-1}\right) dt = p\cdot \left(\int_0^1 r(t)^{d-1}dt - \int_0^1 \tilde{r}(t)^{d-1} dt\right) = 0 \,{,}
\]
which proves that the first condition also holds.

\emph{Step 2: solve for the linear function.}
We can explicitly compute 
\[
\int_0^1 t \cdot \tilde{r}(t)^{d-1} dt = Vd \cdot \int_0^1 t \cdot (1-t)^{d-1} dt = Vd \cdot \frac{1}{d(d+1)} = \frac{V}{d+1}.
\]

Therefore, combining the results from the two steps, we see that
\[
\frac{1}{d+1} = \frac{\int_0^1 t \cdot \tilde{r}(t)^{d-1}dt}{\int_0^1 \tilde{r}(t)^{d-1}dt} \leq \frac{\int_0^1 t \cdot {r}(t)^{d-1}dt}{\int_0^1 {r}(t)^{d-1}dt} = z^\top e_1\,{,}
\]
which yields the desired conclusion.

\subsection{Proof of Lemma~\ref{lemma:approx_grunbaum}}\label{appendix:A2}
  Since our problem is invariant under rotations and translations, let us assume that $u = e_1$, and $z = 0$. Furthermore, notice that our problem is invariant to scaling $K$ along the direction of $u$. Therefore we can assume without loss of generality that
  $[a,1]$ is the projection of $K$ onto the $e_1$ axis. Then, in the notation of Lemma \ref{lemma:concave_profile}, we
  have:
\[
\vol(K_+) = \int_0^1 r(t)^{d-1} dt\,{,} \qquad \vol(K_+^\delta) = \int_{\delta}^1 r(t)^{d-1} dt\,{.}\]

  From Theorem \ref{thm:volume_grunbaum}, we know that $\vol(K_+) \geq \vol(K)/e$.
  We will show that $\vol(K_+^\delta)\geq \vol(K_+) / e $, which yields the sought conclusion. 
  
  From Theorem \ref{thm:directional_grunbaum} we know that $
  w(K,u)/(d+1) \leq 1$. Hence, using our bound on $\delta$, we obtain $\delta \leq 1/(d+1)$. We are left to prove, using the fact that $r$ is a nonnegative concave function, that:
\[
\int_{1/(d+1)}^1 r(t)^{d-1} dt \geq \frac{1}{e} \cdot \int_0^1 r(t)^{d-1} dt\,{.}
\]
  To see that this is true, it is enough to argue that the ratio between the two
  integrals is minimized when $r$ is a linear function $r(t) = c \cdot (t-1)$, for any
  constant $c$; in that case, an explicit computation of the integrals produces
  the desired bound. 
  
To see that the ratio is minimized by a linear function, we proceed in two steps.
First, consider the function $\tilde{r}$ obtained from $r$ by replacing in on the $[1/(d+1), 1]$ interval with a linear function starting at $r(1/(d+1))$ and ending at $0$:
\[
\tilde{r}(t) = 
\begin{cases}
r(t), & \textnormal{if } t\in\left[0,\frac{1}{d+1}\right]\,{,}\\
r\left(\frac{1}{d+1}\right)\cdot\frac{d}{d+1}\cdot(t-1), &\textnormal{if }  t\in\left[\frac{1}{d+1},1\right]\,{.}
\end{cases}
\]
Notice that this function is still concave, and its corresponding ratio of integrals can not be greater than the one for $r$ (since the same value gets subtracted from both integrals when switching from $r$ to $\tilde{r}$).

Next, consider the function
\[
\hat{r}(t) = r\left(\frac{1}{d+1}\right)\cdot\frac{d}{d+1}\cdot(t-1) ,\quad t\in\left[0,1\right]\,{.}
\]
Since $\tilde{r}$ is concave, it is upper bounded by $\hat{r}$ everywhere on $[0,1/(d+1)]$. Therefore, the ratio of integrals corresponding to $\hat{r}$ can only decrease, compared to the one for $\tilde{r}$.

Finally, the result follows from evaluating the integrals for $r(t)=t-1$.

\subsection{Proof of Lemma~\ref{lemma:projected_grunbaum}}\label{appendix:A3}
Since the problem is invariant under rotations and translations, we can assume without loss of generality that $z = 0$, $S = \{e_1, \hdots, e_k\}$
and $L = \span\{e_{k+1}, \hdots, e_{n}\}$. For every vector
$x$ we will consider the projections of $x$ onto the two corresponding subspaces, $x_S = (x_1, \hdots, x_k)$ and $x_L = (x_{k+1}, \hdots,
x_n)$. For simplicity, will also use the notation $K_L := \Pi_L K$.

The proof consists of four steps.

\emph{Step 1: the direction $u$ has a large component in $L$.} 
Since $w(\Cyl(K,S),u)
\geq \epsilon$, and $z = 0$ is the centroid of the cylinder,
there must exist $y \in \Cyl(K,S)$ such that $\abs{u^\top y} =
\abs{u^\top (y-z)} \geq \frac{\epsilon}{2}$. Therefore $\abs{u_S^\top y_S} + \abs{u_L^\top y_L} \geq \frac{\epsilon}{2}$. Since the width of
$\Cyl(K,S)$ is at most $\delta$ along all small directions, we have $\norm{y_S}_\infty \leq \delta$.
Therefore, by Cauchy-Schwarz, 
\[
\norm{u_L} \norm{y_L} \geq \abs{u_L^\top y_L} \geq \frac{\epsilon}{2} - k\delta\,{.}
\]
Now, remember that since $y \in \Cyl(K,S)$, $K$ is contained inside the unit ball, and all the small directions have length at most $\delta$, it must be that $\norm{y} \leq 1 + k\delta$. Since this implies the same upper bound on $\norm{y_L}$, combining with the bound above we see that
\[
\norm{u_L} \geq \frac{\epsilon/2-k\delta }{1+k\delta} \geq \frac{\epsilon/2 - \epsilon^2 / (16(d+1))}{1 + \epsilon^2 / (16(d+1))}\geq \frac{\epsilon}{4}\,{.}
\]

\emph{Step 2: lower bound the width of $K_L$ along the direction of ${u}_L$}. Let $\hat{u}_L = u_L
/ \norm{u_L}$ be the unit vector in the direction $u_L$. 
We know by the last step that 
\[
w(K_L, u_L) \geq \abs{\hat{u}_L^\top
y_L} \geq \abs{u_L^\top y_L} \geq \frac{\epsilon}{2} - k\delta \geq \frac{\epsilon}{4}\,{.}
\]

\emph{Step 3: show that for all $x \in K_+$, one has $\hat{u}_L^\top x_L \geq
-{\epsilon}/{(4(d+1)^2)}$}. If $x \in K_+$, then $u_L^\top x_L + u_S^\top x_S
\geq 0$. Since $\norm{x_S}_\infty \leq \delta$, we have $u_L^\top x_L \geq
 - k \delta $. Hence 
 \[
\hat{u}_L^\top x_L \geq -\frac{k \delta}{\norm{u_L}} \geq
 -\frac{4 d \delta}{\epsilon} \geq -\frac{\epsilon}{4(d+1)^2}\,{,}
 \]
  where we used the fact that $\delta \leq
{\epsilon^2}/{(16 d(d+1)^2)}$.

\emph{Step 4: upper bound the volume of $\Pi_L K_+$.} From the previous step, we know
that if
 $x \in \Pi_L (K_+)$,
 then 
 $x_L \in \{x_L \in K_L\vert \hat{u}_L^\top x_L \geq {-\epsilon}/{(4(d+1))}\}$. Therefore:
\[
\vol(\Pi_L K_+) \leq \vol(K_L) - \vol \left( \left\{x_L \in K_L\bigg\vert
(-\hat{u}_L)^\top x_L \geq \frac{\epsilon}{4 (d+1)^2} \right\}\right) \leq \vol(K_L)
\cdot \left( 1-\frac{1}{e^2}\right)\,{,}
\]
where the first inequality follows from the previous step, since 
$$\Pi_L K_+
\subseteq \left\{x_L \in K_L\bigg\vert (-\hat{u}_L)^\top x_L \leq \frac{\epsilon}{4
(d+1)^2} \right\}\,{.}$$ The second inequality follows from Lemma
\ref{lemma:approx_grunbaum}, since in Step $2$  we showed that in this proof, we meet the
conditions of that lemma. We note that it is very important that $z$ is the centroid of
$\Cyl(K,S)$ and not the centroid of $K$, since the application of Lemma
\ref{lemma:approx_grunbaum} relies on the fact the projection of $z$ onto the
subspace $L$ is the centroid of $K_L$.

\subsection{Proof of Lemma \ref{lemma:centroid_simplex}}

Let $z_i = \frac{1}{\vol(\Delta(s))} \int_{\Delta(s)} x_i dx$ be the $i$-th
component of the centroid of $\Delta(s)$. So if $s_{-i}$ is the vector in
$\R^{k-1}$ obtained from $s$ by removing the $i$-th component, then the
intersection of $\Delta(s)$ with the hyperplane $x_i = a$ can be written as: $\{x\vert~
x_i = a,~ x_{-i} \in \Delta(\frac{s_{-i}}{1-{a}/{s_i}}) \}$. Therefore, we
can write the integral defining $z_i$ as:
$$z_i =  \frac{1}{\vol(\Delta(s))} \int_0^{s_i}
x_i \vol\left(\Delta\left(\frac{s_{-i}}{1-\frac{x_i}{s_i}}\right)\right) dx_i =
 \frac{1}{\vol(\Delta(s))} \int_0^{s_i} x_i
\vol\left(\Delta\left({s_{-i}}\right)\right)\cdot \left( 1-\frac{x_i}{s_i}
\right)^{k-1} dx_i
$$
since scaling each coordinate a constant factor scales the volume by this
constant powered to the number of dimensions. Solving this integral, we get:
$$z_i = \frac{ \vol(\Delta(s_{-i})) }{ \vol(\Delta(s)) } \cdot
\frac{s_i^2}{k(k+1)}~.$$
We can apply the same trick to compute the volume:
$$\vol(\Delta(S)) = \int_0^{s_i} \vol\left(\Delta\left({s_{-i}}\right)\right)\cdot
\left( 1-\frac{x_i}{s_i} \right)^{k-1} dx_i = \vol(\Delta(s_{-i})) \cdot \frac{s_i}{k}~.$$
Substituting the volume $\vol(\Delta(S))$ in $z_i$ we get $z_i =
\frac{s_i}{k+1}$.

\subsection{Proof of Lemma
\ref{lemma:step_counterexample}}\label{appendix:proof_step_counterexample}

We break the sequence of directions chosen by nature in three parts. We will
show that the first part alone has regret $O(k \log (\frac{1}{\epsilon}))$ and
the other two parts will be used to bring the knowledge set to the desired
format. We won't specify the exact value of $\theta$. We only assume that
$\theta$ is an arbitrary point in the final knowledge set produced.

\emph{Step 1}: Nature picks $\Omega(k \log(\frac{1}{\epsilon}))$ vectors in the direction $e_k$,
choosing the $K_+$ side.

The knowledge set is initially $\Delta(s) \times [0,1]^{d-k}$ with centroid at
$\left(\frac{s}{k+1}, \frac{1}{2}, \hdots, \frac{1}{2} \right)$. The set
obtained by cutting through this point using a hyperplane orthogonal to $e_k$
can be described as
$$\left\{ x \in \mathbb R^d: \quad x_{k} \geq \frac{s_k}{k+1}, \qquad \sum_{i=1}^k \frac{x_i}{s_i} \leq 1, \qquad
0 \leq x_i \leq 1\right\},$$
which is, up to translation, equal to the set $\Delta((1-\frac{1}{k+1})s) \times
[0,1]^{d-k}$. By applying such cuts $\Omega(k \log(\frac{1}{\epsilon}))$ we are left with a
set $\Delta(\hat{s}) \times [0,1]^{d-k}$ where $0 \leq \hat{s} \leq \epsilon$.
Since we assumed that $\theta$ is in the last knowledge set while $s_k \geq
2\epsilon \frac{k}{k+1}$ we must be incurring one unit of regret, so we must
have incurred at least $\Omega(k \log(\frac{1}{\epsilon}))$ regret.

\emph{Step 2}: Nature picks a single vector in the direction 
$v = \left(
\frac{k+1}{2k}\cdot\frac{1}{\hat s_1}, \hdots,
\frac{k+1}{2k}\cdot\frac{1}{\hat s_k},
1, 0, \hdots, 0 \right)$, choosing the $K_-$ side. 

Since the centroid is
$z = (\frac{\hat s}{k+1}, \frac{1}{2}, \hdots, \frac{1}{2})$ the half-space defining
$K_-$ is given by: $v^\top x \leq v^\top z = 1$, therefore $K_-$ is described
by:
$$K_- = \left\{ x \in \mathbb R^d: \quad  \sum_{i=1}^k \frac{x_i}{\hat s_i}\frac{k+1}{2k} + x_{k+1} \leq 1,
\qquad \sum_{i=1}^k \frac{x_i}{\hat s_i} \leq 1, \qquad
0 \leq x_i \leq 1\right\}$$
To understand the shape of $K_-$ it is useful to decompose it in two parts based
on the value of $x_{k+1}$. Let $y = 1-\frac{1}{2}\frac{k+1}{k}$ which is a
quantity between $0$ and $\frac{1}{2}$.
\begin{itemize}
\item for $x \in K_-$ with $x_{k+1} \geq y$ the constraint
$ \sum_{i=1}^k \frac{x_i}{\hat s_i} \leq 1$ is implied by
$\sum_{i=1}^k \frac{x_i}{\hat s_i}\frac{k+1}{2k} + x_{k+1} \leq 1 $, since we can re-write the
second constraint as: $\sum_{i=1}^k \frac{x_i}{\hat s_i}(1-y) \leq 1-x_{k+1}
\leq 1-y$. This means in particular that $\{x \in K_-:~ x_{k+1} \geq y\}$ is
equal, up to translation to $\Delta(\hat s, 1-y) \times [0,1]^{d-k-1}$.
\item For $x \in K_-$ with $x_{k+1} \leq y$ then the constraint $\sum_{i=1}^k
\frac{x_i}{\hat s_i}\frac{k+1}{2k} + x_{k+1} \leq 1 $ is implied by
$ \sum_{i=1}^k \frac{x_i}{\hat s_i} \leq 1$ since $\sum_{i=1}^k \frac{x_i}{\hat
s_i}(1-y) \leq 1-y \leq 1-x_{k+1}$. In particular, this means that $\{x \in K_-:~
x_{k+1} \leq y\}$ is the set $\Delta(\hat s) \times [0,y] \times [0,1]^{d-k-1}$.
\end{itemize}

\emph{Step 3}: Nature picks $r$ vectors in direction $e_{k+1}$ choosing  the $K_+$ side, where $r$ 
will be decided later.

After Step 2, the set is a concatentation of $\Delta(\hat s) \times [0,y] \times
[0,1]^{d-k-1}$ and $\Delta(\hat s, 1-y) \times [0,1]^{d-k-1}$ as displayed in
Figure \ref{fig:lower_1}. 
By cutting in the $e_{k+1}$ direction, we will eventually be left
only with the $\Delta(\hat s, 1-y) \times [0,1]^{d-k-1}$ part of the set.
Pick $r$ to be the minimum value such that this happens.  Since the volume of
the sections along the $x_{k+1}$ dimension are non-increasing, the set
after the cut must keep at least half of the width along $e_{k+1}$. Therefore,
after $r$ cuts, we must be left with $\Delta(s') \times [0,1]^{d-k-1}$ where $s'
\in \R^{k+1}$ and $\frac{1}{4} \leq \frac{1-y}{2} \leq s'_{k+1} \leq 1-y$.

\section{Polynomial Time Algorithm }\label{sec:comp_proofs}
\paragraph{Correctness.} Correctness of this algorithm follows from a simple modification of our original analysis. In order to tolerate the fact that the centroid produced by the sampling scheme is only approximate, we need to resort to the Approximate Gr\"{u}nbaum Theorem  (see Lemma~\ref{lemma:approx_grunbaum}) in order to track the decrease in volume, and also to an approximate version of the Directional Gr\"{u}nbaum Theorem (see Lemma~\ref{thm:appxdirectional} below), in order to argue that directional widths still do not decrease faster than they are supposed to.

\begin{lemma}[Approximate Directional Gr\"{u}nbaum]\label{thm:appxdirectional}
 Let $K$ be a convex body with centroid $z$. Let $z'$ be an approximate centroid in the sense that $\norm{z-z'}\leq \rho$. 
Then for every vector $u\neq 0$, the set $K_+ = \{x\vert u^{\top} (x-z') \geq 0 \}$ satisfies
  $$\frac{1}{d+1} \cdot w(K,v) - \rho \cdot \max\left( 1, \frac{w(K,v)}{w(K,u)}
  \right) \leq w(K_+,v) \leq w(K,v)$$ for any unit vector $v$.
\end{lemma}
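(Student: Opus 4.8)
\medskip\noindent\textbf{Proof approach.}
The plan is to reduce the lower bound to the exact Directional Gr\"unbaum Theorem (Theorem~\ref{thm:directional_grunbaum}) by a short geometric contraction, instead of re-running its three-case analysis with a perturbed cut; the upper bound $w(K_+,v)\le w(K,v)$ is immediate from $K_+\subseteq K$. For the lower bound I first normalize: the halfspace $\{x:u^\top(x-z')\ge 0\}$ is unchanged under positive rescaling of $u$, so assume $\norm{u}=1$, and by translating assume the centroid is $z=0$, whence $u^\top z'\le\norm{z'}\le\rho$ and therefore $K_+\supseteq K':=K\cap\{x:u^\top x\ge\rho\}$; it thus suffices to prove the claimed bound for $w(K',v)$. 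I may further assume $\rho<W:=\max_{x\in K}u^\top x$, since $W\ge w(K,u)/(d+1)$ by Lemma~\ref{lem:dir-grun} (with the centroid at $0$), and if instead $\rho\ge w(K,u)/(d+1)$ a direct check shows the claimed right-hand side is already $\le 0$ (and if $K$ is not full-dimensional, so that $w(K,u)=0$, the right-hand side is $-\infty$ and there is nothing to prove).

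The heart of the argument is to plant a scaled copy of the exact half-body inside $K'$. Let $\tilde K_+:=K\cap\{x:u^\top x\ge 0\}$ be the cut through the true centroid, pick $x_u\in\argmax_{x\in K}u^\top x$ so that $u^\top x_u=W>0$, set $\lambda:=\rho/W\in[0,1)$, and define the affine contraction toward $x_u$ by $T(x)=\lambda x_u+(1-\lambda)x$. For every $x\in\tilde K_+$ we have $T(x)\in K$ by convexity, and $u^\top T(x)=\lambda W+(1-\lambda)u^\top x\ge\lambda W=\rho$, so $T(\tilde K_+)\subseteq K'$. Since $T(x)-T(y)=(1-\lambda)(x-y)$, this yields $w(K',v)\ge w(T(\tilde K_+),v)=(1-\lambda)\,w(\tilde K_+,v)$.

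To finish, apply Theorem~\ref{thm:directional_grunbaum} to the cut $\tilde K_+$ to get $w(\tilde K_+,v)\ge w(K,v)/(d+1)$, and use $W\ge w(K,u)/(d+1)$ (which is exactly Lemma~\ref{lem:dir-grun}, since with the centroid at $0$ one has $w(\tilde K_+,u)=W$) to get $\lambda=\rho/W\le(d+1)\rho/w(K,u)$. Combining,
\[
w(K',v)\ \ge\ (1-\lambda)\,\frac{w(K,v)}{d+1}\ \ge\ \frac{w(K,v)}{d+1}-\frac{\rho\,w(K,v)}{w(K,u)}\ \ge\ \frac{w(K,v)}{d+1}-\rho\max\!\left(1,\frac{w(K,v)}{w(K,u)}\right),
\]
the last step because $\rho\,w(K,v)/w(K,u)\le\rho\max(1,w(K,v)/w(K,u))$ (with equality when $w(K,v)\ge w(K,u)$); this is in fact marginally stronger than the stated bound.

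I expect the only real friction to be bookkeeping the degenerate situations — $K$ not full-dimensional, $K'$ (or $K_+$) empty when $\rho$ is large, and the reduction that replaces $u^\top z'$ by the worst admissible value $\rho$ — none of which needs a new idea. There is no hard analytic estimate: once the contraction $T$ is in place, the whole statement rests on the already-proved Directional Gr\"unbaum Theorem. A more laborious alternative would be to redo the three-case proof of Theorem~\ref{thm:directional_grunbaum} with the cutting hyperplane displaced by $\rho$, tracking how much of the extreme points $x_v^{\pm}$ can be lost; that also works, but is messier than the contraction trick.
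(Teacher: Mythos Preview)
Your proof is correct, and it takes a genuinely different route from the paper. The paper does precisely what you call the ``more laborious alternative'': it re-runs the three-case analysis of Theorem~\ref{thm:directional_grunbaum} with the cutting hyperplane displaced to pass through $z'$, obtaining $\frac{1}{d+1}w(K,v)-\rho$ in the first two cases and $\frac{1}{d+1}w(K,v)-\rho\,w(K,v)/w(K,u)$ in the third, and then packages the two via the $\max$. Your contraction map $T(x)=\lambda x_u+(1-\lambda)x$ toward the extreme point $x_u$ embeds a scaled copy of the exact-centroid half $\tilde K_+$ into the displaced half $K'$, letting you invoke Theorem~\ref{thm:directional_grunbaum} as a black box rather than reopening its proof. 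This is cleaner and more modular, and---as you note---actually yields the uniformly tighter bound $\frac{1}{d+1}w(K,v)-\rho\,w(K,v)/w(K,u)$ (the paper's case-3 bound) in all cases, which dominates the stated inequality. The only minor thing worth noting is that your degenerate-case discussion (``if $\rho\ge w(K,u)/(d+1)$ the right-hand side is already $\le 0$'') is not quite the right threshold for the stated bound when $w(K,v)<w(K,u)$, but it is the right threshold for the sharper bound you actually prove, so the argument still goes through.
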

\begin{proof}[Proof sketch]
The analysis follows from minor modifications in the analysis of Theorem~\ref{thm:directional_grunbaum}. First we modify Lemma~\ref{lem:dir-grun} in order to show that $$\frac{1}{d+1} w(K,u) -\rho \leq w(K_+,u) \leq w(K,u)\,{.}$$
Indeed, since $\norm{z-z'} \leq \rho$, taking a cut perpendicular to $u$ that passes through $z'$ instead of $z$ changes the directional width along $u$ by at most $\rho$. Therefore the bound above holds in the worst case.
Second, we consider the three cases considered in the proof. In the first two cases, we have $w(K_+,v) \geq \frac{1}{d+1} \cdot w(K,v)-\rho$ via the previous bound. In the third case, let $\lambda^+$ and $\lambda^-$ defined similarly.  Then we have
$\frac{1}{d+1} w(K,u) - \rho \leq \lambda^+ \cdot w(K,u)$ and
similarly for $\lambda^-$. Therefore $\min\{\lambda^+, \lambda^-\} \geq \frac{1}{d+1} - \frac{\rho}{w(K,u)}$. Finally, this yields
$$w(K_+,v)\geq w(K,v) \cdot \left(\frac{1}{d+1} - \frac{\rho}{w(K,u)}\right)  \,{,}$$ and our conclusion follows.
\end{proof}

Similarly, we require a robust version of projected Gr\"{u}nbaum, which we sketch below.

\begin{lemma}[Approximate Projected Gr\"{u}nbaum]\label{appx-proj-gr}
  Let $K$ be a convex set contained in the ball of radius $1$, and let
  $S$ be a set of orthonormal
  vectors along which $w(K,s) \leq \delta \leq \frac{\epsilon^2}{32 d(d+1)^2}$,
  for all $s \in S$. Let $L$ be the
  subspace orthogonal to $S$, and let $\Pi_L$ be the projection operator onto that
  subspace.
  If $u$ is a direction along which $w(\Cyl(K,S),u) \geq \epsilon$, $z$ is the
  centroid of the cylindrified body $\Cyl(K,S)$, $z'$ satisfies $\norm{z-z'}\leq
  \rho := \frac{\epsilon}{8(d+1)^2}$, and
  $K_+ = \{x \in K:~ u^\top (x-z') \geq 0\}$, then:
  $$\vol(\Pi_L K_+) \leq \left( 1-  \frac{1}{e^2} \right) \cdot \vol(\Pi_L K)\,{,}$$
  where $\vol(\cdot)$ corresponds to the $(n-\abs{S})$-dimensional volume on the
  subspace $L$.
\end{lemma}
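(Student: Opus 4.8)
The plan is to mimic the four-step proof of Lemma~\ref{lemma:projected_grunbaum}, tracking the extra error introduced by cutting through $z'$ rather than the true centroid $z$, and then to invoke Lemma~\ref{lemma:approx_grunbaum} at the very end instead of the exact volumetric Gr\"unbaum Theorem. As before, assume without loss of generality that $z=0$ (the true centroid of $\Cyl(K,S)$), $S=\{e_1,\hdots,e_k\}$, $L=\span\{e_{k+1},\hdots,e_n\}$, and write $K_L=\Pi_L K$.

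\emph{Step 1 (the cut direction has large $L$-component).} This step is unchanged: since $w(\Cyl(K,S),u)\geq\epsilon$ and $z=0$ is the centroid of the cylinder, there is $y\in\Cyl(K,S)$ with $\abs{u^\top y}\geq\epsilon/2$; using $\norm{y_S}_\infty\leq\delta$ and $\norm{y}\leq 1+k\delta$ together with the (now slightly tighter) bound $\delta\leq\epsilon^2/(32d(d+1)^2)$, Cauchy--Schwarz still gives $\norm{u_L}\geq\epsilon/4$ and $w(K_L,u_L)\geq\epsilon/2-k\delta\geq\epsilon/4$. Hence the conditions needed later to apply Lemma~\ref{lemma:approx_grunbaum} along direction $\hat u_L=u_L/\norm{u_L}$ inside $K_L$ are met.

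\emph{Step 2 (control the displacement of the cut from the centroid of $K_L$).} Here is the one genuinely new point. We have $K_+=\{x\in K:~u^\top(x-z')\geq 0\}$, i.e. $u_L^\top x_L+u_S^\top x_S\geq u^\top z'$. Write $z'=z'_L+z'_S$. Since $\norm{z-z'}=\norm{z'}\leq\rho$ and $z=0$, we get $\abs{u^\top z'}\leq\rho$, and moreover $\norm{z'_L}\leq\rho$ so the projection $z'_L$ of the approximate centroid is within $\rho$ of $0$, which is the true centroid of $K_L$ (this is the place where it is crucial that $z$ is the centroid of the \emph{cylindrified} body, so that $\Pi_L z$ is the centroid of $K_L$). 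For $x\in K_+$, using $\norm{x_S}_\infty\leq\delta$ we get $u_L^\top x_L\geq u^\top z'-u_S^\top x_S\geq -\rho-k\delta$, and therefore
\[
\hat u_L^\top x_L \;\geq\; \frac{-\rho-k\delta}{\norm{u_L}} \;\geq\; \frac{-\rho-k\delta}{\epsilon/4} \;=\; -\frac{4\rho}{\epsilon}-\frac{4k\delta}{\epsilon}\,{.}
\]
Plugging in $\rho=\epsilon/(8(d+1)^2)$ and $\delta\leq\epsilon^2/(32d(d+1)^2)$ bounds the right-hand side below by $-1/(2(d+1)^2)-1/(8(d+1)^2)\geq -\epsilon/(4(d+1)^2)$ after also using $\epsilon\leq 2$ (since $K$ is in the unit ball); more simply, each of the two terms is at most $\epsilon/(8(d+1)^2)$ in absolute value, so the sum is at least $-\epsilon/(4(d+1)^2)$. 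Thus $\Pi_L K_+\subseteq\{x_L\in K_L:~(-\hat u_L)^\top x_L\leq \epsilon/(4(d+1)^2)\}$.

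\emph{Step 3 (finish via Approximate Gr\"unbaum).} By Step 2,
\[
\vol(\Pi_L K_+)\;\leq\; \vol(K_L)-\vol\!\left(\left\{x_L\in K_L:~(-\hat u_L)^\top x_L\geq \frac{\epsilon}{4(d+1)^2}\right\}\right)\,{.}
\]
To apply Lemma~\ref{lemma:approx_grunbaum} to $K_L$ with direction $-\hat u_L$, offset $\delta_0:=\epsilon/(4(d+1)^2)$, and centroid $0$, we need $\delta_0\leq w(K_L,\hat u_L)/(\dim L+1)^2$. Since $\dim L\leq d$ and, by Step 1, $w(K_L,\hat u_L)\geq\epsilon/4$, it suffices that $\epsilon/(4(d+1)^2)\leq (\epsilon/4)/(d+1)^2$, which holds. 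Therefore Lemma~\ref{lemma:approx_grunbaum} gives $\vol(\{x_L\in K_L:~(-\hat u_L)^\top x_L\geq\delta_0\})\geq\vol(K_L)/e^2$, hence $\vol(\Pi_L K_+)\leq(1-1/e^2)\vol(K_L)=(1-1/e^2)\vol(\Pi_L K)$, as claimed.

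The main obstacle, and the only thing beyond bookkeeping, is Step 2: one must verify that passing through $z'$ instead of $z$ perturbs the relevant half-space-in-$K_L$ by an amount that is still comfortably smaller than $w(K_L,\hat u_L)/(\dim L+1)^2$, so that Approximate Gr\"unbaum still applies. This forces the two quantitative choices that distinguish this lemma from Lemma~\ref{lemma:projected_grunbaum}, namely the tightened requirement $\delta\leq\epsilon^2/(32d(d+1)^2)$ and the approximation radius $\rho=\epsilon/(8(d+1)^2)$; with these, the contributions $4\rho/\epsilon$ and $4k\delta/\epsilon$ are each at most $\epsilon/(8(d+1)^2)$ and everything goes through exactly as in the exact case. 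No new geometric input is needed — only a careful propagation of the $\rho$-error through the same argument, together with Lemma~\ref{lemma:approx_grunbaum} in place of Theorem~\ref{thm:volume_grunbaum}.
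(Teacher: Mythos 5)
Your overall strategy — redo the four-step proof of Lemma~\ref{lemma:projected_grunbaum} around the perturbed cut point $z'$ and finish with Lemma~\ref{lemma:approx_grunbaum} — is the same as the paper's. You are also more careful than the paper's sketch in one respect: you correctly write the perturbation's contribution to $\hat u_L^\top x_L$ as $\rho/\|u_L\|$ (the paper's Step~4 just adds $\rho$, eliding the $1/\|u_L\|$ factor).

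However, your Step 2 arithmetic does not close, and this is a genuine gap. You correctly derive $\hat u_L^\top x_L \geq -4\rho/\epsilon - 4k\delta/\epsilon$, but with $\rho = \epsilon/(8(d+1)^2)$ the first term is $4\rho/\epsilon = 1/(2(d+1)^2)$, which is $\leq \epsilon/(8(d+1)^2)$ only when $\epsilon \geq 4$. Since $K$ sits in the unit ball, $\epsilon \leq 2$, so this fails for all relevant $\epsilon$; your remark ``after also using $\epsilon \leq 2$'' points in the wrong direction, since $\epsilon\le 2$ makes the needed inequality harder, not easier. Consequently the claimed bound $\hat u_L^\top x_L \geq -\epsilon/(4(d+1)^2)$ does not follow, and the hypothesis $\delta_0 \leq w(K_L,\hat u_L)/(\dim L+1)^2$ of Lemma~\ref{lemma:approx_grunbaum} is not verified.

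The statement is nevertheless true, and the repair is to keep the \emph{product} bound from Step~1 rather than using the two factors $\|u_L\| \geq \epsilon/4$ and $w(K_L,\hat u_L) \geq \epsilon/4$ separately. From $|u_L^\top y_L| \geq \epsilon/2 - k\delta$ and $w(K_L,\hat u_L) \geq |\hat u_L^\top y_L| = |u_L^\top y_L|/\|u_L\|$, you get
\[
\|u_L\| \cdot w(K_L, \hat u_L) \;\geq\; \frac{\epsilon}{2} - k\delta\,{.}
\]
With the true offset $\delta_0 \leq (\rho + k\delta)/\|u_L\|$, the requirement of Lemma~\ref{lemma:approx_grunbaum} becomes $(\rho+k\delta)(d+1)^2 \leq \|u_L\|\,w(K_L,\hat u_L)$. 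The left side is $(d+1)^2\rho + (d+1)^2 k\delta \leq \epsilon/8 + \epsilon^2/32 \leq 3\epsilon/16$, while the right side is at least $\epsilon/2 - k\delta \geq \epsilon/2 - \epsilon^2/32 \geq 7\epsilon/16$, so the constants do close. The lesson is that when $\|u_L\|$ is small, $w(K_L,\hat u_L)$ is automatically large, and you must exploit that correlation — treating the two lower bounds as independent loses a factor of roughly $1/\epsilon$ that cannot be absorbed by the stated $\rho$.
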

\begin{proof}[Proof sketch]
The proof follows the same steps as the proof of Lemma~\ref{lemma:projected_grunbaum}. Below we sketch the essential differences, and show how they affect the analysis.

The first two steps are identical, since they do not involve the perturbed centroid $z'$. For the third step, we proceed identically to show that
\[
\hat{u}_L^\top x_L \geq -\frac{4d\delta}{\epsilon} \geq
  -\frac{\epsilon}{8(d+1)^2}\,{,}
\]
where we used $\delta \leq \frac{\epsilon^2}{32d(d+1)^2}$. 

Finally, for the fourth step we use the fact that if $x \in \Pi_L(K_+)$ then
$x_L \in \{x_L \in K_L \vert \hat{u}_L^\top x_L \geq - \epsilon/(8(d+1)^2) - \rho \}$. Hence
\[
\Pi_LK_+ \subseteq \left\{ x_L \in K_L \bigg\vert (-\hat{u}_L)^\top x_L \leq
  \frac{\epsilon}{4(d+1)^2} \right\}\,{.}
\]
and thus we obtain the same bound on $\vol(\Pi_L K_+)$.
\end{proof}

Putting everything together, we can show that the algorithm using these approximation primitives yields the same regret asymptotically.
The constants we will use throughout the our algorithm will be 
$\dappx = \delta / (\sqrt{d}(d+1)) = \epsilon^2 / (16 d^{1.5} (d+1)^3)$, and $\rho = \dappx^2 / (2(d+1))$.
The two key results required for our robust analysis are:
\begin{enumerate}
\item
If the cardinality of $S$ does not increase, then
\[
\vol(\Pi_{L_t} K_{t+1}) \leq \left( 1 - 1/e^2 \right) \vol(\Pi_{L_t} K_t)\,{.}
\]
This is given by the approximate projected Gr\"unbaum theorem (Lemma~\ref{appx-proj-gr}).
\item
When adding an extra direction to $S$,  we know that $w(K_t, u) \geq \dappx$, for all $u \in L_t$.

Then by Lemma \ref{thm:appxdirectional} after we cut $K_t$ we have that for any vector $v \in L_t$, 
\[
w(K_{t+1},v)
    \geq \frac{w(K_t,v)}{d+1} - \rho \cdot \max\left(1,
    \frac{w(K_t,u)}{w(K_t,v)} \right)
\geq \frac{\dappx}{d+1} - \rho \cdot \frac{1}{\dappx} 
\geq \frac{\dappx}{2(d+1)}\,{,}
\]
by our choice of $\rho = \dappx^2 / (2(d+1))$.
 So applying the Cylindrification Lemma (Lemma
\ref{lemma:cylindrification}) we obtain that the volume of the convex body projected onto the new subspace of large directions $L'$ is bounded by
\[
\vol(\Pi_{L'} K_{t+1}) \leq \frac{d(d+1)}{\dappx  / (2(d+1))} \vol(\Pi_{L_t} K_{t+1})= \frac{32 d^{1.5}(d+1)^3}{\delta} \vol(\Pi_{L_t} K_{t+1})\,{.}
\]
This follows just like before from Lemma~\ref{lemma:cylindrification}. Our method of finding thin directions based on the approximate John ellipsoid (Corollary~\ref{cor:findorcertify}) guarantees that all directional widths in the large subspace $L$ are at least $\dappx$. 
Therefore the blow up in volume is at most by a factor of ${(32 d^{1.5}(d+1)^3)}/{\delta}$.
\end{enumerate}

Since all the new bounds are within polynomial factors from the ones used in the analysis using exact centroids, by plugging in the old analysis, we easily obtain the same regret, up to constant factors.

\paragraph{Running time.} For the running time analysis, note that the centroid approximation can be implemented using $\tilde{O}(d^4 / \rho ) = (d/\epsilon)^{O(1)}$ calls to the separation oracle for the convex body. Such a separation oracle needs to take into account both the linear inequalities added during each iteration, and the at most $d$ projections. Such an oracle can be implemented by maximizing a linear functional over a set determined by the intersection between the initial unit ball and the linear constraints (whose number is bounded by the number of iterations of the algorithm $\tilde{O}(d \log(1/\epsilon))$; therefore this step can be implemented in polynomial time, and therefore all the centroid approximation steps require time $(d/\epsilon)^{O(1)}$.

The routine for finding the thin directions will be called at least once every iteration, and will find a thin direction at most $d$ times. Therefore this contributes $d^{O(1)} \log(R/r) \cdot \log(1/\epsilon)$ to the running time, where $r$ is a lower bound on the smallest ball contained in the body, while $R$ is an upper bound. From the setup we have $R=1$; also, since we are finished after $\tilde{O}(d\log(1/\epsilon))$ iterations, and each iteration shrinks the smallest directional width by at most a factor of $d^{O(1)}$, according to Lemma~\ref{lemma:large_ball}, we have that at all times the body will contain a ball of radius $d^{-\Omega(d)}$. Therefore the running time contribution of the routine required for finding thin directions is $d^{O(1)}\log(1/\epsilon)$.

All the other steps require at most polynomial overhead, therefore the total running time is $(d/\epsilon)^{O(1)}$.


\section{Relation to standard online learning problems}\label{sec:comparison}

We now discuss in detail the relationship between multidimensional binary search (MBS) and standard problems in online learning. We start by describing the problem of
learning a halfspace with a margin (LHM).

\subsection{Learning Halfspaces with Margin}
As in our multidimensional binary search problem, this problem starts by assuming there exists a fixed but unknown vector of weights $\theta \in
\R^d$ with $\norm{\theta} = 1$. We receive vectors of
features $u_t \in \R^d$ with $\norm{u_t} = 1$ in an online fashion.
The algorithm needs to guess on which side of
the hyperplane defined by $\theta$ the feature vector $x_t$ lies. 
If $\hat{y}_t \in \{-1, +1\}$ is the guess, the loss/regret is $1$ 
if $\hat{y}_t \cdot u_t^\top \theta \leq 0$ and
zero otherwise. The feedback after each guess is $\sgn{u_t^\top \theta}$, which also tells us how
much regret we incur in each step. The problem also comes with a promise. All feature vectors satisfy a condition related to the hyperplane being learned: 
$\abs{\theta^\top u_t} \geq \gamma$.

\subsection{Halving Algorithm}

We start by discussing the Halving Algorithm for LHM. While not the most popular method
for this problem, it is the one closest to our approach. In this algorithm, we also maintain a
knowledge set of all candidate vectors $\theta$. We start with $K_0 = \{ \theta|~
\norm{\theta} \leq 2 \}$, which is slightly larger than usual to guarantee we have a
small ball around the real $\theta$. For each incoming $u_t$, we find
$x_t$ such that 
$$\vol(K_t \cap \{\theta|~ u_t^\top \theta \leq x_t\}) = \vol(K_t
\cap \{\theta|~ u_t^\top \theta \geq x_t\})$$
and guess $\hat y_t = \sgn{x_t}$.
If the guess is correct, we don't incur any loss and we don't update the
knowledge set. If the guess is wrong, we update the knowledge set either to
$K_{t+1} = K_t \cap \{\theta|~ u_t^\top \theta \leq 0\}$ (if we guessed $y_t =
+1$) or $K_{t+1} = K_t \cap \{\theta|~ u_t^\top \theta \geq 0\}$ (if we guessed
$y_t = -1$). After each incorrect guess, we have that $\vol(K_{t+1})
\leq \frac{1}{2} \vol(K_t)$. 

If we can provide a lower bound to the final volume, we are able to provide a
mistake bound. By the promise we know that for the real $\theta$,
$\abs{\theta^\top u_t} \geq \gamma$. If we add a constraint $\theta^\top u_t
\geq 0$ to the knowledge set, then this constraint should be satisfied for any
$\theta'$ such that $\norm{\theta - \theta'} \leq \gamma$, since:
$$u_t^\top \theta' = u_t^\top \theta + u_t^\top(\theta-\theta') \leq \gamma -
\norm{\theta-\theta'} \cdot \norm{u_t} \geq 0.$$
Therefore the algorithm never removes a ball of radius $\gamma$ around the real
$\theta$ from the knowledge set. Therefore, the volume of the knowledge set is
always at least $\Omega(\gamma^d)$, which yields a $O(d \log(1/\gamma))$ mistake
bound.

\emph{Can we adapt halving to MBS?} The key obstacle is that we do not know if we
made a mistake or not in MBS. If we knew whether we made a mistake or not, we
could keep a knowledge set as before. For each $u_t$, we again would choose $x_t$ such
that:
$$\vol(K_t \cap \{\theta|~ u_t^\top \theta \leq x_t\}) = \vol(K_t
\cap \{\theta|~ u_t^\top \theta \geq x_t\}).$$
If we do not make a mistake, we again don't update the knowledge set.
As in the previous problem, if we make a
mistake, we update the knowledge set to one of half the volume. If we guessed
$u_t^\top \theta = x_t$ and it was a mistake, we can update either to $K_t \cap
\{\theta|~ u_t^\top \theta \leq 0\}$  or $K_t \cap \{\theta|~ u_t^\top \theta \geq
0\}$ depending on the feedback. If we update to $K_t \cap\{\theta|~ u_t^\top
\theta \geq 0\}$, we know by the fact that we made a mistake that $u_t^\top\theta
\geq \epsilon$. So any $\theta'$ within a ball of radius $\epsilon$ of the
real $\theta$ will never be removed from the knowledge set by the same argument
as before, using $\epsilon$ instead of $\gamma$. This gives us by the same
volume argument a $O(d \log(1/\epsilon))$ regret algorithm.

However, this algorithm is \emph{not} implementable in our setting because we
do not know if we made a mistake or not. We could always cut the knowledge set after
the feedback, but then we are no longer guaranteed to preserve a ball of radius
$\epsilon$ around the true $\theta$ and no longer can make the volume argument.

Can we always cut and use a different analysis to show the same regret bound?
In Section \ref{sec:whycyl}, we argued that if we always cut through the centroid
we might incur $\Omega(d^2 \log(1/\epsilon))$ regret. The same argument holds if
we cut in half the volume instead of cutting through the centroid.
Therefore, the answer is no if we want to obtain a linear bound on $d$. 

\subsection{Online Convex Optimization}

The other family of algorithms for learning halfspaces with a margin is based on
online convex optimization. Using the classic Perceptron algorithm, one can
obtain a mistake bound of
$O(1/\gamma^2)$ for learning halfspaces with a margin  and using the closely
related Winnow algorithm we
obtain a much better dependency on $\gamma$ at the expense of a dependency in
the dimension: $O(d \log(d/\gamma)).$\footnote{We also point out that Perceptron-style algorithms like the one of~\cite{DunaganV08}, despite exhibiting the apparently useful number of $O(d \log d/\epsilon)$ iterations, are
 not applicable to this problem. Crucially, the problem studied in those settings is offline: it requires knowing all the vectors given by the adversary ahead of time. } We refer the reader to Section 3.3 in~\cite{shalev2012online} for a comprehensive exposition. 

Instead of keeping a candidate set of possible
values of $\theta$, those algorithms keep a single value $\hat \theta \in \R^d$ and for each
incoming vectors of features $u_t$, we make a prediction according to $u_t^\top
\hat \theta$. If we make a mistake, we perform a first-order update to
$\hat\theta$ according to a surrogate loss function. For the learning
halfspaces problem, the actual loss function is replaced by a convex function
that is an upper bound to the actual loss function and for which we can compute the
gradient as a function of $u_t$ and $y_t$. That function is taken to be the
hinge loss.  Depending on which type of first order update we do, we get a
different algorithm: the Perceptron is obtained by performing a gradient
descent step and Winnow is obtained by performing an exponentiated gradient
step.

Both algorithms depend on not performing an update whenever we
do not incur a mistake. 
For our problem, since we do not know if we made a mistake or not, to instantiate
the online convex optimization framework we would need to perform an update in
each iteration and there does not seem to be an adequate surrogate
for our loss function $\ell(\hat \theta, u_t) = {\mathbf
1} \{ {\lvert{u_t^\top\hat\theta- u_t^\top\theta }\rvert> \epsilon } \}$.

\end{document}